\newtcbox{\mymath}[1][]{%
nobeforeafter, math upper, tcbox raise base,
enhanced, colframe=blue!30!black,
colback=blue!30, boxrule=1pt,
#1}
\theoremstyle{definition}
\def \be{\begin{equs}}
\def \ee{\end{equs}}
\newtheorem{theorem}{Theorem}[section]
\newtheorem{lemma}[theorem]{Lemma}
\newtheorem{proposition}[theorem]{Proposition}
\newtheorem{fact}[theorem]{Fact}
\newtheorem*{theorem*}{Theorem}
\newtheorem{remark}[theorem]{Remark}
\newcommand{\nfrac}{\nicefrac}
\newcommand{\cE}{\mathscr{E}}
\newcommand{\cEC}{\overline{\mathscr{E}}}
\newcommand{\cH}{\mathscr{H}}
\newcommand{\cHC}{\overline{\mathscr{H}}}
\newcommand{\iid}{\text{iid} }
\newcommand{\proofHspace}{\ }
\colorlet{RED}{red}
\colorlet{BLACK}{black}
\newcommand{\opt}{{\sc Opt}}
\newcommand{\uncons}{{\sc Uncons}}
\newcommand{\cons}{{\sc Cons}}
\newcommand{\picons}{x_{\text{cons}}}
\newcommand{\piuncons}{x_{\text{uncons}}}
\newcommand{\piopt}{x^\star}
\newcommand{\Bracket}[1]{\bigg[#1\bigg]}
\newcommand{\bracket}[1]{\big[#1\big]}
\newcommand{\Paren}[1]{\bigg(#1\bigg)}
\newcommand{\paren}[1]{\big(#1\big)}
\newcommand{\ohalf}{O\paren{n^{-\nfrac{3}{8}}}}
\newcommand{\oexp}{O\paren{e^{-n^{\nfrac{1}{4}}}}}
\newcommand{\onexp}{O\paren{ne^{-n^{\nfrac{1}{4}}}}}
\newcommand{\ohalfd}{O\paren{n^{-\nfrac{1}{2}+\delta}}}
\newcommand{\oexpd}{O\paren{e^{-n^{2\delta}}}}
\newcommand{\cR}{\mathbb{R}}
\newcommand{\cN}{\mathbb{N}}
\newcommand{\cZ}{\mathbb{Z}}
\newcommand{\cU}{\mathcal{U}}
\newcommand{\floor}[1]{\left\lfloor#1\right\rfloor}
\newcommand{\ceil}[1]{\left\lceil#1\right\rceil}
\newcommand{\myeqref}[2]{(\hyperref[#1]{#2})}
\newcommand{\white}[1]{ \textcolor{white} {#1}}
\newcommand\labelthis[1]{\addtocounter{equation}{1}\tag{{#1},\ \theequation}}
\newcommand\emptylabel[1]{\addtocounter{equation}{1}\tag{\theequation}}
\newcommand\numberthis{\addtocounter{equation}{1}\tag{\theequation}}
\DeclareMathOperator*{\eE}{\mathbb{E}}
\newcommand{\argmax}{\operatornamewithlimits{argmax}}
\renewcommand{\epsilon}{\varepsilon}
\title{Interventions for Ranking in the Presence of Implicit Bias}
\author[1]{L. Elisa Celis}
\author[2]{Anay Mehrotra}
\author[3]{Nisheeth K. Vishnoi}
\affil[1,3]{\small Yale University}
\affil[2]{\small Indian Institute of Technology, Kanpur}
\date{}
\begin{document}

\maketitle

\thispagestyle{empty}

\begin{abstract}
  Implicit bias is the unconscious attribution of certain qualities (or lack thereof) to a member from a particular social group (e.g., defined by gender or race).
  Studies on implicit bias have shown that these unconscious stereotypes can have adverse outcomes in various social contexts, such as job screening, teaching, or policing.
  Recently, \cite{KleinbergR18} considered a mathematical model for implicit bias and studied the effectiveness of the Rooney Rule as a constraint to improve the  utility of the outcome for certain cases of the subset selection problem.
  Here we study the problem of designing interventions for a  generalization of  subset selection  -- ranking --
 which requires an \emph{ordered} set and is a central primitive in many social and computational contexts.
  We present a family of simple and interpretable constraints and show that they can optimally mitigate the effect of implicit bias for a generalization of the model studied in~\cite{KleinbergR18}.
  Subsequently, we prove that under natural distributional assumptions on the utilities of items, surprisingly, simple Rooney Rule-like constraints can recover almost all of the utility lost due to implicit biases.
  Finally, we augment our theoretical results with empirical findings on real-world distributions from the IIT-JEE (2009) dataset and the Semantic Scholar Research corpus.
\end{abstract}

\newpage

\thispagestyle{empty}

\setcounter{tocdepth}{2}
\tableofcontents
\newpage

\section{Introduction}
Implicit bias is the unconscious attribution of certain qualities (or lack thereof) to a member from a particular social group defined by characteristics such as gender, origin, or race~\cite{greenwald2006implicit}.
It is well understood that implicit bias is a factor in adverse effects against sub-populations in many societal contexts~\cite{munoz2016big, acm2017statement, bendick2012developing} as also highlighted by recent events in the popular press
\cite{star_bucks_incident, implicit_bias_la_times, implicit_bias_forbes}.
For instance, in employment decisions, men are perceived as more competent and given a higher starting salary even when qualifications are the same~\cite{uhlmann2005constructed}, and in managerial jobs, it was observed that women had to show roughly twice as much evidence of competence as men to be seen as equally competent~\cite{williams2014double, lyness2006fit}.
In education, implicit biases have been shown to exist in ways that exacerbate the achievement gap for racial and ethnic minorities~\cite{van2010implicit} and female students~\cite{moss2012science}, and add to the large racial disparities in school discipline which particularly affect black students' school performance and future prospects~\cite{okonofua2015two}.
Beyond negatively impacting social opportunities, implicit biases have been shown to put lives at stake as they are a factor in police decisions to shoot, negatively impacting people who are black~\cite{correll2007influence} and of other racial or ethnic minorities \cite{sadler2012world}.
Furthermore, decision making that relies on biased measures of quantities such as utility can not only adversely impact those perceived more negatively, but can also lead to sub-optimal outcomes for those harboring these unconscious biases.

To combat this, a significant effort has been placed in developing anti-bias training with the goal of eliminating or reducing implicit biases~\cite{zestcott2016examining, training_facebook, mcgregor2017race}.
However, such programs have been shown to have limited efficacy~\cite{noon2018pointless}.
Furthermore, as algorithms increasingly take over prediction and ranking tasks, e.g., for ranking and shortlisting candidates for interview~\cite{upturn_help_wanted},
algorithms can learn from and encode existing  biases present in past hiring data, e.g., against gender~\cite{amazonRecruitingTool} or race~\cite{term_of_stay_correlated_with_race}, resulting in algorithmic biases of their own.
Hence, it is important to develop interventions that can mitigate implicit biases and hence result in better outcomes.

As a running example  we will consider hiring, although the interventions  we describe would apply to any domain in which people are selected or ranked.
Hiring usually works in multiple stages, where the candidates are first identified and ranked in order of their perceived relevance, a shortlist of these candidates are then interviewed with more rigor, and finally a one or few of them are hired~\cite{upturn_help_wanted}.
The Rooney Rule is an intervention to combat biases during the shortlisting phase, it requires the ``shortlister'' (an entity that shortlists) to select at least {\rm one} candidate from an underprivileged group for interview.
It was originally introduced for coach positions in the National Football League~\cite{collins2007tackling}, and subsequently adopted by other industries~\cite{passariello-implicit-Rooney-wsj, obama_rr, facebook_rr,cavicchia-implicit-bias-Rooney}.
The idea is that including the underprivileged candidates would give opportunity to these candidates, with a higher (hidden or latent) potential.
Whereas without the Rooney Rule these candidates may not have been selected for interview.
While the Rooney Rule appears to have been effective\footnote{The representation of African-American coaches in NFL increased from 6\% to 22\% since Rooney Rule's introduction in 2002 to 2006~\cite{collins2007tackling}.} it is just one of many possible interventions one could design. How can we theoretically compare two proposed interventions?

\cite{KleinbergR18} study the Rooney Rule under a theoretical model of implicit bias, with two disjoint groups $G_a,G_b\subseteq [m]$ of $m$ candidates, where
$G_b$ is the underprivileged group.
Each candidate $i\in[m]$ has a true, {\em latent} utility $w_i\in \cR$, which is the utility they would generate if hired, and an {\em observed} utility $\hat{w_i}\leq w_i$, which is the shortlister's (possibly biased) estimate of $w_i$.
The shortlister selects $n \leq  m$ candidates with the highest observed utility.
For example, in the context of peer-review, the latent utility of a candidate could be their total publications, and the observed utility could be the total weight the reviewer assigns to the publications  (``impact points'' in \cite{wenneras2001nepotism}).
They model implicit bias as a multiplicative factor $\beta\in [0,1]$, where the observed utility is $\hat{w_i}\coloneqq \beta\cdot w_i$ if $i\in G_b$, and $\hat{w_i}\coloneqq w_i$ if $i\in G_a$.
\cite{KleinbergR18} characterize conditions on $n,m,\beta$ and the distribution of $w_i$, where Rooney rule increases the total utility of the selection.

However, before the shortlisting phase, applicants or potential candidates must be identified and ranked.
For example, LinkedIn Recruiter predicts a candidate's ``likelihood of being hired'' from their activity, Koru Hire analyzes a candidates (derived) personality traits to generate a ``fit score'', and HireVue ``grades'' candidates to produce an ``insight score''~\cite{upturn_help_wanted}.
In a ranking, the candidates are ordered (in lieu of an intervention, by their observed utilities), and the utility of a ranking is defined by a weighted sum of the latent utilities of the ranked candidates where the weight decreases the farther down the ranking a candidate is placed.
Such weighting when evaluating rankings is common practice, and due to the fact that candidates placed lower in the list receive a lower attention as compared to those placed higher~\cite{DCG}; this translates into being less likely to be shortlisted, and contributing less to the total utility of the hiring process.
Therefore, it becomes important to consider interventions to mitigate bias in the ranking phase, and understand their effectiveness in improving the ranking's latent utility.

{\em Can we construct simple and interpretable interventions that increase the latent utility of rankings in the presence of implicit bias?}

\subsection{Our contributions}
We consider the setting where items to be ranked may belong to multiple intersectional groups and present a straightforward generalization of the implicit bias model in \cite{KleinbergR18} for this setting.
We consider a set of interventions (phrased in terms of constraints) for the ranking problem which require that a fixed number of candidates from the under privileged group(s) be represented in the top $k$ positions in the ranking for all $k$.
We show that for any input to the ranking problem -- i.e., any set of utilities or bias -- there is an intervention as above that leads to optimal latent utility (Theorem~\ref{thm_constraints_are_sufficient}).
We then prove a structural result about the optimal ranking when all the utilities are drawn from the same distribution, making no assumption on the distribution itself (Theorem~\ref{fact_spread_dist}).
This theorem gives a simple Rooney Rule-like intervention for ranking in the case when there are two groups: For some $\alpha$, require that in the top $k$ positions, there are at least $\alpha \cdot k$ items from the underprivileged group.
We then show that when the utilities are drawn from the uniform distribution, the latent utility of a ranking that maximizes the biased utility but is subject to our constraints (for an appropriate $\alpha$) is close to that of the optimal latent utility (Theorem~\ref{thm_const_increase_utility}).
We  evaluate the performance of ranking under our constraints empirically on two real-world datasets, the IIT-JEE 2009 dataset (see Section~\ref{sec:empirical_results})
and the Semantics Scholar dataset
(see Section~\ref{sec_extnd_empirical}).
In both cases we observe that our simple constraints significantly improve the latent utility of the ranking, and in fact attain near-optimal latent utility.
Finally, while we phrase these results in the context of implicit bias, such interventions would be effective whenever the observed utilities are systematically biased against a particular group.

\subsection{Related work}
There  is a large body of work on studying the effects of bias in rankings, and designing algorithms for `fair rankings'; see, e.g., \cite{SinghJ18,celis2018ranking,kuhlman2019fare,sapiezynski2019quantifying,narasimhan2019pairwise}.
We refer the reader to the excellent talk~\cite{castillo2019fairness} that gives an overview of work on fairness and transparency in rankings.
A significant portion of these works are devoted to generating unbiased rankings.
For example, several approaches strive to learn the latent utilities of items and output a ranking according to the learned values~\cite{YangS17, abs-1712-09752}.
In contrast, we do not strive to learn the latent utilities, rather, to find a ranking that is close to that given by the (unknown) ranking according to the (unknown) latent utilities.
A different approach instead considers constraints on the placement of individuals within the ranking depending on their group status, e.g., enforcing that at least $x\%$ of the top $k$ candidates be non-male~\cite{celis2018ranking, linkedin_ranking_paper}. These works take the constraints as input and develop algorithms to find the optimal feasible ranking.
While we also use constraints in our approach, our goal differs; we strive to show that such constraints can recover the optimal latent utility ranking, and, where possible, derive the appropriate constraints that achieve this.

Constraints which guarantee representation across various groups have been studied in a number of works on fairness across various applications~\cite{FatML, CKSDKV18, CKSV18, celis2019advertising, HJV19},
most relevantly in works on forms of subset selection \cite{celis2018multiwinner} and ranking~\cite{linkedin_ranking_paper, celis2018ranking, ZehlikeB0HMB17}.
The primary goals of these works is to design fast algorithms that satisfy the constraints towards satisfying certain definitions of fairness; these fairness goals are given exogenously and the utilities are assumed to be unbiased.
In contrast, we begin with the premise that the utilities are systematically incorrect due to implicit bias, and use the constraints to mitigate the effect of these biases when constructing a ranking.
Our goal is to determine how to construct effective interventions rather than on the algorithm for solving the constraints; in fact, we use some of the works above as a black box for the subroutine of finding a ranking once the constraints have been determined.
Studying implicit and explicit biases is a rich field in psychology~\cite{lyness2006fit, williams2014double, sadler2012world}, where studies propose several mechanisms for origins of biases~\cite{Payne11693} and analyze present-day factors which can influence them~\cite{Epstein2015}.
We point the reader to the seminal work on implicit biases~\cite{greenwald1995implicit}, and the excellent treatise~\cite{whitley2016psychology} for an overview of the field.
We consider one model of implicit bias inspired by \cite{KleinbergR18,wenneras2001nepotism}; however other relevant models may exist and exploring other kinds of bias and models for capturing them could lead to interesting expansions of this work.
\section{Ranking problem, bias model, and interventions}
\subsection{Ranking problem}
In the classical ranking problem, one is given $m$ items, where item $i$ has utility $w_i$, from which a ranked list of $n \leq m$ items has to be outputted.
A ranking is a one to one mapping from the set of items $[m]$ to the set of positions $[n]$.
It is sometimes convenient to  let $x\in \{0,1\}^{m\times n}$ denote a binary assignment matrix representing a ranking, where  $x_{ij}=1$ if the $i$-th item is placed at position $j$, and is $0$ otherwise.
Define a position-based discount $v\in \cR^{n}_{\geq 0}$, where an item placed at position $j\in [n]$, contributes a latent utility of $w_i\cdot v_j$.
The latent utility obtained by placing an item $i$ at position $j$ is then $w_i v_j$.
It is assumed that  $v_j \geq v_{j+1}$ for all $1 \leq j \leq n-1$ implying that the same item derives a higher utility at a lower position.
This is satisfied by popularly studied position-based discounts such as discounted cumulative gain (DCG)~\cite{DCG} where $v_k\coloneqq \nfrac{1}{\log (k+1)}$ (and its variants) and Zipfian where $v_k\coloneqq\nfrac{1}{k}$~\cite{kanoulas2009empirical}.
Then, given $v\in \cR^{n}_{\geq}$ we define the {\em latent} utility of a ranking $x$ as
\begin{align}
  \hspace{14mm}\mathcal{W}(x,v,w)\coloneqq \sum_{i\in[m],\ j\in[n]}x_{ij}w_iv_j. \labelthis{Latent utility}
\end{align}
The goal of the ranking problem is to find a ranking (equivalently, an assignment)  that maximizes the latent utility:
$$\argmax\nolimits_x \mathcal{W}(x,v,w).$$
The reason  the utility above is called ``latent'' is, as is shortly discussed, in the presence of implicit bias, the perceived utility may be different from the latent utility.
Note that subset selection is a special case of the ranking problem when $v_j=1$ for all $j \in [n]$.

\subsection{Groups and a model for implicit bias}\label{sec:implicitbias}
Items may belong to one or more of intersectional (i.e., not necessarily disjoint) groups $G_1, G_2,\ldots, G_p$.
Each $G_s \subseteq [m]$ and $G_s \cap G_t$ may not be empty for $s \neq t$.
The perceived or observed utility of group items in $G_s$ may be different from their latent utility.
And, items that belong to multiple groups may be subject to multiple implicit biases, as has been observed~\cite{bertrand2004emily, uhlmann2005constructed}.
To mathematically model this,  we consider a model of implicit bias introduced by  by~\cite{KleinbergR18}, which is  motivated from  empirical findings of~\cite{wenneras2001nepotism}:
For two disjoint groups, $G_a, G_b\subseteq [m]$, given an implicit bias parameter $\beta\in[0,1]$ they defined the observed utility as
\begin{align}
  \hspace{14mm}\hat{w}_i \coloneqq \begin{cases}
  w_i & \text{if } i\in G_a\\
  \beta w_i & \text{if } i\in G_b.
\end{cases}\labelthis{Observed utility}
\end{align}

\noindent
To extend this model to the case of multiple intersectional groups,   for each $s\in [p]$, we assume an implicit bias parameter $\beta_s\in[0,1]$.
Since, it is natural to expect that items at the intersection of multiple groups encounter a higher bias~\cite{williams2014double},
we define their implicit bias parameter as the product of the implicit biases of each group the item belongs to.
Formally, the observed utility $\hat{w}_i$ of item $i\in [m]$ is
\begin{align}
  \hat{w}_i\coloneqq \bigg(\prod_{s\in [p]\colon G_s\ni i}\beta_s \bigg)\cdot w_{i}.
\end{align}
It follows that the case of two disjoint groups $G_a,G_b$ is a special case;  let $\beta_a=1$ and $\beta_b=\beta$.

\subsection{Intervention constraints}\label{sec:interventions}
In the presence of implicit bias, the ranking problem then results in finding the assignment matrix $x$ that maximizes the observed utility $\mathcal{W}(x,v,\hat{w})$.
Thus, not only does it result in adverse outcomes for groups for which $\beta_s <1$, it also follows that optimizing this utility as such may be sub-optimal for the overall goal of finding a utility maximizing rank.
To see this note that if $x^\star$ is the assignment that maximizes $\mathcal{W}(x,v,\hat{w})$, the value of the latent utility, $\mathcal{W}(x^\star,v,w)$, derived from it may be much less.

Motivated by the Rooney Rule and its efficacy as an intervention in the subset selection problem \cite{KleinbergR18}, we investigate if there are constraints that can be added to the optimization problem of finding a ranking that maximizes the observed utilities, that results in a ranking in which the latent utility is much higher, possibly even  optimal: $\max_{x} \mathcal{W}(x,v,{w})$.
As a class of potential interventions, we consider lower bound constraints on the number of items from a particular group $s\in [p]$, selected in the top-$k$ positions of the ranking, for all positions $k\in [n]$.
More specifically, given  $L\in \cZ_{\geq 0}^{n\times p}$ we consider the following constraints on rankings (assignments):
\begin{align}\label{eq:deterministic_constraints}
  \hspace{-6mm}\forall \ k \in [n],\ s\in [p] \ \ \ \ \  L_{ks} \leq \sum_{j\in [k]}\sum_{i\in G_s}x_{ij}.
\end{align}
We will sometimes refer to these constraints  as $L$-constraints.
Let
\begin{align}
  \mathcal{K}(L)\coloneqq \{x\in \{0,1\}^{m\times n}\ |\ x \text{ satisfies } L\text{-constraints}\}
\end{align}
be the set of all rankings satisfying the $L$-constraint.
Our goal will be to consider various $L$-constraints and understand under what conditions on the input utilities and bias parameters does the ranking
\begin{align}
  \tilde{x} \coloneqq \argmax\nolimits_{x \in \mathcal{K}(L)} \mathcal{W}(x,v,\hat{w})
\end{align}
have the property that
$\mathcal{W}(\tilde{x},v,w)$ close to $\max_{x} \mathcal{W}(x,v,{w})$.

Constraints such as \eqref{eq:deterministic_constraints} have been studied by a number of works on fairness,
including by several works on ranking~\cite{celis2018ranking, linkedin_ranking_paper,ZehlikeB0HMB17}
While these constraints can encapsulate a variety of fairness and diversity metrics~\cite{celis2019classification}, their effectiveness as an intervention for implicit biases was not clear and the utility of the rankings generated remained ill-understood prior to this work.

\section{Theoretical results}
\noindent {\bf Notation.}
Let $Z$ be a random variable. We use $Z_{(k:n)}$ to represent the $k$-th order statistic (the $k$-th largest value) from $n$ \iid draws of $Z$.
For all $a<b$, define $\cU[a,b]$ to be the uniform distribution on $[a,b]$.
More generally, for an interval $I\subseteq \cR$, let $\cU I$ be the uniform distribution on $I$.
Let
\begin{align}
  x^\star\coloneqq \argmax\nolimits_{x}\mathcal{W}(x,v,w)
\end{align}
be the ranking that maximizes the latent utility.
\subsection{$L$-constraints are sufficient to recover optimal latent utility}
Our first result is structural and shows that the class of $L$-constraints defined above are expressive enough to recover the optimal latent utility while optimizing observed utility constrained to certain specific $L \in \mathbb{Z}^{n \times p}_{\geq 0}.$

\begin{theorem}\label{thm_constraints_are_sufficient}
  Given a set of latent utilities $\{w_i\}_{i=1}^{m}$,
  there exists constraints $L(w)\in \cZ^{n\times p}_{\geq 0}$,
  such that,
  for all implicit bias parameters $\{\beta_s\}_{s=1}^{p}\in(0,1)^{p}$,
  the optimal constrained ranking $\tilde{x}\coloneqq$ \\$\argmax_{x\in \mathcal{K}(L(w))} \mathcal{W}(x,v,\hat{w})$ satisfies
  \begin{align*}
    \mathcal{W}(\tilde{x}, v, w)=\max\nolimits_{x}\mathcal{W}(x,v,w).\numberthis
  \end{align*}
\end{theorem}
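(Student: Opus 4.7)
The construction is natural: order items so that $w_1 \geq w_2 \geq \cdots \geq w_m$; then the latent-optimal ranking places item $i$ at position $i$ for every $i\in[n]$, and I let $x^\star$ denote this particular maximizer. I take
\[
L(w)_{ks} \coloneqq \left|G_s \cap \{1,\ldots,k\}\right|
\qquad \text{for } k\in[n],\ s\in[p],
\]
i.e., the number of items from $G_s$ among the top $k$ latent utilities, so that $x^\star$ saturates every $L$-constraint. In particular $x^\star \in \mathcal{K}(L(w))$, hence any observed-utility maximizer $\tilde{x}$ over $\mathcal{K}(L(w))$ satisfies $\mathcal{W}(\tilde{x},v,\hat{w}) \geq \mathcal{W}(x^\star,v,\hat{w})$, and the only task left is to prove $\mathcal{W}(\tilde{x},v,w)=\mathcal{W}(x^\star,v,w)$.

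I would argue by contradiction: suppose $\mathcal{W}(\tilde{x},v,w) < \mathcal{W}(x^\star,v,w)$. Let $k$ be the smallest position at which $\tilde{x}$ and $x^\star$ differ, and let $j$ be the item that $\tilde{x}$ places at position $k$. Since positions $1,\ldots,k-1$ are filled by items $1,\ldots,k-1$ in both rankings, $j$ must satisfy $j > k$, so $w_j \leq w_k$. Writing $S_i \coloneqq \{s:i\in G_s\}$, the prefix-$k$ constraint applied to both $\tilde{x}$ and $x^\star$ reduces to $[j\in G_s] \geq [k\in G_s]$ for every $s\in[p]$, i.e., $S_j \supseteq S_k$. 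Because every $\beta_s\in(0,1)$ strictly, $\prod_{s\in S_j}\beta_s \leq \prod_{s\in S_k}\beta_s$, yielding $\hat{w}_j \leq \hat{w}_k$, with strict inequality whenever $w_j<w_k$ or $S_j \supsetneq S_k$.

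I would then construct $\tilde{x}' \in \mathcal{K}(L(w))$ with $\mathcal{W}(\tilde{x}',v,\hat{w}) > \mathcal{W}(\tilde{x},v,\hat{w})$, contradicting maximality. The natural move is to push item $k$ into position $k$: if $k$ already appears in $\tilde{x}$ at some position $k'>k$, swap $k$ and $j$, incurring observed-utility change $(\hat{w}_k-\hat{w}_j)(v_k-v_{k'})\geq 0$; otherwise $k\notin\tilde{x}$, and we replace $j$ at position $k$ by $k$, with change $(\hat{w}_k-\hat{w}_j)v_k\geq 0$. Iterating at the next smallest disagreement yields a sequence of such exchanges transforming $\tilde{x}$ into $x^\star$; each exchange weakly improves both latent and observed utility, and the assumed latent suboptimality forces at least one exchange along the way to have $w_k>w_j$ combined with a strict drop in the position weight, which via the bound $\hat{w}_k-\hat{w}_j\geq(w_k-w_j)\prod_{s\in S_k}\beta_s>0$ forces a \emph{strict} observed-utility gain at that step. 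The total observed-utility gain along the sequence would thus exceed $0$, giving the desired contradiction \emph{provided} feasibility is maintained throughout.

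The principal obstacle is precisely this feasibility: such a swap alters $G_s$-counts at intermediate prefixes $k''\in[k,k'-1]$ for $s\in S_j\setminus S_k$, and may tighten some $L_{k'',s}$ enough to violate it. I would show that any such violation can occur only when $L_{k'',s}$ was already tight on $\tilde{x}$, which in turn forces a ``sub-defect'' in the positions $[k+1,k'']$ of $\tilde{x}$ (one fewer $G_s$ item than in the corresponding suffix of $x^\star$); applying the same swap construction recursively to repair these sub-defects first yields a nested collection of feasibility-preserving exchanges. Termination can be verified via a potential function such as the lexicographic comparison of the signatures of items placed at each position in $\tilde{x}$ versus those placed by $x^\star$, and the net observed-utility change can be argued strictly positive using $\beta_s<1$ together with the initial latent suboptimality of $\tilde{x}$. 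This nested repair bookkeeping is the bulk of the argument; everything else follows directly from the definitions and the containment $S_j\supseteq S_k$ established above.
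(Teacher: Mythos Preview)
Your construction of $L(w)$ and the high-level strategy (first disagreement, $S_j\supseteq S_k$ from the prefix constraint, swap to improve observed utility, contradiction) are exactly the paper's. The one structural difference: the paper runs its induction at the level of \emph{types} $T_i=\{s:i\in G_s\}$ rather than items, proving that the constrained optimum and $x^\star$ place an item of the same type at every position, and then invokes a short lemma (within a fixed type, $\hat w$ and $w$ induce the same order, so matching type sequences forces matching latent utility). Your item-level iteration is finer than needed, and that finer target is what pushes you into the recursive-repair detour.

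That detour is where your proposal is genuinely incomplete. Repairing a sub-defect at $[k+1,k'']$ means pulling an extra $G_s$-item forward into that range; but such an item carries the factor $\beta_s<1$, so moving it to a higher-weight position in isolation \emph{decreases} observed utility. You assert the net change over the nested repairs is still strictly positive, but you give no mechanism for bounding those losses against the single top-level gain $(\hat w_k-\hat w_j)(v_k-v_{k'})$, and the lexicographic potential you mention controls termination, not sign. The paper avoids this entirely: at the type level it needs only \emph{one} feasible improving swap at the first position where the type is wrong (its Case~B), not a full chain transforming $\tilde x$ into $x^\star$. If you coarsen your induction to ``type at position $k$ equals $S_k$'' and then apply the within-type ordering lemma, the recursion disappears and only the single-swap feasibility check of Case~B remains.
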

\noindent
Without additional assumptions, $L(w)$ necessarily depends on $w$; see Fact~\ref{fact_impossibility}.
A set of utility-independent constraints is often preferable due to its simplicity and interpretability; our next two results take  steps in this direction by making assumptions about distributions from which the utilities are drawn.

\smallskip
\noindent {\em Proof sketch of Theorem~\ref{thm_constraints_are_sufficient}.}
Consider the following constraints:
For all $s\in[p]$ and $k\in[n]$,
$$L_{ks}(w) \coloneqq \sum_{i\in G_s, j\in [k]}x_{ij}^\star.$$
Recall that $x^\star\coloneqq \argmax_{x} \mathcal{W}(x,v,w)$ is a function of $w$.
We claim that $L_{ks}(w)$ satisfy the claim in the theorem.
The proof proceeds in two steps.
First, we show that $\tilde{x}$ is the same as $x^\star$ up to the groups of items at each position.
Let $T_i \coloneqq \{s \colon i\in G_s \}$ be the set of groups $i$ belongs to.
This proof relies on the fact that the observed utility of an item is always smaller than its latent utility, and that for any two items $i_1, i_2\in[m]$, if $T_{i_2}\subsetneq T_{i_1}$ and $w_{i_1}=w_{i_2}$, then  $\hat{w}_{i_1}< \hat{w}_{i_2}$.
Using these we show that for each position $k\in[n]$, under the chosen $L$-constraints, it is optimal to greedily place item $i^\prime\in [m]$ that has the highest observed utility and satisfies the constraints.
In the next step, we show that $\tilde{x}$ has the same latent utility as $x^\star$ from a contradiction.
We show that if the claim is satisfied for the first $k\in [n]$ positions, then we can swap two candidates $i_1$ and $i_2$, such that $T_{i_1}=T_{i_2}$, to satisfy the claim for the first $(k+1)$ positions without loosing latent utility.
Here we use the fact for any two items $i_1, i_2\in[m]$, if $T_{i_2}= T_{i_1}$ and $w_{i_1}<w_{i_2}$, then  $\hat{w}_{i_1}<\hat{w}_{i_2}$, i.e.,
the relative order of items in the same set of groups does not change whether we rank them by their observed utility (as in $\tilde{x}$) or their latent utility (as in $x^\star$).\footnote{A minor technicality is that $\tilde{x}$ could swap two items $i_1, i_2\in[m]$, with $T_{i_2}= T_{i_1}$ and $w_{i_1}=w_{i_2}$, relative to $x^\star$.
But this does neither affects the latent utility nor the observed utility.}
The proof of Theorem~\ref{thm_constraints_are_sufficient} is presented in  Section~\ref{sec_proof_thm_constraints_are_sufficient}.

\subsection{Distribution independent constraints}\label{sec_distribution_independent_const}
We now study the problem of coming up with constraints that do not depend on the utilities.
Towards this, we consider the setting of two disjoint groups,  $G_a, G_b\subseteq [m]$.
Let $m_a\coloneqq |G_a|$ and $m_b\coloneqq |G_b|$ be the sizes of $G_a$ and $G_b$.
We assume that the latent utility $W_i$ for all items  $i\in G_a \cup G_b$ is i.i.d. and drawn from some distribution $\mathcal{D}$.
This model is equivalent to the one considered by \cite{KleinbergR18}, except that they fix $\mathcal{D}$ to be in the family of power-law distributions, whereas our result in this section holds for any distribution $\mathcal{D}$.

The optimal ranking will  sort the utilities $(w_i)_{i \in [m]}$ in a decreasing order (breaking ties arbitrarily if necessary).
For all $\ell\in [m_b]$, let $P_\ell\in [m]$ be the random variable representing the position of the $\ell$-th item from $G_b$ in the optimal ranking.
Let $N_{kb}$ be the random variable that counts the number of items belonging to $G_b$ in the first $k$ positions of the optimal ranking.
The following result reveals the structure of the optimal  ranking (when there is no implicit bias) and is used in the next subsection to design utility-independent constraints.
\begin{theorem}\label{fact_spread_dist}
  Let $\mathcal{D}$ be a continuous distribution, $\ell\leq m_b$, and $0<k < \min(m_a, m_b)$ be a position, then
  \begin{align}
    \forall\ \delta\geq 2 \hspace{5mm}&\Pr[\ N_{kb} \leq \eE[N_{kb}] - \delta \ ] \leq e^{-\frac{2(\delta^2-1)}{k}},\label{cons_of_nkb}\\
    &\eE[N_{kb}] = k\cdot\nfrac{m_b}{(m_a+m_b)},\label{expect_nkb}\\
    &\eE[P_\ell] = \ell\cdot\big(1+\nfrac{m_a}{(m_b+1)}\big).\label{eq:expectation_z_ell}
  \end{align}
\end{theorem}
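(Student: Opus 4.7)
\textbf{Proof plan for Theorem~\ref{fact_spread_dist}.}
The plan rests on a single observation: since $\mathcal{D}$ is continuous, the i.i.d.\ sample $w_1,\ldots,w_m$ (with $m\coloneqq m_a+m_b$) has all distinct values almost surely, and by exchangeability of i.i.d.\ draws the random set of ranks occupied by the $m_b$ items of $G_b$ in the optimal ranking is uniformly distributed over the $\binom{m}{m_b}$ size-$m_b$ subsets of $[m]$. Consequently $N_{kb}$ is hypergeometric with parameters $(m,m_b,k)$, and $P_1<P_2<\cdots<P_{m_b}$ are the order statistics of a uniform random $m_b$-subset of $[m]$. The first step is to establish this reduction cleanly; once it is in place, the three claims reduce to classical identities and a standard concentration bound.

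Given the reduction, \eqref{expect_nkb} is immediate: writing $N_{kb}=\sum_{i\in G_b}\mathbf{1}[\mathrm{rank}_i\leq k]$, each indicator has probability $k/m$ by symmetry, so linearity of expectation gives $\eE[N_{kb}]=km_b/m$. For \eqref{eq:expectation_z_ell} the cleanest route is the standard gap argument: set $P_0\coloneqq 0$ and $P_{m_b+1}\coloneqq m+1$, and define the $m_b+1$ gaps $G_j\coloneqq P_j-P_{j-1}-1$ for $j\in[m_b+1]$. These satisfy $\sum_{j=1}^{m_b+1}G_j=m-m_b=m_a$, and by symmetry of the uniform random subset $\eE[G_j]$ does not depend on $j$, so $\eE[G_j]=m_a/(m_b+1)$. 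Since $P_\ell=\ell+\sum_{j=1}^{\ell}G_j$, it follows that $\eE[P_\ell]=\ell\bigl(1+m_a/(m_b+1)\bigr)$, as claimed.

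For the concentration bound \eqref{cons_of_nkb}, the plan is to invoke Hoeffding's inequality for sampling without replacement (Hoeffding, 1963), which gives the tight lower-tail bound $\Pr[N_{kb}\leq \eE[N_{kb}]-\delta]\leq \exp(-2\delta^2/k)$ directly for the hypergeometric distribution; since $\delta^2\geq \delta^2-1$, this immediately implies the stated weaker bound $\exp(-2(\delta^2-1)/k)$. An equivalent alternative is to note that the $k$ indicators ``the $j$-th highest-utility item lies in $G_b$'' ($j\in[k]$) are negatively associated and to apply the Hoeffding/Chernoff bound for negatively associated variables. The main obstacle is conceptual rather than technical: one has to carefully justify the reduction to a uniform random $m_b$-subset (which uses both continuity of $\mathcal{D}$ to rule out ties almost surely and exchangeability of the i.i.d.\ utilities to conclude uniformity on subsets), and then one must remember to invoke the sampling-without-replacement form of Hoeffding rather than the usual Bernoulli version. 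The slack factor ``$-1$'' and the hypothesis $\delta\geq 2$ in the statement are not needed for the Hoeffding step itself, and are presumably chosen for convenience in downstream applications in the next subsection.
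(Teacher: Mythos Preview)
Your proposal is correct and rests on the same reduction as the paper: by continuity of $\mathcal{D}$ and exchangeability of the i.i.d.\ utilities, the set of ranks occupied by $G_b$ in the optimal ranking is a uniform random $m_b$-subset of $[m]$, so $N_{kb}$ is hypergeometric and the $P_\ell$'s are the order statistics of that subset. From there, however, your execution differs from the paper's in two places. For $\eE[P_\ell]$ the paper writes down $\Pr[P_\ell=k]$ explicitly, recognizes $P_\ell-\ell$ as a negative-hypergeometric variable, and cites its known mean; your gap argument (the $G_j$'s are exchangeable nonnegative integers summing to $m_a$) is more elementary and avoids naming the distribution. For the tail bound the paper invokes the Hush--Horn (2005) concentration inequality for the hypergeometric, which is where the $\delta^2-1$ and the implicit $\delta\ge 2$ come from: their bound is $e^{-2(\delta^2-1)\gamma}$ with $\gamma\ge 1/(k+1)$. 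Your use of Hoeffding's sampling-without-replacement inequality gives the cleaner $e^{-2\delta^2/k}$ directly and indeed makes the ``$-1$'' slack and the hypothesis $\delta\ge 2$ unnecessary; so your puzzlement about those features is well founded---they are artifacts of the specific reference the paper cites, not of the underlying mathematics.
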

\noindent
Note that this result is independent of the distribution $\mathcal{D}$ and only requires  $\mathcal{D}$ to be a continuous probability distribution.
The above equations show that with high probability  the optimal ranking has $k\cdot\nfrac{m_b}{(m_a+m_b)}$ items from $G_b$ in the top-$k$ positions,
and in expectation,  it places these items at equal distances in the first $k$ positions for all $k\in[n]$.
This observation motivates the following simple constraints.

\noindent {\bf  Simple constraints.}
Given a number $\alpha\in [0,1]$,
we define the  constraints $L(\alpha)$ as follows:
For all $k\in [n]$
\begin{align}\label{eq:simple_constraints}
  &\hspace{20mm}L_{ka} \coloneqq 0 \text{ and } L_{kb} \coloneqq \alpha k.
  \labelthis{Rooney Rule like constraints}
\end{align}
Note that the only non-trivial constraint is on $G_b$.
These constraints are easy to interpret and can be seen as generalization of the Rooney Rule to the ranking setting.

\noindent {\em Proof sketch of Theorem~\ref{fact_spread_dist}.}
Here, we discuss the distributional independence of Theorem~\ref{fact_spread_dist}, and present its proof in Section~\ref{sec_proof_spread_distribution}.
For all $i\in [m]$, $w_i\stackrel{d}{=}\mathcal{D}$ be the random utility of the $i$-th item drawn from $\mathcal{D}$ and $F_{\mathcal{D}}(\cdot)$ be the cumulative distribution function of $\mathcal{D}$.
Then the independence follows from the straightforward facts that for all $\mathcal{D}$, $F_{\mathcal{D}}(w_i)\stackrel{d}{\sim}\cU[0,1]$ (here is one place it is used that $\mathcal{D}$ is continuous) and that $F_{\mathcal{D}}(\cdot)$ (being a cdf) is a monotone function.
From these it follows that $\argmax_x\mathcal{W}(x,v,\{w_i\}_{i=1}^m)=\argmax_x\mathcal{W}(x,v,\{F_{\mathcal{D}}(w_i)\}_{i=1}^m)$.
Thus, we can replace the all $W_i$s by $F_{\mathcal{D}}(w_i)$ without changing the optimal ranking.

\subsection{Optimal latent utility from simple constraints for uniform distributions}\label{sec_utility_bound}
In this section we study the effect of the simple constraints mentioned in the previous subsection when $\mathcal{D}$ is  the uniform distribution on $[0,1]$, for  the setting of two disjoint groups  $G_a, G_b\subseteq [m]$ with $\beta_a=1$ and $\beta_b=\beta$.
We discuss how these arguments could be extended to other bounded distributions in the remarks following Theorem~\ref{thm_const_increase_utility}, and empirically study the increase in utility for a non-bounded distribution in Section~\ref{sec:empirical_results}.

Define the expected utility $U_{\mathcal{D},v}(\alpha, \beta)$ as
\begin{align*}
  U_{\mathcal{D},v}(\alpha, \beta)\coloneqq \eE_{w\gets \mathcal{D}^m}\big[\mathcal{W}(\tilde{x}, v, w)\big]\numberthis\label{expected_utility}
\end{align*}
where for each draw $w$, $\tilde{x}\coloneqq \argmax_{x\in \mathcal{K}(L(\alpha))}\mathcal{W}(x,v,\hat{w})$, and $\hat{w}_i=w_i$  if $i \in G_a$ and $\hat{w}_i=\beta w_i$ if $i \in G_b.$
Sometimes we drop the subscripts $\mathcal{D}$ and $v$ if they are clear from the context.
\begin{theorem}~\label{thm_const_increase_utility}
  Given a $\beta\hspace{-0.5mm}\in\hspace{-0.5mm}(0,1)$,
  if $\mathcal{D}\hspace{-0.5mm}\coloneqq\hspace{-0.5mm} \cU[0,1]$,
  $v$ satisfies Assumptions~\eqref{assumption} and \eqref{assumption_2} with $\epsilon>0$,
  and $n\leq \min(m_a,m_b)$,
  then for $\alpha^\star\coloneqq \frac{m_b}{m_a+m_b}$, then adding $L(\alpha)$-constraints achieve nearly optimal latent utility in expectation:
  \begin{align*}
    U_{\mathcal{D},v}(\alpha^\star, \beta)=U_{\mathcal{D},v}(0,1) \cdot (1-O(n^{-\epsilon/2}+n^{-1})).
  \end{align*}
\end{theorem}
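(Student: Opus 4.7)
The plan is to explicitly characterize the structure of the constrained optimum $\tilde{x}$ under the simple constraints $L(\alpha^\star)$, and then use the closed-form expectations of uniform order statistics to compute $U_{\cU[0,1],v}(\alpha^\star,\beta)$ and compare it to $U_{\cU[0,1],v}(0,1)=\eE[\mathcal{W}(x^\star,v,w)]$.

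Since $\hat{w}_i=\beta w_i$ for $i\in G_b$ and $\hat{w}_i=w_i$ for $i\in G_a$, within each group the observed utility is a monotone rescaling of the latent utility. Hence any feasible ranking is determined once we fix the set $S_b\subseteq[n]$ of $G_b$-occupied positions: the $\ell$-th smallest element of $S_b$ receives the $\ell$-th largest $G_b$-item $B_\ell$, and the $\ell'$-th position outside $S_b$ receives $A_{\ell'}$. Because $\beta<1$, an exchange argument shows that the observed-utility maximizer pushes $G_b$-items to the latest positions permitted by $L(\alpha^\star)$, yielding the ``ideal spread'' where $|S_b|=\lceil\alpha^\star n\rceil$ and the $\ell$-th $G_b$-position is $b_\ell\coloneqq\lceil\ell/\alpha^\star\rceil$. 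This structural event $E$ fails only when some comparison $\beta B_\ell>A_{\ell'}$ flips for a relevant pair, and Hoeffding/Chernoff-style concentration for uniform order statistics gives $\Pr[\bar{E}]=O(n^{-1})$.

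For $\cU[0,1]$, one has $\eE[B_\ell]=(m_b-\ell+1)/(m_b+1)$, $\eE[A_{\ell'}]=(m_a-\ell'+1)/(m_a+1)$, and $\eE[w_{(j:m)}]=(m-j+1)/(m+1)$. Conditioning on $E$ and plugging in $\alpha^\star=m_b/m$ with $b_\ell=\lceil\ell/\alpha^\star\rceil$ gives $\ell/m_b\approx b_\ell/m$, hence $\eE[B_\ell]\approx 1-b_\ell/m$; analogously $\eE[A_{\ell'}]\approx 1-a_{\ell'}/m$, where $\{a_{\ell'}\}\coloneqq[n]\setminus S_b$. Since $\{b_\ell\}\sqcup\{a_{\ell'}\}=[n]$,
\begin{align*}
\eE[\mathcal{W}(\tilde{x},v,w)\mid E] \approx \sum_{j=1}^n v_j - \tfrac{1}{m}\sum_{j=1}^n j\,v_j \approx \eE[\mathcal{W}(x^\star,v,w)],
\end{align*}
so the two expected utilities agree at leading order.

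The residual error has three sources: (i) rounding in $b_\ell$ versus $\ell/\alpha^\star$, worth $O(v_{b_\ell})$ per term; (ii) the $(m+1)^{-1}$-versus-$m^{-1}$ slack in the uniform order-statistic formulas, worth $O(m^{-1})\sum_j v_j$; and (iii) the contribution of $\bar{E}$, bounded by $\Pr[\bar{E}]\sum_j v_j$. Invoking the two assumptions on $v$ with parameter $\epsilon>0$ -- which control the tail decay of $v_j$ and the growth of $\sum_j j\,v_j$ relative to $\sum_j v_j$ -- each piece contributes a multiplicative factor of size $O(n^{-\epsilon/2}+n^{-1})$ relative to $U_{\cU[0,1],v}(0,1)$, giving the claimed bound. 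The hard part will be establishing the ideal-spread structure rigorously -- uniformly over the random draws and over the full range $\beta\in(0,1)$, where as $\beta\to 1$ the pairwise gaps $A_{\ell'}-\beta B_\ell$ shrink -- and then propagating the resulting concentration estimates through the $v$-weighted sums to obtain the $n^{-\epsilon/2}$ rate.
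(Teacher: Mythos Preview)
Your outline shares the paper's key opening move: both argue that with high probability the $L(\alpha^\star)$-constraints are everywhere tight, so the constrained ranking interleaves the top $G_a$- and $G_b$-items at forced positions. The paper proves this via a coupling with the unconstrained ranking (Proposition~\ref{prop_pick_propotional}): if the unconstrained ranking places at most $\alpha^\star\ell$ items of $G_b$ in the first $\ell$ positions, then the constrained one places exactly $\lceil\alpha^\star\ell\rceil$. Where you diverge is in granularity. The paper partitions $[n]$ into $n^{1-\delta}$ buckets of length $n^\delta$, establishes the right $G_b$-count \emph{per bucket}, and thus knows each item's position only to within $\pm n^\delta$; Assumptions~\eqref{assumption}--\eqref{assumption_2} convert that slack into an $O(n^{\delta-\epsilon})\sum_k v_k$ loss, and $\delta=\epsilon/2$ gives the stated rate. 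You instead pin the exact positions $b_\ell=\lceil\ell/\alpha^\star\rceil$. If that works, it is strictly sharper: a direct comparison of $\eE[B_\ell]=1-\ell/(m_b{+}1)$ with $\eE[w_{(b_\ell:m)}]=1-b_\ell/(m{+}1)$ already gives an $O(n^{-1})\sum_j v_j$ discrepancy, \emph{without} invoking the $\epsilon$-assumptions. That you nonetheless land on $O(n^{-\epsilon/2})$ is a sign the error accounting has not actually been carried through.

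Two concrete gaps. First, your item~(i) is misquantified: ``$O(v_{b_\ell})$ per term'' would sum to $O(\sum_j v_j)$, which is not small. The actual discrepancy $|b_\ell/(m{+}1)-\ell/(m_b{+}1)|$ is $O(1/n)$, so the weighted per-term error is $O(v_{b_\ell}/n)$. Second, $\Pr[\bar E]=O(n^{-1})$ is asserted, not proved. The ideal spread fails when $\beta B_{\ell(k)}>A_{\ell'(k)}$ at some choice position $k$; for fixed $\beta<1$ the expected gap $\eE[A_{\ell'}]-\beta\eE[B_\ell]\ge(1-\beta)(1-n/m)\ge(1-\beta)/2$ is $\Theta(1)$, so order-statistic concentration plus a union bound over $n$ positions gives $\Pr[\bar E]=O(n\,e^{-c(\beta)n})$. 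The paper reaches the analogous estimate through the hypergeometric law of $N_b^\ell\mid N_{a_1}$ (Proposition~\ref{thm_expected_cand}); either route works, but you need one of them written out. You also need the decoupling $\eE[B_\ell\mid E]=\eE[B_\ell]+O(\Pr[\bar E])$ via boundedness of the utilities, which the paper does explicitly (cf.\ Equation~\eqref{approximation_of_utility_aa}) and you omit.
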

\paragraph{Assumptions.}
\begin{align*}
  \frac{\sum_{k=1}^{n-1} v_k-v_{k+1}}{\sum_{k=1}^{n}v_k} &=\frac{v_1-v_n}{\sum_{k=1}^{n}v_k} = O(n^{-\epsilon})\label{assumption}\numberthis\\
  \forall\ k\in [n],\ v_k-v_{k+1}&\geq v_{k+1}-v_{k+2}.\label{assumption_2}\numberthis
\end{align*}
Roughly, these assumptions mean that the position discounts be much larger than the difference between between the discounts of two consecutive positions.
These are mild assumptions, and are satisfied by several commonly studied position discounts including DCG for $\epsilon=0.9\bar{9}$, and inverse polynomial discount where $v_k\coloneqq k^{-c}$ and $c\in(0,1)$~\cite{wang2013theoretical} for $\epsilon=1-c$.

\medskip

A few remarks are in order:
\begin{enumerate}[itemsep=0pt,wide, labelwidth=!,labelindent=0pt]
  \item When $v_k=1$ for all $k\in [n]$ then we can derive the following explicit expressions of the utility assuming $m_a,m_b\geq n$:
  \begin{align*}
    \hspace{0mm}U_{\mathcal{D},1}(\alpha^\star, \beta)\labelthis{Utility with constraints}\label{eq_utility_with_const}&=n\big(1-\frac{n}{2(m_a+m_b)}+\ohalf\big),\\
    \hspace{0mm}U_{\mathcal{D},1}(0, \beta) \labelthis{Utility without constraints}\label{eq_utility_without_const}&=\begin{cases}
    \frac{m_a(1-\beta^2)}{2} + \frac{m_a\beta^2+m_b}{2}\Bracket{1-\frac{(m_a+m_b-n)^2}{(m_a\beta+m_b)^2}+\ohalf}, &\hspace{-2mm}c=n-\omega(n^{\nfrac{5}{8}})\\
    n\big(1-\frac{n}{2m_a}+\ohalf\big),  &\hspace{-2mm}c\geq n+\Theta(n^{\nfrac{5}{8}}).
  \end{cases}
\end{align*}
Where, we define $c\coloneqq m_a(1-\beta)$.
\item Choosing $\beta=1$ in Equation~\myeqref{eq_utility_without_const}{17} we can see that the optimal latent utility from a ranking of $n$ items is $n(1-\nfrac{n}{(2m_a+2m_b)}+o(1))$, and that the constraints achieve this utility within a $(1-\Theta(n^{-1}))$ multiplicative factor {\em for any} $\beta\in (0,1)$.
Therefore, for all $0<\beta<1$ there exists an $n_0\in \cN$ such that for all $n\geq n_0$, the constraints achieve nearly optimal latent utility.
\begin{remark}\label{rem_constraint_ind_of_beta}
  We note that our choice of $\alpha^\star\coloneqq \frac{m_b}{m_a+m_b}$ is independent of $\beta$.
  Thus, we can achieve near optimal latent utility without the knowledge of $\beta$.
\end{remark}
\item  Extending $\mathcal{D}$ to $\cU[0,C]$, does not change the form of the theorem.
We can simply scale Equations~\myeqref{eq_utility_with_const}{16} and \myeqref{eq_utility_without_const}{17} by $C$.
\item
In the special case of subset selection ($v_k=1$ for all $k$), this theorem answers the open problem in  \cite{KleinbergR18} regarding the $\ell$-th order Rooney rule for the uniform distribution.
\end{enumerate}

\noindent {\em Proof sketch of Theorem~\ref{thm_const_increase_utility}.}
To calculate $U_{\mathcal{D},v}(0, \beta)$, we partition the items from $G_a$ into those with a ``high'' utility (in $(\beta,1]$) and all others (with utility in $[0,\beta]$).
Items with a high utility are always selected before any item from $G_b$.
The number of such items, $N_{a_1}$, is a sum $n$ random variables indicating if the utility is in $[\beta,1)$.
Therefore, $N_{a_1}$ has a binomial distribution $\text{Binomial}(n,1-\beta)$.
Conditioned on  $N_{a_1}$, we can show that the distribution of observed utilities of all remaining items, including those from $G_b$, is the same.
This uses the fact that a uniform random variable conditioned to lie in an sub-interval is uniform.
Using this symmetry we can show that the number of items selected from $G_b$, $N_b$, follows a hypergeometric distribution conditioned on $N_{a_1}$.

This gives us a value for $U_{\mathcal{D},v}(0, \beta)$ conditioned on $N_{a_1}$.
To do away with the conditioning, we derive approximations to the negative moments of the binomial distribution.

To upperbound the difference $U_{\mathcal{D},v}(0, 1)-U_{\mathcal{D},v}(\alpha^\star, \beta)$, we use a coupling argument and the concentration properties of the hypergeometric distribution to show that the difference in positions of any item $i$ between the constrained and the unconstrained ranking is $o(n^{\delta})$ with high probability.
Using Assumptions~\eqref{assumption} and \eqref{assumption_2} with the boundedness of the utility gives us a lower bound of
$$U_{\mathcal{D},v}(0, 1)-U_{\mathcal{D},v}(\alpha^\star, \beta)= O(n^{\delta-\epsilon}+n^{-1})\sum\nolimits_{k=1}^n v_k.$$

Further, using the fact that $U_{\mathcal{D},v}(0, 1)=\Omega(\sum_{k=1}^n v_k)$, the theorem follows by choosing $\delta=\nfrac{\epsilon}{2}$.

To find an explicit expression of $U_{\mathcal{D},v}(\alpha, \beta)$ in the special case when $v_k=1$,
we use a coupling argument to show that if the unconstrained ranking picks fewer than $\nfrac{n}{2}$ items from $G_b$ then the constrained selects exactly $\nfrac{n}{2}$ items from both $G_a$ and $G_b$.
Using the distribution of $N_b$ for the unconstrained case we can show that this occurs with high probability as long as $\beta<1$ (note the strict inequality).
Then, using the boundedness of the utility, we show that $U_{\mathcal{D},v}(\alpha^\star, \beta)$ is twice the sum of expected utility of the highest $\nfrac{n}{2}$ order statistics in $n$ draws from $\cU[0,1]$ which gives us the required expression.
The proof of Theorem~\ref{thm_const_increase_utility} in presented in Section~\ref{sec_proof_thm_const_increase_utility} of the supplementary material.

\begin{remark}
  We expect similar strategy would give us bounds on $U_{\mathcal{D},v}(0, \beta)$ and $U_{\mathcal{D},v}(\alpha^\star, \beta)$ with other bounded distributions as well.
  We provide some evidence in favor of this for a naturally occurring bounded distribution in Section~\ref{sec:empirical_results}.
  However, when $\mathcal{D}$ is an unbounded distribution we cannot ignore events with a low probability and other techniques might be required to estimate the utilities.
  On a positive note, we still observe an increase in utility for the (unbounded) log-normal distribution in our empirical study (Section~\ref{sec:empirical_results}).
\end{remark}

\section{Empirical observations}\label{sec:empirical_results}
We examine the effect of our constraints on naturally occurring distributions of utilities derived from scores in the {\em IIT-JEE 2009} dataset.\footnote{We observe similar results from the citation dataset from the {\em Semantic Scholar Open Research Corpus}, which we present in Section~\ref{sec_extnd_empirical}.}
We consider with two disjoint groups of candidates $G_a$ and $G_b$, representing {male} and female candidates respectively.\footnote{While there could be richer and non-binary gender categories, the above datasets code all of their entries as either male or female.}
First, we analyze the distributions of the scores, $\mathcal{D}_a$ and $\mathcal{D}_b$, attained by $G_a$ and $G_b$, and note that that the distributions of utilities of two groups are very similar in Section~\ref{sec_same_distribution}.
In Section~\ref{sec_exp_implicit_bias} we consider the situation in which these scores accurately capture a candidate's latent utility;\footnote{This may not be the case as examined further in Section~\ref{sec_supernumerary}}
yet implicit bias against candidates in $G_b$, say during a job interview,  affects the  interpretation of the the score of these candidates.
We simulate this implicit bias by shading the distribution of utilities for candidates in $G_b$ by a constant multiplicative factor $\beta\in(0,1]$.
We then measure the effectiveness of our constraints by comparing the latent utilities of:
\begin{enumerate}[itemsep=0pt,leftmargin=16pt]
  \item {\sc \cons{}}: Our proposed ranking, which uses constraints to correct for implicit bias.
  \item {\sc \uncons{}}: An unconstrained ranking.
  \item {\sc \opt{}}: The optimal (unattainable) ranking that maximizes the (unobserved, due to implicit bias) utilities.
\end{enumerate}

In Section~\ref{sec_supernumerary}, we consider the situation in which the scores themselves encode systematic biases, and consider the effectiveness of our constrained ranking as a potential intervention.
In particular, we contrast our approach with a recent intervention used in IIT admissions, {\sc Supernumerary}, which was created to increase the representation of women at IITs~\cite{baswana2015joint}.

\subsection{Dataset: JEE scores}
Indian Institutes of Technology (IITs) are a group of 23 institutes of higher-education, which are, arguably, the most prestigious engineering schools in India.\footnote{The number of IITs in 2009, the year the dataset is from, was 15.}
Currently, undergraduate admissions into the IITs are decided solely on the basis of the scores attained in the Joint Entrance Exam (JEE Advanced; known as IIT-JEE in 2009).
IIT-JEE is conducted once every year, and only students that have graduated from high school in the previous two years are eligible.
Out of the $468,280$, candidates who took IIT-JEE in {2011}, only 9627 candidates (2\%) were admitted to an IIT.
In the same year, $108,653$ women (23.2\%  of the total) appeared in the exam, yet only $926$ were admitted into an IIT (less than 10\% of the 9627 admitted)\cite{JEE2011report}.

\begin{figure}[h!]
  \begin{center}
    \includegraphics[width=0.5\linewidth]{./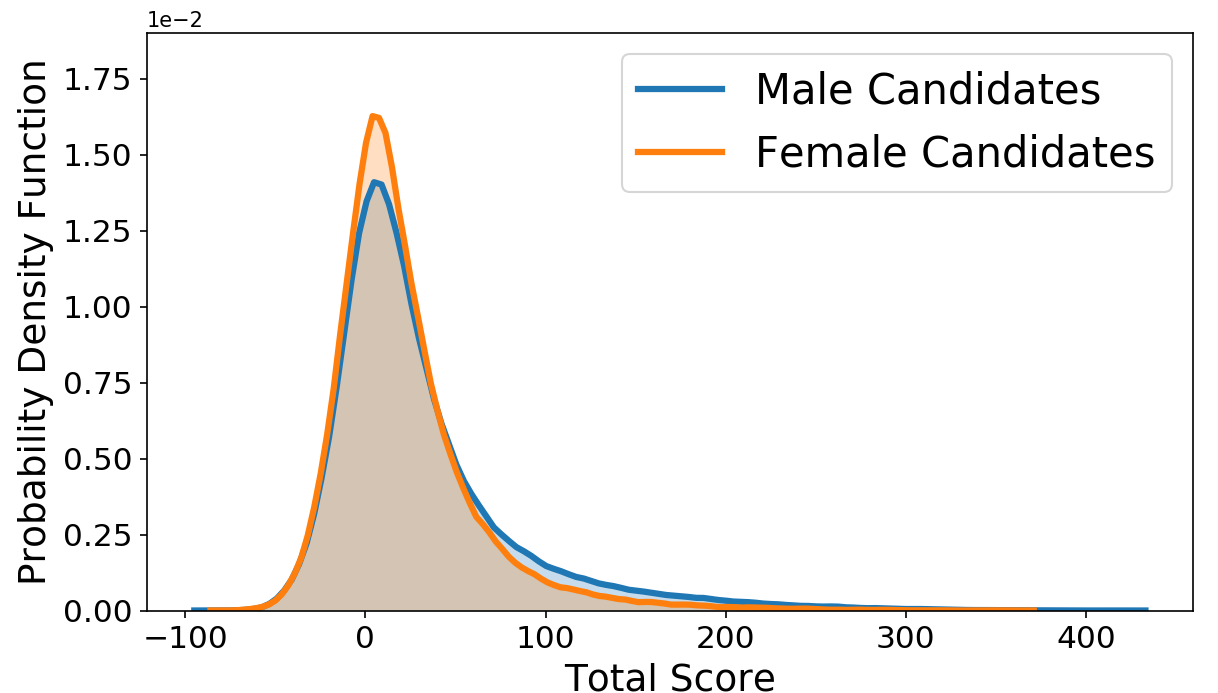}
    \hspace{2mm}
  \end{center}
  \vspace{-3mm}
  \caption{
  {\em Distributions of scores in IIT-JEE 2009:} Distribution of total scores of all male and all female candidates.
  Men and women have similar distributions of total scores, with a total variation distance of $\Delta_{\text{TV}}(\mathcal{D}_a,\mathcal{D}_b)=0.074.$
  }
  \label{jee_distribution_of_men_vs_women}
\end{figure}
The dataset consists of scores of candidates from IIT-JEE 2009 which was released in response to a Right to Information application filed in June 2009~\cite{rti_against_jee}.
This dataset contains the scores of 384,977 students in each of the Math, Physics, and Chemistry sections of IIT-JEE 2009, {along with the student's gender, given as a binary label} (98,028 women and 286,942 men), {their birth category (see \cite{baswana2015joint}), and zip code}.
The candidates are scored on a scale from $-35$ to 160 points in all three sections, with an average total score of $28.36$, a maximum score of 424 and a minimum score of $-86$.
While the statistics of IIT JEE 2009 admissions are not available, if roughly the same number of students were admitted in 2009 as in 2011,
then students with a score above 170 would have been admitted.

\subsubsection{Distribution of scores across groups}\label{sec_same_distribution}
We found that the Johnson's $S_U$-distribution gave a good fit for the distribution of scores.
These fitted distributions of scores of women and men, $\mathcal{D}_b$ and $\mathcal{D}_a$, are depicted in Figure~\ref{jee_distribution_of_men_vs_women}.
The two distributions are very similar; their total-variation distance is $\Delta_{\text{TV}}(\mathcal{D}_a, \mathcal{D}_b)=0.074$, i.e., the two distributions differ on less than 8\% of their probability mass.
However, the mean of men $\hat{\mu}_{a}=30.79$ (standard deviation $\hat{\sigma}_{a}=51.80$) is considerably higher than the mean of women $\hat{\mu}_{b}=21.24$ (standard deviation $\hat{\sigma}_{a}=39.27$).

\begin{table*}[h!]
  \begin{tabular}{|c|ccc|}
    \hline
    & $\nfrac{m_b}{m}=\nfrac{1}{4}$  & $\nfrac{m_b}{m}=\nfrac{1}{3}$  & $\nfrac{m_b}{m}=\nfrac{1}{2}$  \\
    \hline
    &&&\\
    &&&\\
    \vspace{-20mm}\begin{sideways}$\beta=1$\white{........}\end{sideways}&&&\\
    &
    \includegraphics[width=0.32\linewidth, trim={-0.1cm 0cm 0.5cm 1.7cm},clip]{./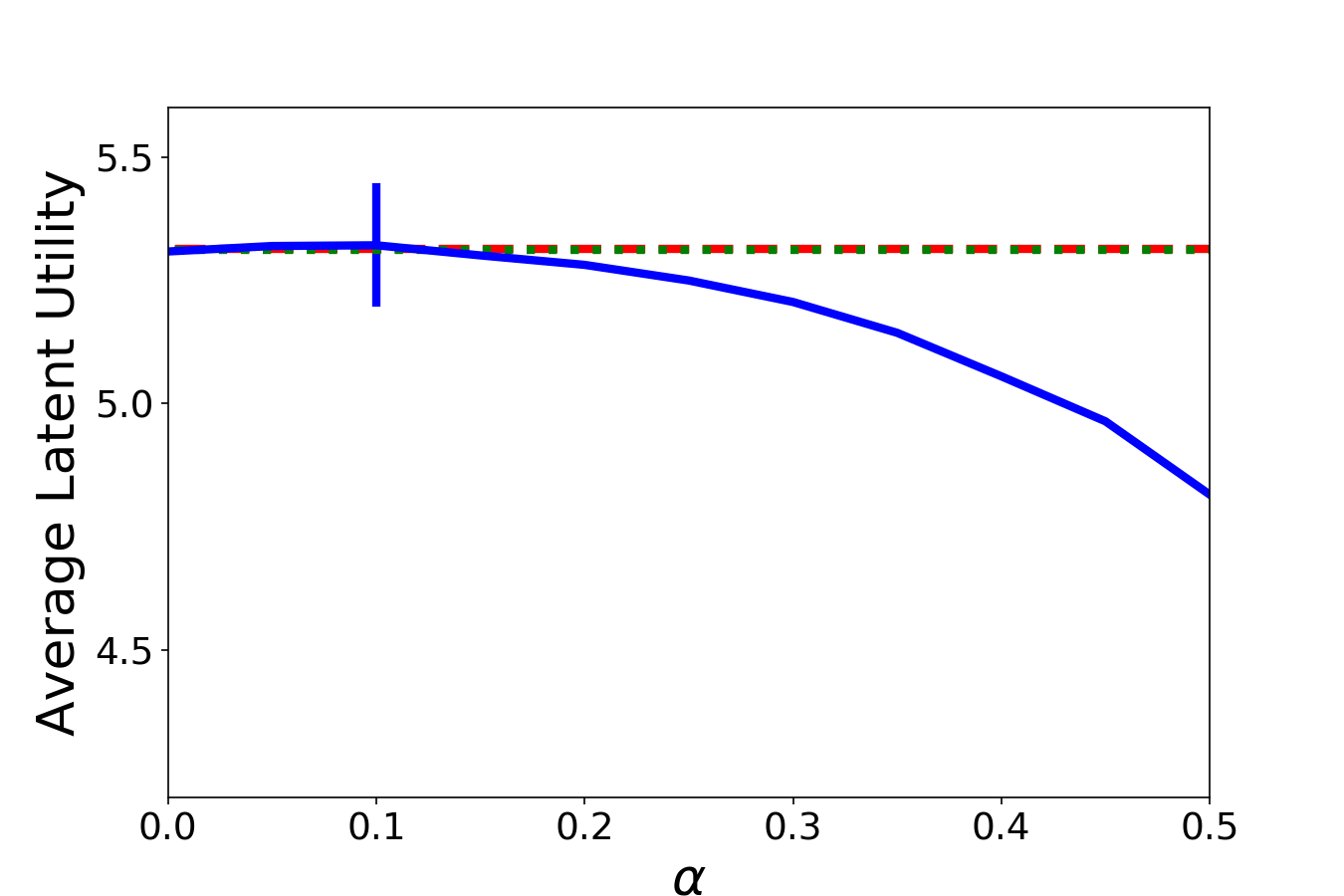}\hspace{-6mm} &
    \includegraphics[width=0.32\linewidth, trim={-0.1cm 0cm 0.5cm 1.7cm},clip]{./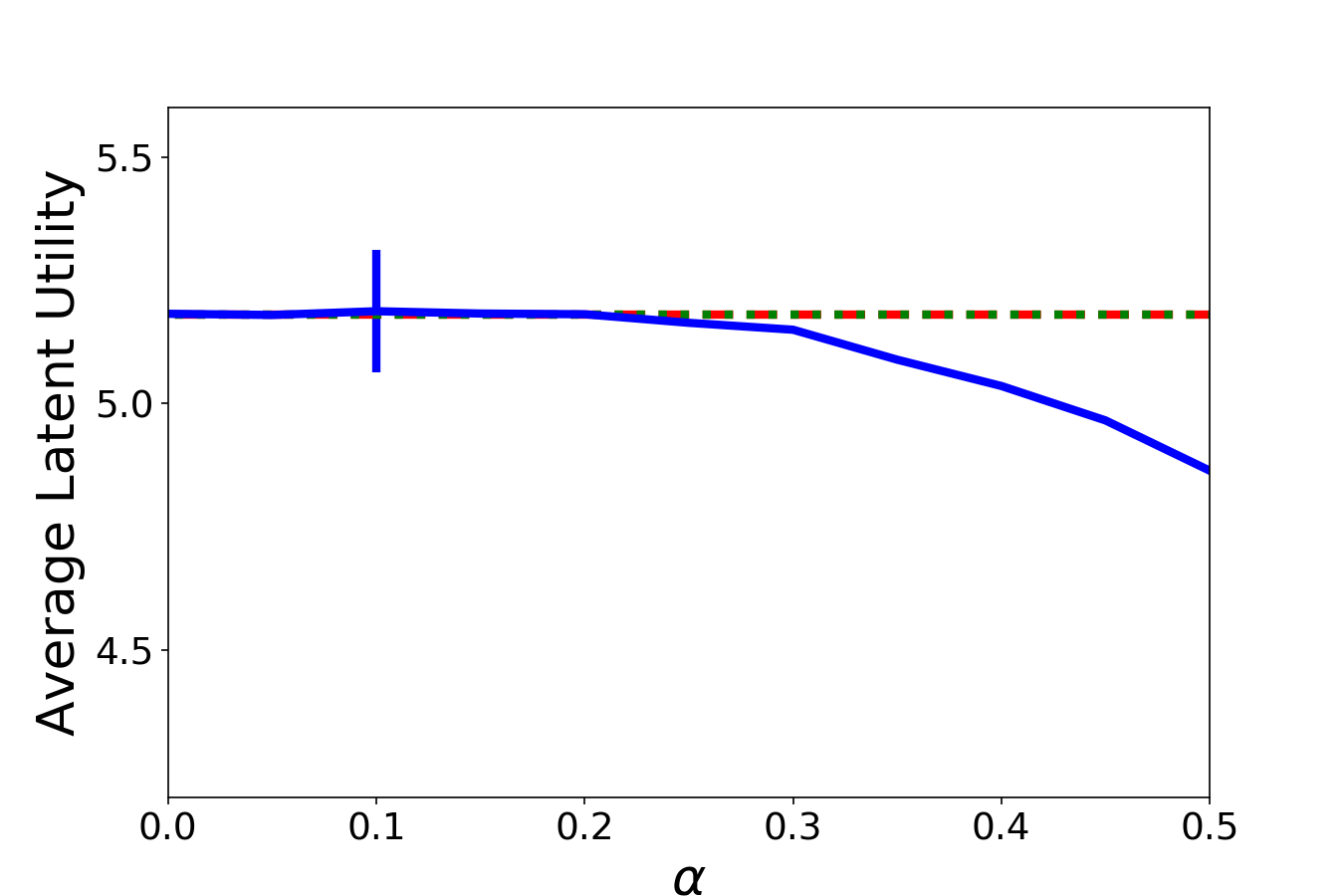}\hspace{-6mm} &
    \includegraphics[width=0.32\linewidth, trim={-0.1cm 0cm 0.5cm 1.7cm},clip]{./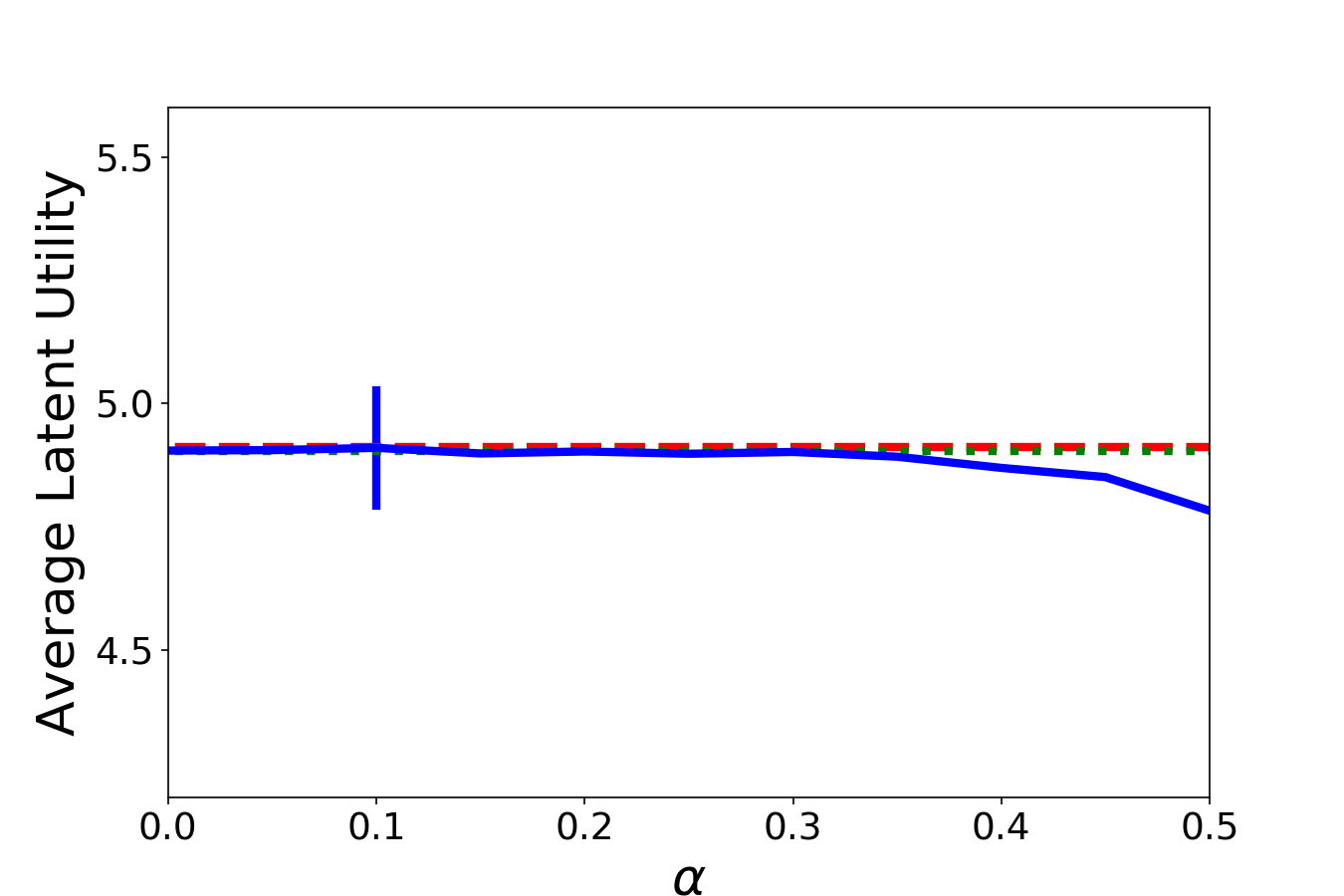}\hspace{-6mm}\\
    \vspace{-19mm}\begin{sideways}$\beta=\nfrac{1}{2}$\white{........}\end{sideways}&&&\\
    &
    \includegraphics[width=0.32\linewidth, trim={-0.1cm 0cm 0.5cm 1.7cm},clip]{./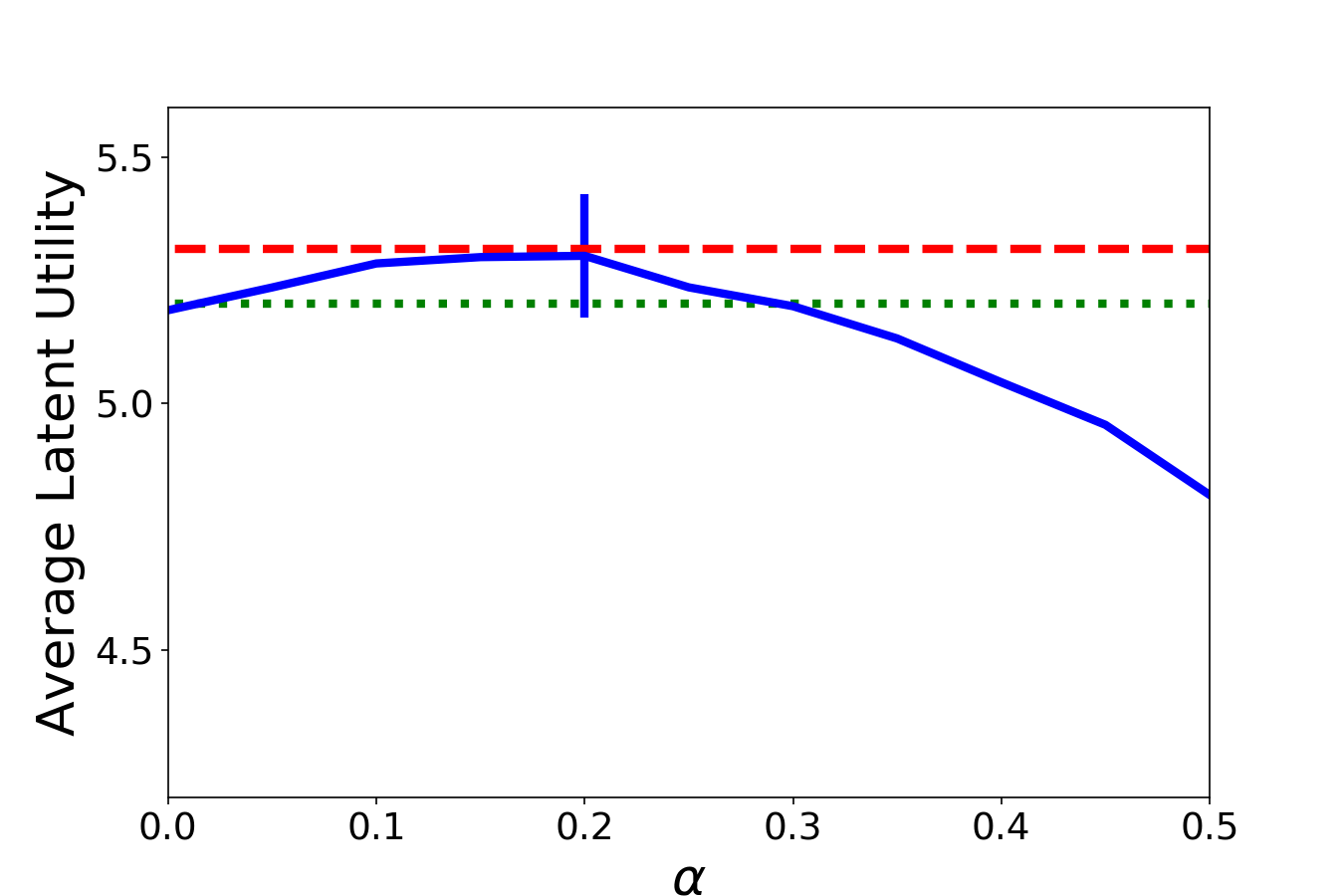}\hspace{-6mm} & %
    \includegraphics[width=0.32\linewidth, trim={-0.1cm 0cm 0.5cm 1.7cm},clip]{./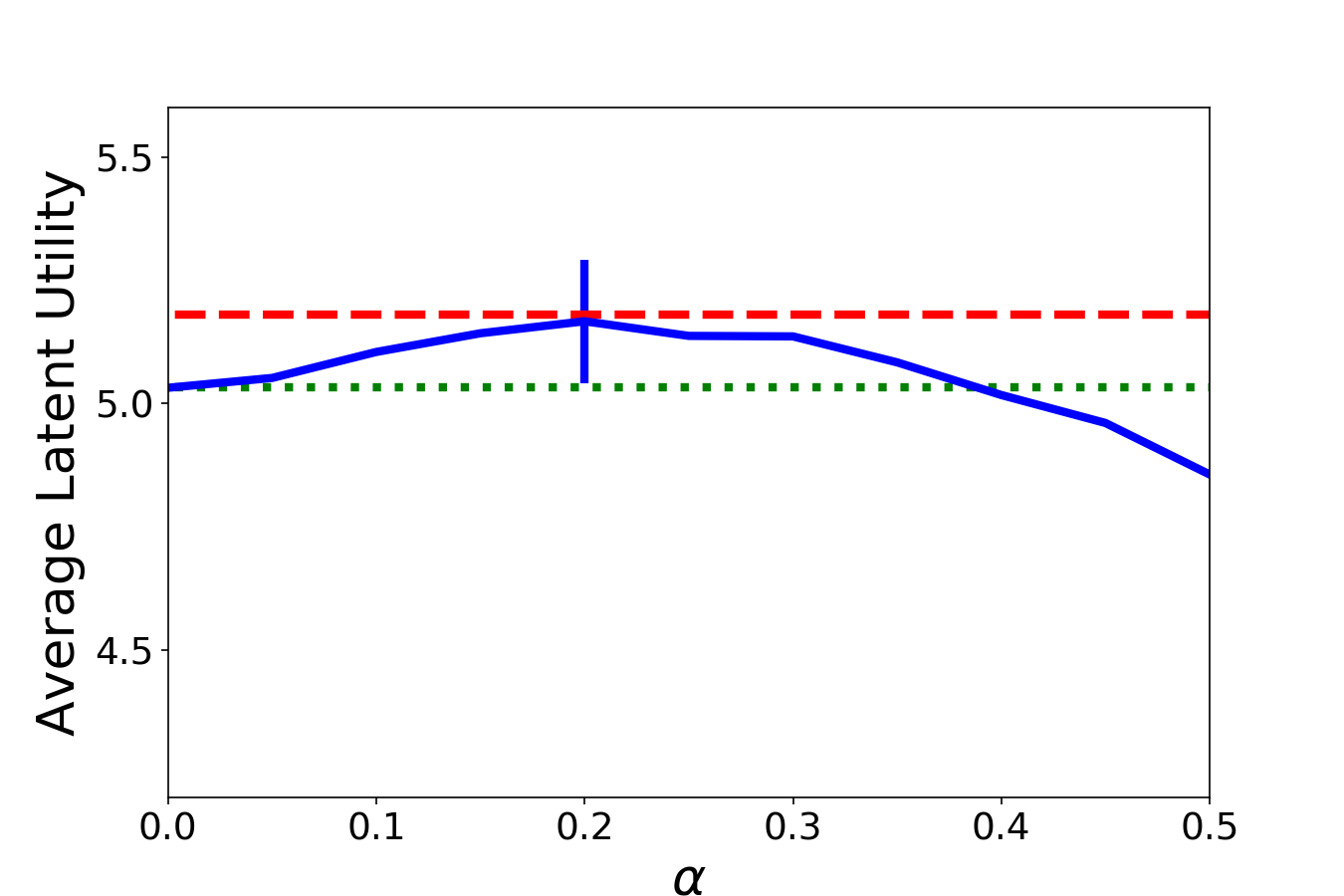}\hspace{-6mm} & %
    \includegraphics[width=0.32\linewidth, trim={-0.1cm 0cm 0.5cm 1.7cm},clip]{./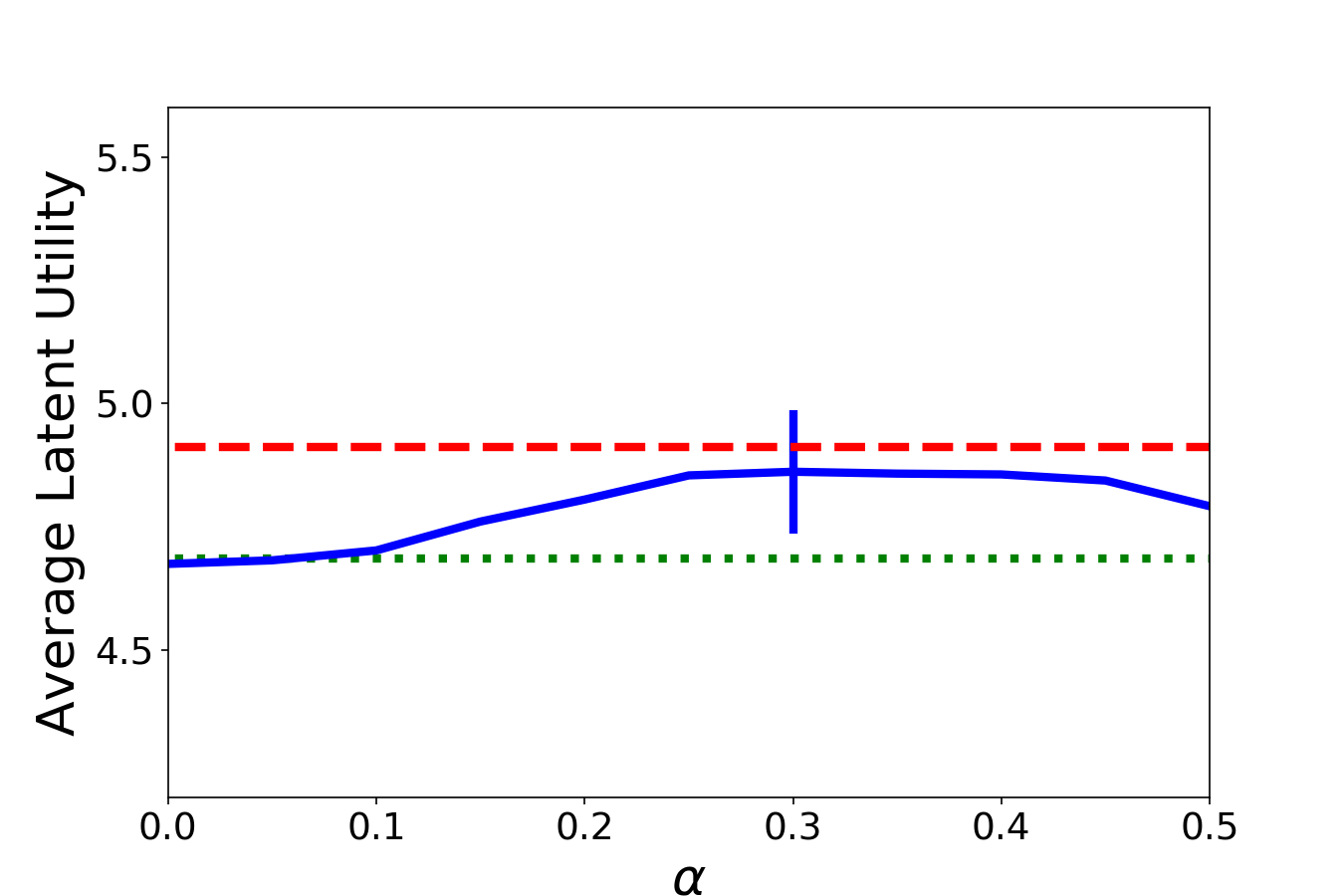}\hspace{-6mm}\\
    \vspace{-19mm}\begin{sideways}$\beta=\nfrac{1}{4}$\white{........}\end{sideways}&&&\\
    &
    \includegraphics[width=0.32\linewidth, trim={-0.1cm 0cm 0.5cm 1.7cm},clip]{./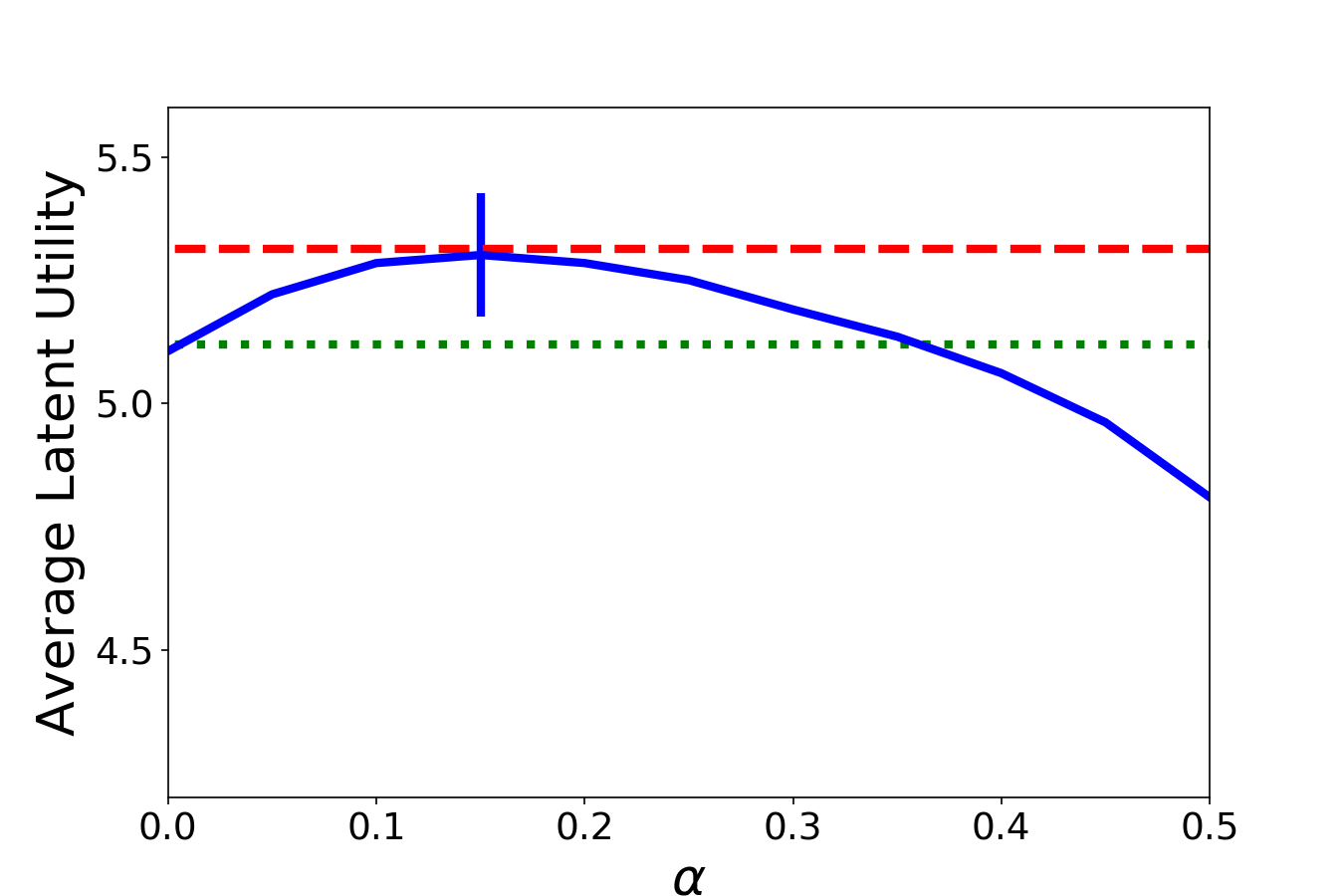}\hspace{-6mm} &
    \includegraphics[width=0.32\linewidth, trim={-0.1cm 0cm 0.5cm 1.7cm},clip]{./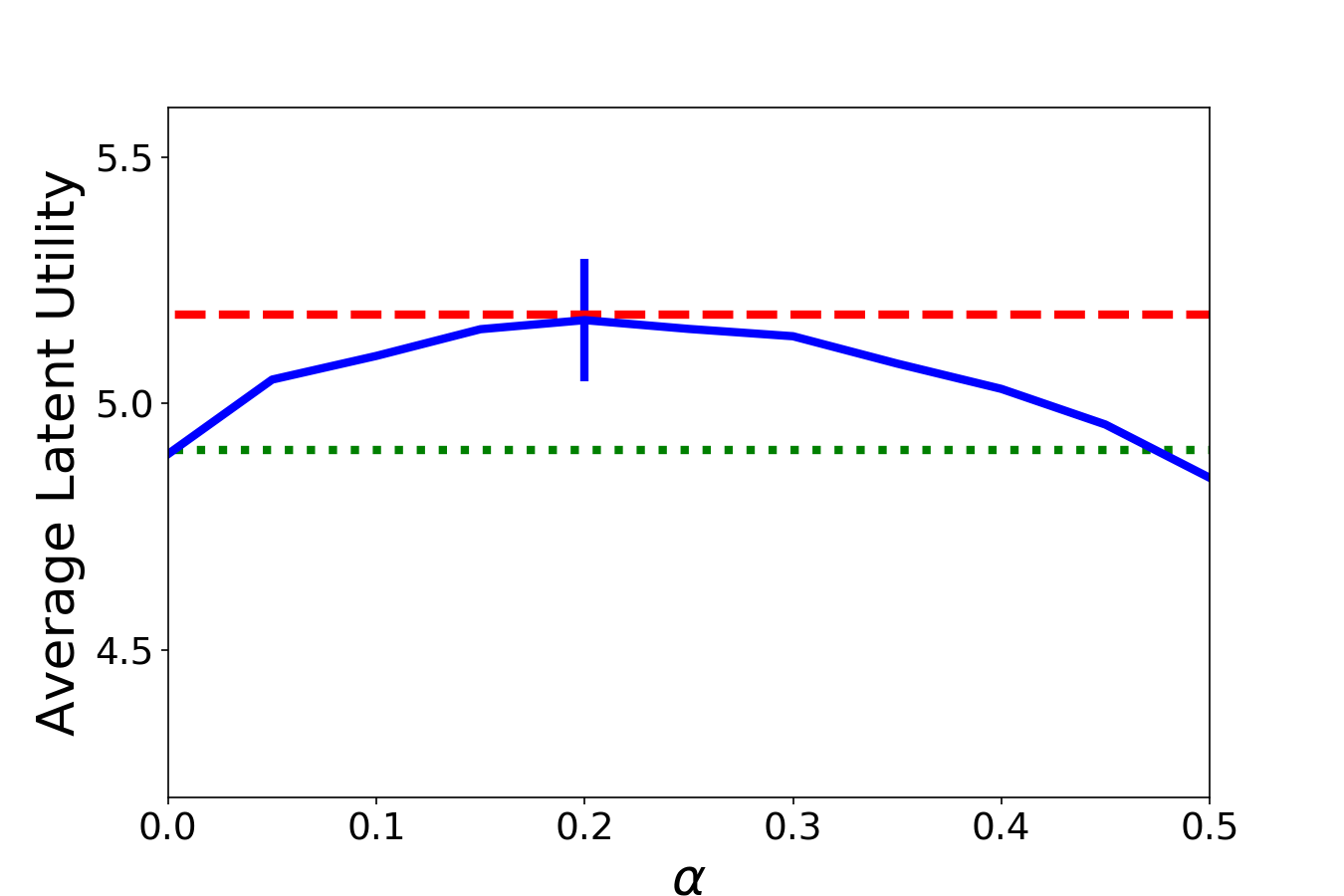}\hspace{-6mm} &
    \includegraphics[width=0.32\linewidth, trim={-0.1cm 0cm 0.5cm 1.7cm},clip]{./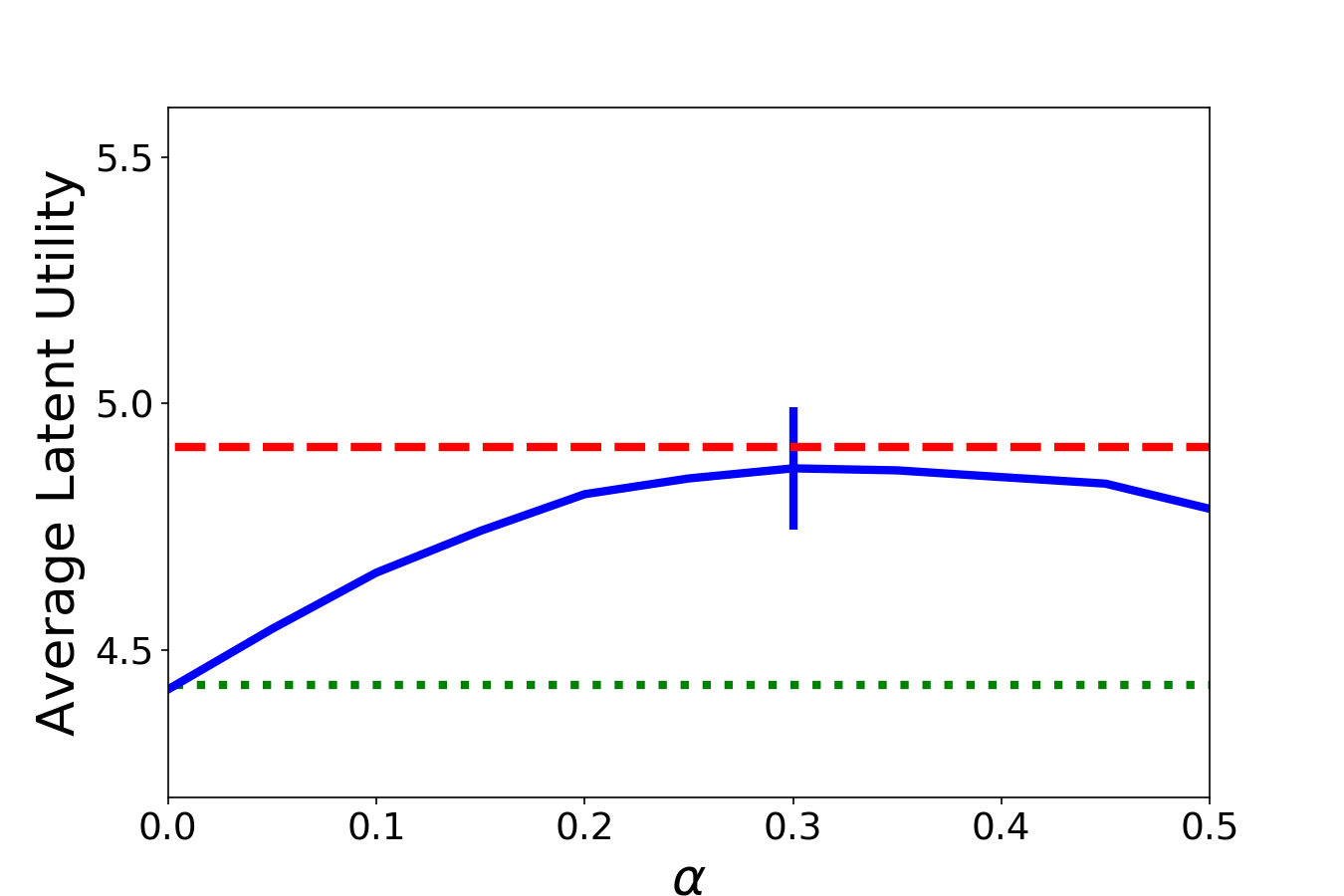}\hspace{-6mm}\\
    \hline
  \end{tabular}\vspace{2mm}
  \caption{
  {\em {Empirical results on IIT-JEE 2009 dataset (with DCG):}}
  We plot the latent utilities, $U_{\mathcal{D},v}(\alpha, \beta)$, $U_{\mathcal{D},v}(0, \beta)$, and $U_{\mathcal{D},v}(0,1)$ obtained by \cons{}, \uncons{} and \opt{} respectively
  (see Equation~\eqref{expected_utility} for the definition of $U_{\mathcal{D},v}(\cdot,\cdot)$);
  we average over values over $5\cdot10^3$ trials.
  Each plot represents an instance of the problem for a given value of implicit bias parameter $\beta$ and the ratio of the size, $m_b$, of the underprivileged group, to the size $m_a$, of the privileged group.
  The bar represents the optimal constraint $\alpha$: where we require the ranking to place at least $k\alpha$ candidates in the top $k$ positions of the ranking for every position $k$.
  We note that our constraints attain close to the optimal latent utility for the right choice of $\alpha$. More notably, they significantly outperform the unconstrained setting for a wide range of $\alpha$, lending robustness to the approach.
  Even when there is no observed bias ($\beta = 1$), adding constraints does not appear to significantly impact the utility of the ranking unless the constraints are very strong.
  }
  \label{fig_2}
\end{table*}

\subsection{Effectiveness of constraints}\label{sec_exp_implicit_bias}

We now evaluate the effectiveness of the constraints as an intervention for implicit bias~(see Section~\ref{sec:implicitbias}).
For this evaluation, we assume that the JEE scores represent the true latent utility of a candidate, and we assume these scores are distributed for male and female students according to the fitted distributions  $\mathcal{D}_a$ and $\mathcal{D}_b$ respectively.
We then consider the case where $k$ students, $m_a$ from group $G_a$ and $m_b$ from group $G_b$ apply to a job where the hiring manager has implicit bias $\beta$ against group $G_b$.
Here, \uncons{} would rank the candidates according to the biased utilities, \opt{} would rank the candidates according to their latent utilities (which is apriori impossible due to the implicit bias), and the proposed solution \cons{} would provide the optimal constrained ranking satisfying constraint parameter $\alpha$ using the fast-greedy algorithm described in~\cite{celis2018ranking}.
More formally, the rankings are
\begin{align*}
  \text{ \uncons{} } &\coloneqq \argmax\nolimits_{x}\mathcal{W}(x,v,\hat{w}),\\
  \text{ \cons{} } &\coloneqq \argmax\nolimits_{x\in \mathcal{K}(L(\alpha))}\mathcal{W}(x,v,\hat{w}),\\
  % %
  \text{ \opt{} } &\coloneqq \argmax\nolimits_{x}\mathcal{W}(x,v,w).
\end{align*}

We report the average latent utilities $U_{\mathcal{D},v}(\alpha, \beta)$, $U_{\mathcal{D},v}(0, \beta)$, and $U_{\mathcal{D},v}(0,1)$ obtained by \cons{}, \uncons{}, and \opt{} respectively
(see Equation~\eqref{expected_utility} for the definition of $U_{\mathcal{D},v}(\cdot,\cdot)$).

\subsubsection{Parameters}
We let {$m=m_a+m_b \coloneqq 1000$}, $k\coloneqq100$, $v_k\coloneqq \nfrac{1}{\log(k+1)}$ and vary the implicit bias parameter $\beta\in \{1/4, 1/2, 1\}$ and {$m_b \in \{\nfrac{m}{2}, \nfrac{m}{3}, \nfrac{m}{4}\}.$}
In the JEE dataset, $m_b\approx \nfrac{m}{4}$ is a representative number.
The position discount, $v_k=\nfrac{1}{\log(k+1)}$ corresponds to DCG~\cite{DCG,kanoulas2009empirical,wang2013theoretical} which is popularly used in practice.
We vary $\alpha$ from 0 to 0.5, i.e., from no constraint to the case where half of the candidates in the ranking must be from group $G_b$.
We report the average latent utilities $U_{\mathcal{D},v}(\alpha, \beta)$, $U_{\mathcal{D},v}(0, \beta)$, and $U_{\mathcal{D},v}(0,1)$; we take the average over $5000$ trials in Table~\ref{fig_2}.

\subsubsection{Results}\label{results_of_jee_exp}
We observe that the constraint intervention can significantly increase the latent utility of the ranking, even when,  $\mathcal{D}_a$ and $\mathcal{D}_b$ are non-uniform, and are not exactly the same.
The extent of the improvement depends both on the degree of implicit bias $\beta$, and the fraction of the underrepresented group $m_a/m$ that appears in the dataset.
As expected from Theorem~\ref{thm_const_increase_utility}, there exists an $\alpha$ for which the latent utility from the optimal ranking can be attained.
More generally, we observe that the constraints increase the latent utility when $\alpha$ is such that it ensures no more than proportional representation -- i.e., when the constraint ensures that the ranking reflects the group percentages in the underlying population, even when the bias parameter is small.

We conduct similar simulations without a discounting factor -- i.e., with just selection with no ranking, and observe similar results in Section~\ref{without_dcg_results}.
We also observe similar results on the Semantic Scholar Research Corpus, where the distributions are heavy tailed and hence constitute a very different class of utilities (see Section~\ref{sec_extnd_empirical}).
Overall, these findings suggest that the intervention is robust and across different latent utility distributions, population sizes, and extent of bias, and that the exact parameters need not be known to develop successful interventions.

\subsection{IIT supernumerary seats for women}\label{sec_supernumerary}

If we assume that the scores of the candidates are a true measure of their academic ``potential'', then any scheme which increases the number of underrepresented candidates admitted is bound to decrease the average ``potential'' of candidates at the institute.
However, among candidates of equal ``potential'', those from an underprivileged group are known to perform poorer on standardized tests~\cite{walton2009latent}.
In India, fewer girls than boys attend primary school~\cite{alderman1998gender,censusinfo}, many of whom are forced to drop-out of schools to  help with work at home or get married~\cite{girl_drop_out}.
Therefore, we expect a female student who has the same score as a male student, say in IIT-JEE, to perform better than the boy if admitted.
This is a societal bias, which, while different in nature than implicit bias, is another reason through which utilities can be systematically wrong against a particular group. Hence, the constraint approach presented is equally applicable as we illustrate in this section.
In effect, it means that the scores are in fact biased, and the true latent utility of a candidate from $G_b$ is in fact larger than what is reflected from their score.

To account for this and improve the representation of women in IITs, in 2018, additional ``supernumerary'' seats for women were introduced.
This increases the capacity of all majors in all IITs by creating additional seats which are reserved for women, such that women compose at least 14\% of each major, without decreasing the number of non-female candidates admitted.
More formally, if the original capacity of a major at an IIT was $C$, and the average number of females admitted in  this major in the past few years was $n_f$, then the scheme created $x$ additional seats such that $n_f+x\coloneqq 0.14 (C+x)$.
Where $(n_f+x)$ seats are reserved for females and the other $(C-n_f)$ seats are open to candidates of all genders.
Eligible women are granted admission on reserved seat first, and only when all reserved seats are filled a women admitted on a gender neutral seat~\cite{baswana2015joint}.

\subsubsection{Setup}
We assume the number of slots available in a given year for the IITs is $n\coloneqq 10^4$.\footnote{This number is close to the 9311 and 9576 students admitted into IITs in 2011 and 2012 which had the number of IITs as 2009.}
We assume that the true latent utility of a candidate from group $G_b$ is
given by a shift $\gamma>1$, such that if they attain a score $s_f$, then a true score would be $s_f^\prime \coloneqq (s_f+105)\cdot \gamma - 105$.\footnote{The shifts by 105 are to account fo the score range which can be negative.}
In our simulations, we use $\gamma\approx 1.076$, which results in a shifted distribution $\mathcal{D}^\prime_b$ with the same mean as $\mathcal{D}_a$.

Let the supernumerary scheme, {\sc Sup}$(\alpha)$, admit $n_{\text{\sc Sup}}(\alpha)\geq n$ candidates.
We know that {\sc Sup} always admits more candidates than our constraints approach, \cons{}$(\alpha)$, and the unconstrained approach \uncons{}, both of which choose $n$ candidates.
As such, it would be unfair to compare the average utility of the candidate selected by {\sc Sup} with that of \cons{}$(\alpha)$ or \uncons{}.
We define two more rankings $\overline{\text{\cons{}}}(\alpha)$ and $\overline{\text{\uncons{}}}$, which given an $\alpha$ select $n_{\text{\sc Sup}}(\alpha)$ candidates and are otherwise equivalent to {\cons{}$(\alpha)$} and \uncons{}.

We allocate $n$ seats using the simple constraints scheme, \cons{} and \uncons{} and $n_{\text{\sc Sup}}(\alpha)\geq n$ seats using {\sc Sup}$(\alpha)$, $\overline{\text{\cons{}}}(\alpha)$, and $\overline{\text{\uncons{}}}$.
and compare the average latent utilities of the candidates admitted by the scheme.
Here, we define the average latent utility from scheme ${\sc A}$ which admits $n({\sc A})$ candidates as
\begin{align*}
  &U({\sc A}) \coloneqq \frac{1}{n({\sc A})}\cdot \eE\nolimits_{w\gets \mathcal{D}_a, \mathcal{D}^\prime_b}[ \mathcal{W}(\tilde{x}, v, w)]\\
  &\text{where } \tilde{x}\coloneqq \max_{x\text{ satisfies }{\sc A}}  \mathcal{W}(\tilde{x}, v, \hat{w}).
\end{align*}
We vary $\alpha$ from {0.1 to 0.25}, i.e., from the fraction of women admitted if all candidates are admitted on basis of their scores, to the value which corresponds to proportional the representation based on the number of candidates appearing in IIT-JEE 2009.
We report the average latent utilities in Figure~\ref{analysis_of_jee_scheme}.
\begin{figure}[h!]
  \begin{center}
    \includegraphics[width=0.5\linewidth]{./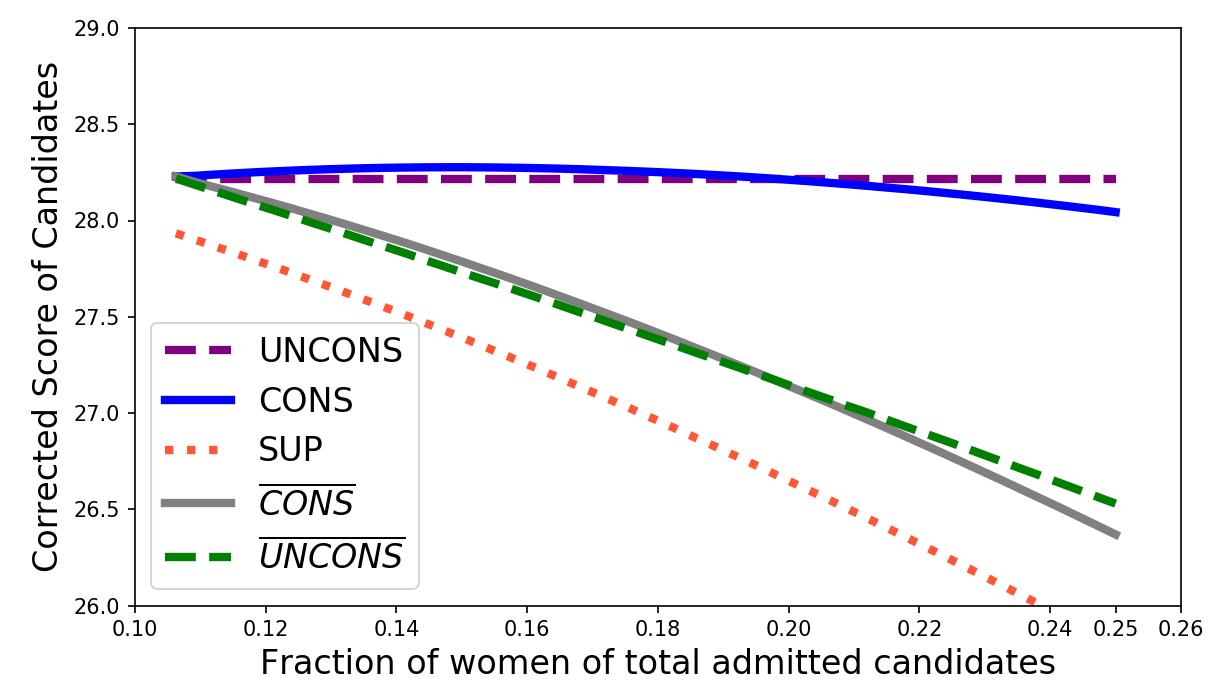}
    \hspace{2mm}
  \end{center}
  \vspace{-4mm}
  \caption{
  {\em Empirical results studying the effect of supernumerary seats for women:}
  We plot the average latent utilities of the {ranking schemes} we consider.
  The $y$-axis represents the utility per candidate, and
  the $x$-axis represents the lower-bound constraint $\alpha$. Our constrained interventions outperform the unconstrained variants up to approximately $20\%$ female seats, and always outperforms the existing supernumerary approach used in practice.
  }
  \label{analysis_of_jee_scheme}
\end{figure}
\subsubsection{Results}

We observe that the {\sc Sup}$(\alpha)$ always has an lower average utility that $\overline{\text\cons{}}(\alpha)$ scheme.
We note that the {\sc Sup}$(\alpha)$ and the $\overline{\text\cons{}}(\alpha)$ always select the same number of candidates.
The difference is that {\sc Sup}$(\alpha)$ could place all the underprivileged candidates at the end of the ranking, while $\overline{\text\cons{}}(\alpha)$ places at least $k\cdot \alpha$ underprivileged  candidates every $k$ positions.
We find that for any $\alpha>0.1$, {\sc Sup}$(\alpha)$ decreases the average {latent utility} of the admitted candidates.

Finally, we observe that for a range of $\alpha$ (from 11\%  to  19.4\%), \cons{} increases the average latent utility of the admitted candidates over \uncons{} and $\overline{\text{\cons{}} }(\alpha)$ increases the latent utility over  $\overline{\text{\uncons{}} }$,  i.e., for all $\alpha\in [0.11,0.194]$,
\begin{align*}
  \nfrac{1}{n}\cdot U(\text{\cons{}}(\alpha))>\nfrac{1}{n}\cdot U(\text{\uncons{}}),\\
  \nfrac{1}{n_{\text{\sc Sup}}(\alpha)}\cdot U(\overline{\text{\cons{}}}(\alpha))>\nfrac{1}{n}\cdot U(\overline{\text{\uncons{}}}).
\end{align*}
The optimal constraint for \cons{}$(\alpha)$ is $\argmax_{\alpha}U(\text{\cons{}}(\alpha))=0.15$.
Intuitively,  \cons{}$(\alpha)$ can increase the average utility by swapping a male candidate $i\in G_a$ and female candidate $j\in G_b$, such that $\hat{w}_j < w_i < w_j$.
Note that in \cons{}$(\alpha)$ additional candidates selected from $G_b$ compete against the lowest scoring candidates from $G_b$ in the ranking, instead of the average utility $U(\text{\cons{}}(\alpha))$ as in {\sc Sup}$(\alpha)$.

\begin{remark}
  Since  {\sc Sup}$(\alpha)$ was not designed to optimize the average utility, it is not surprising that $\overline{\text\cons{}}(\alpha)$ outperforms {\sc Sup}$(\alpha)$ in our experiment.
  However, the goal of this experiment is to study the effect our constraints approach against systematic biases different from implicit bias, by qualitatively comparing them an existing scheme ({\sc Sup}$(\alpha)$).
\end{remark}

\section{Discussion and limitations}\label{sec_limitations}
One could also consider other approaches to mitigate implicit bias.
For instance, in a setting where an interviewer ranks the candidates, we could ask interviewers to self-correct (rescale) their implicit bias.
However, anti-bias training, which aims to correct peoples' implicit bias, has been shown to have limited efficacy~\cite{noon2018pointless}.
Thus, even with training, we do not have a guarantee that the interviewer can self-correct effectively.
Adding interventions like the ones we consider do not require individuals to self-correct their perceptions.

Instead, we could ask interviewers to report their observed utilities and later rescale them.
However, we may not have an accurate estimate of $\beta$. As the interventions we consider are independent of $\beta$ (in Theorem~\ref{thm_const_increase_utility}), they can be applied in such a setting and would still recover the optimal utility (see Remark~\ref{rem_constraint_ind_of_beta}).
Furthermore, if interviewers are {\em explicitly} biased, they can give arbitrarily low scores to one group of candidates;
this would make any rescaling attempt insufficient.
By instead requiring a fixed number of candidates from each group, the interventions we consider are also robust against explicit bias, and perhaps this is why simple versions have been used in practice (e.g., Rooney rule~\cite{facebook_rr,  cavicchia-implicit-bias-Rooney}).

We crucially note that any such intervention will only have a positive end effect if the goal is sincere; a hiring manager who is biased against a group of people can simply not hire a person from that group, regardless of what ranking they are given or what representation is enforced throughout the interview process.
%
%
%\redc{However, we crucially note that any such intervention will only have a positive end effect if the goal is sincere (see Section~\ref{sec_limitations}).}
%
{Furthermore, while interventions such as the ones we describe here are a robust approach to correct for certain kinds of biases, they are only a single step in the process and alone cannot suffice.}
{It would be important to evaluate this approach as part of the larger ecosystem in which it is used in order to adequately measure its efficacy in practice.}

\section{Conclusion and future directions}
We consider a type of constraint-based interventions for re-ordering biased rankings derived from utilities with systematic biases against a particular socially salient group.
The goal of these interventions is to recover, to the best of our ability, the unbiased ranking that would have resulted from the true latent (unbiased) utilities.
We consider a theoretical model of implicit bias, and study the effect of such interventions on rankings in this setting.

We show that this family of constraint-based interventions are sufficient to mitigate implicit bias under this model; in other words, a ranking with the optimal latent utility can be recovered using this kind of intervention.
However, the optimal parameters of the intervention depend on  the specific utilities.
Towards understanding this further, we make a natural assumption that the utilities are drawn from a fixed distribution, and show that simple constraints recover the optimal utility under this bias model.
We focus on specific distributions, but believe that similar theoretical bounds would exist for other bounded distributions as discussed in Section~\ref{sec_utility_bound}.
{Rigorously analyzing necessary properties of distributions for which such bounds hold would be an interesting direction for future work.}

In this work we pose the problem in terms of implicit biases.
However, the source of the bias is not important; our results hold as long as there is systematic bias against one group as captured by our model in Section~\ref{sec:implicitbias}.
We briefly discuss a different type of social bias in Section~\ref{sec_supernumerary}.
Further, we note that this class of interventions is also robust against {\em explicit} bias (see Section~\ref{sec_limitations}).
Their robustness
is further supported by our empirical findings, in which we find that there exist optimal constraints $\alpha^\star$ for which the optimal latent utility is almost attained in expectation.
Importantly, we find that  the interventions are near-optimal even when the distributions are unbounded (e.g., lognormal distributions; see Section~\ref{sec_extnd_empirical}).
Further, the interventions remain near-optimal even when the latent utility distribution of the underprivileged group is similar, but not identical, to that of the privileged group (see Section~\ref{sec_exp_implicit_bias}).
More generally, we observe that the intervention improves the latent utility of the ranking for a {wide range} of $\alpha$, with the highest improvement roughly centered around proportional representation.
This gives an interesting rule of thumb, but more importantly shows the robustness of the method; without knowing the exact optimal $\alpha$, one can still improve the ranking's latent utility significantly.
From these observations, we expect this class of constraint-based intervention to be successful for a wide class of settings; {exploring its limitations and developing clear guidelines on when and how to use the interventions in a particular use case would be an important avenue for further work.}

Lastly, we note that while we phrase the majority of our results in light of rankings, they also have implications for the subset selection problem (where a group must be chosen, but need not be ordered) by taking the discounting factor $v_k=1$ for all positions $k\in [n]$. %
This, in effect, eliminates the importance of the order and the total utility depends only on the set of people or items selected.
In particular, our results answers a question asked in \cite{KleinbergR18} on the efficacy of the $\ell$-th order Rooney Rule when the distribution of utilities is uniform.
We further report empirical results for subset selection, which follow the setup and conclusions discussed in Section~\ref{without_dcg_results}.

In summary, simple constraint-based interventions appear to be highly effective and robust for a wide variety of biases, distributions and settings; {it is our hope that they be adopted and studied further as a simple yet powerful tool to help combat biases in a variety of applications.}

\section{Extended empirical results}
\subsection{Additional results with JEE scores}~\label{without_dcg_results}
We present our empirical results for the IIT-JEE dataset without position-discount as described in Section~\ref{sec_exp_implicit_bias} in Table~\ref{table_2}.

\subsection{Empirical results with the Semantic Scholar Open Research Corpus}\label{sec_extnd_empirical}
\begin{figure*}[h!]
  \begin{center}
    \subfigure[
    Effect of the threshold for predicting names on the total number of male and female authors.
    The linechart shows the number of authors categorized as men or women on setting different probability thresholds for categorizing first names.
    \label{fig_name_thresh}
    ]
    {
    \includegraphics[width=0.45\linewidth]{./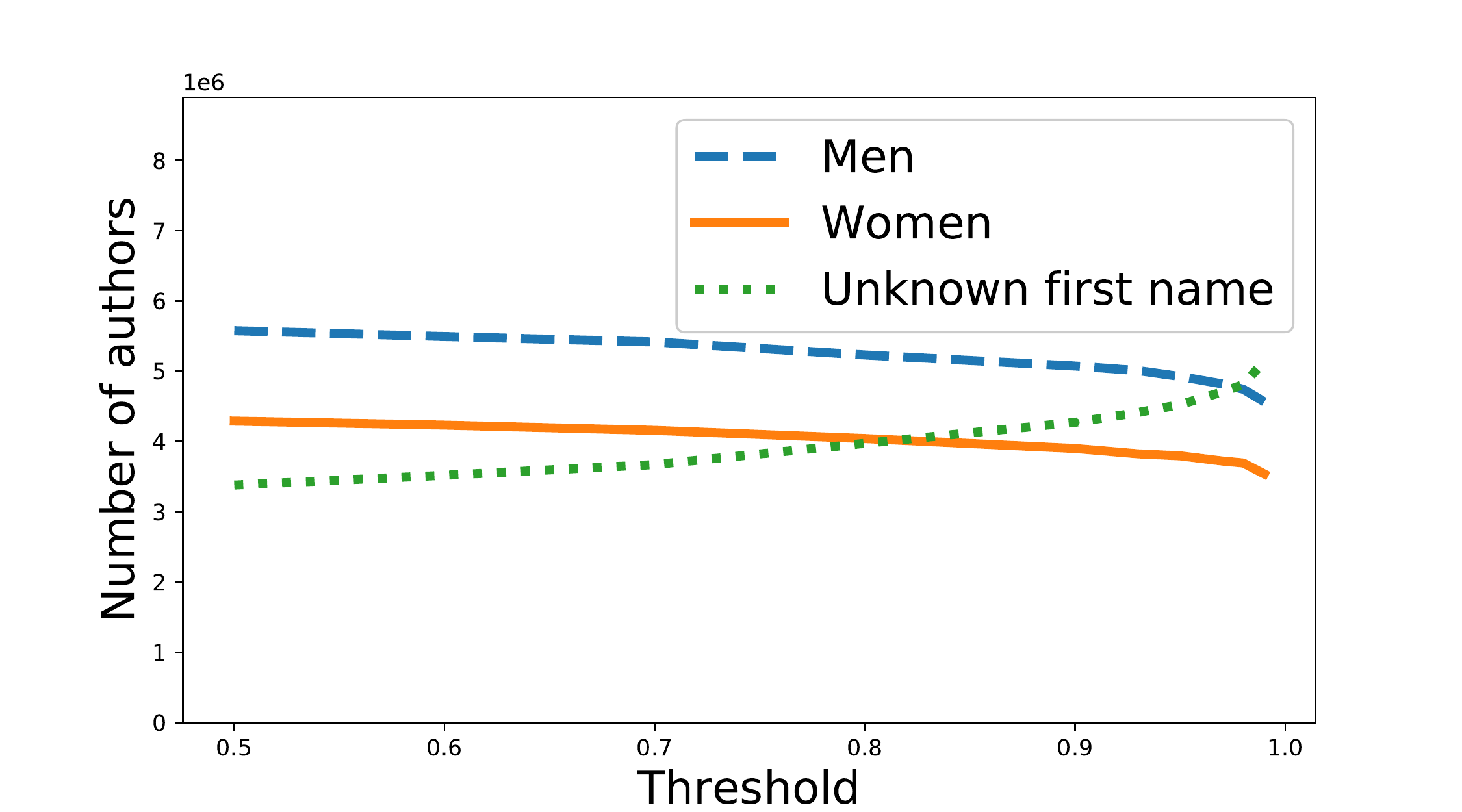}
    }
    \hspace{2mm}
    \subfigure[
    The linechart shows the total number of authors with their first publication before a particular year.
    We note that the dataset largely contains authors who published their first paper after 2000.
    \label{fig_num_auth_year}
    ]
    {
    \includegraphics[width=0.35\linewidth]{./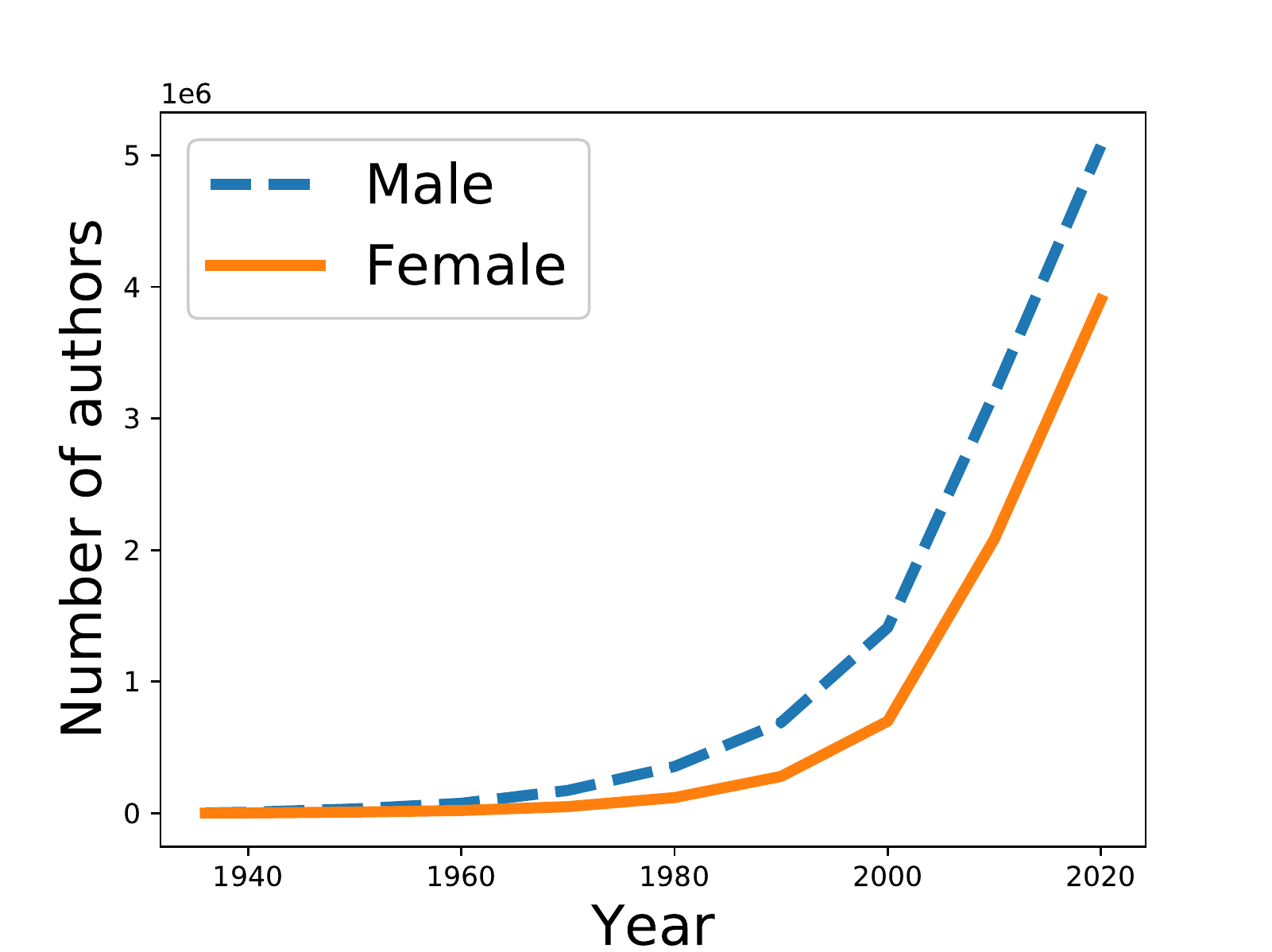}
    }
    \\
    \subfigure[
    Distribution of total citations of men and women
    who published their first paper after 1980.
    \label{fig_4_3}
    ]
    {
    \includegraphics[width=0.5\linewidth]{./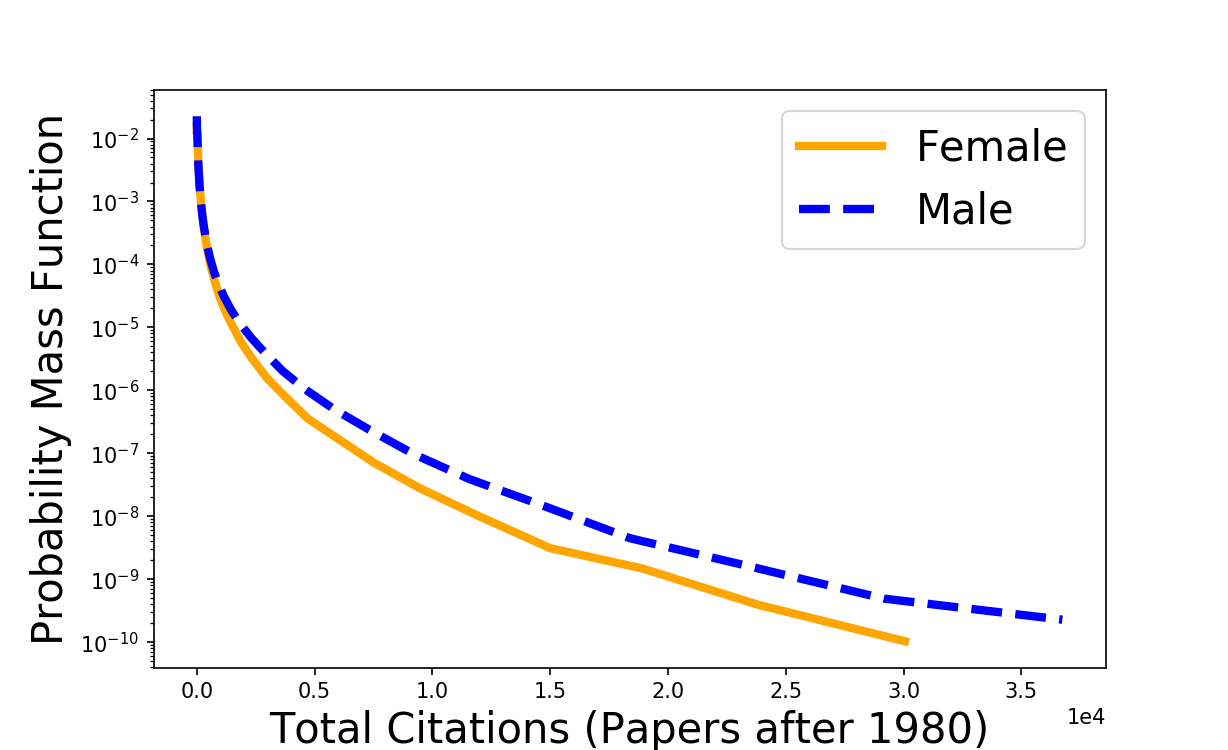}
    }
  \end{center}
  \caption{
  {\em Statistics from the Semantic Scholar Open Research Corpus.}
  }
\end{figure*}
\subsubsection{Dataset}
The {\em Semantic Scholar Open Research Corpus} contains meta-data of 46,947,044 published research papers in Computer Science, Neuroscience, and Bio-medicine from 1936 to 2019 on Semantic Scholar.
The meta-data for each paper includes the list of authors of the paper, the year of publication, list of papers citing it, and journal of publication, along with other details.

\subsubsection{Cleaning dataset and predicting author gender}
While counting the total citations we remove all papers which did not have the year of publishing, this retains 98.14\% of all the papers.
The Semantic Scholar dataset did not contain the author's gender, which we predict from their first name using a publicly available dataset from the the US Social Security Administration~\cite{first_name_dataset}.
We predict an author's gender from their first name using a publicly available dataset of names from the the US Social Security Administration~\cite{first_name_dataset}.
This dataset contains first names and gender of all people born from 1890 to 2018 registered with the social security administration.
We first remove all the authors whose first name is shorter than 2 characters, as these are likely to be abbreviations.
This retains more that 75\% of the 17,805,885 unique authors.
Then, we categorize an author as female if more than 90\% people with the same first name were females, and as a male if more than 90\% people with the same first name were males, otherwise we drop the author.
This retains 67.75\% of the remaining authors, and results in 3900934 women and 5074426 men (43.46\% females).
We plot the trade off between the total authors retained and the threshold used while categorizing in Figure~\ref{fig_name_thresh}.

\subsubsection{Counting the total citations}
We approximate the distribution of total citations of men and women, who published their first paper after 1980.
This removes less than 5\% of the authors.
We present the distribution of Total Citations of men and women in Figure~\ref{fig_4_3}.
We found that the log-normal distribution gave a good fit for the distribution of total citations.

\subsubsection{Parameters}
We let {$m=m_a+m_b \coloneqq 1000$}, $k\coloneqq100$, $v_k=1$ and vary the implicit bias parameter $\beta\in \{1/4, 1/2, 1\}$ and {$m_b \in \{\nfrac{m}{2}, \nfrac{m}{3}, \nfrac{m}{4}\}.$}
We vary $\alpha$ from 0 to 0.5, i.e., from no constraint to the case where half of the candidates in the ranking must be from group $G_b$.
We report the average latent utilities $U_{\mathcal{D},v}(\alpha, \beta)$, $U_{\mathcal{D},v}(0, \beta)$, and $U_{\mathcal{D},v}(0,1)$; we take the average over $5000$ trials in Table~\ref{table_3}.

\subsubsection{Results}
We observe similar results as with the JEE scores dataset.
We refer the reader to Section~\ref{results_of_jee_exp} for a discussion of our observations.

\begin{table*}[t!]
  \begin{tabular}{|c|ccc|}
    \hline
    & $\nfrac{m_b}{m}=\nfrac{1}{4}$  & $\nfrac{m_b}{m}=\nfrac{1}{3}$  & $\nfrac{m_b}{m}=\nfrac{1}{2}$  \\
    \hline
    &&&\\
    &&&\\
    \vspace{-20mm}\begin{sideways}$\beta=1$\white{........}\end{sideways}&&&\\
    &
    \includegraphics[width=0.32\linewidth, trim={-0.1cm 0cm 0.5cm 1.7cm},clip]{./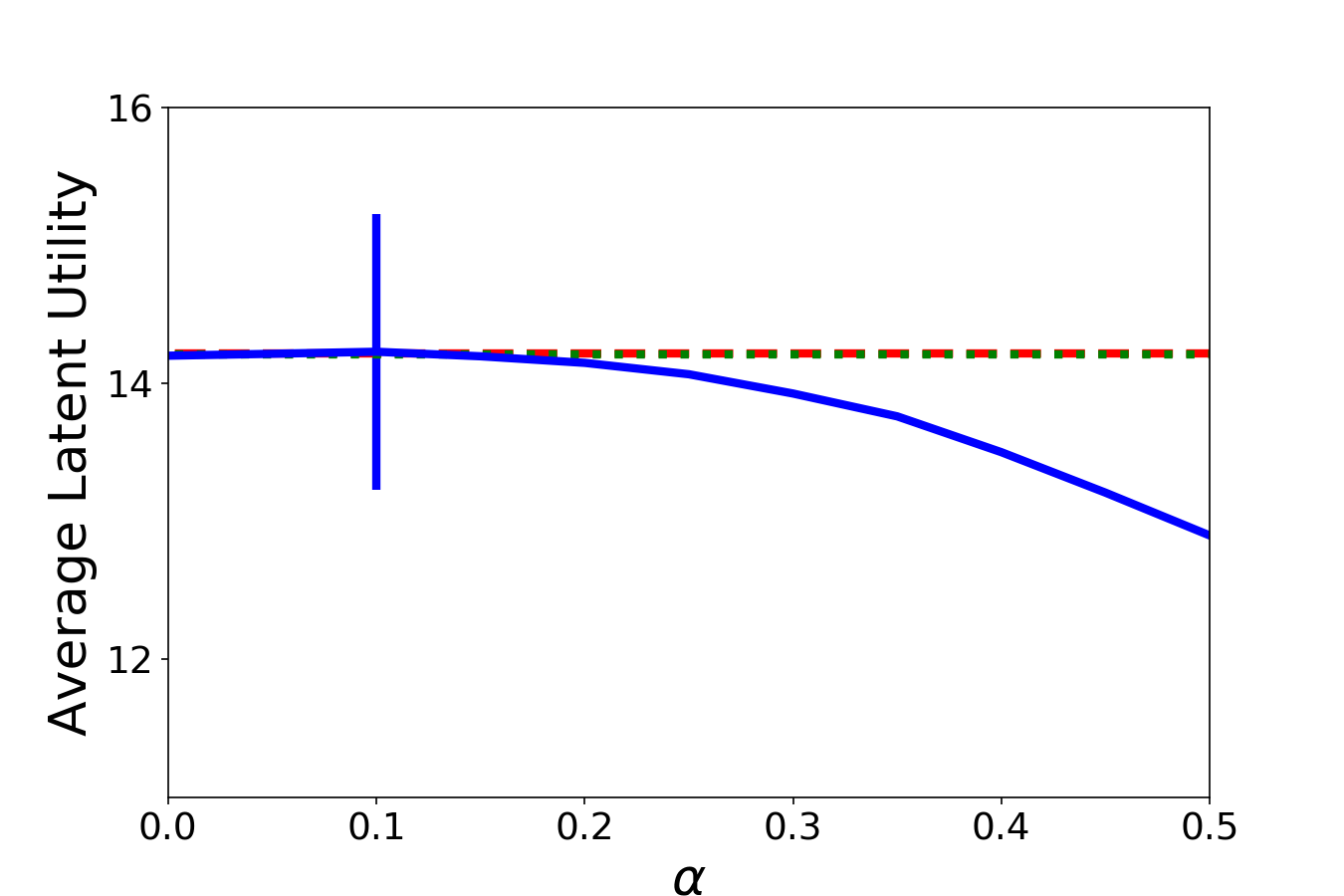}\hspace{-6mm} &
    \includegraphics[width=0.32\linewidth, trim={-0.1cm 0cm 0.5cm 1.7cm},clip]{./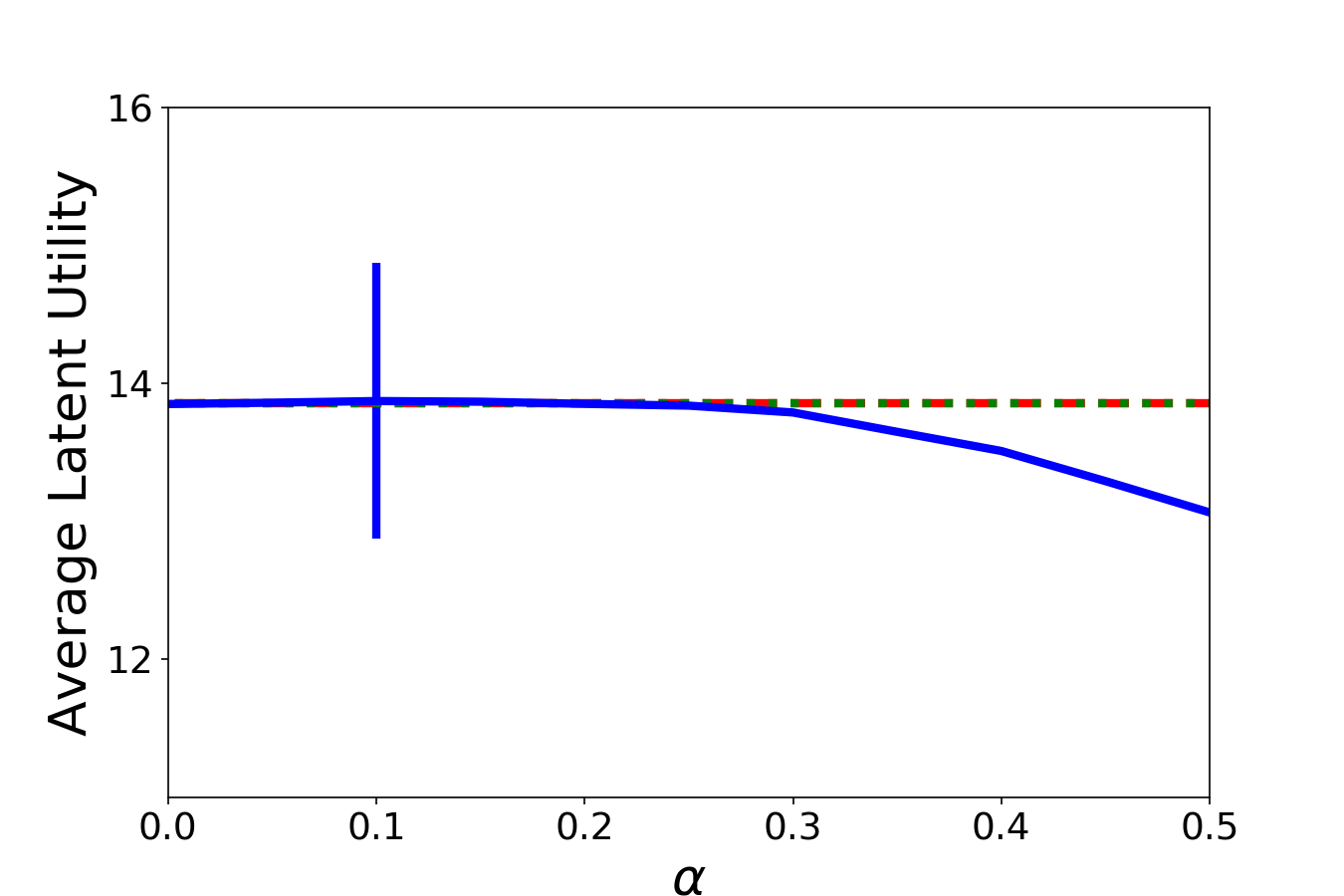}\hspace{-6mm} &
    \includegraphics[width=0.32\linewidth, trim={-0.1cm 0cm 0.5cm 1.7cm},clip]{./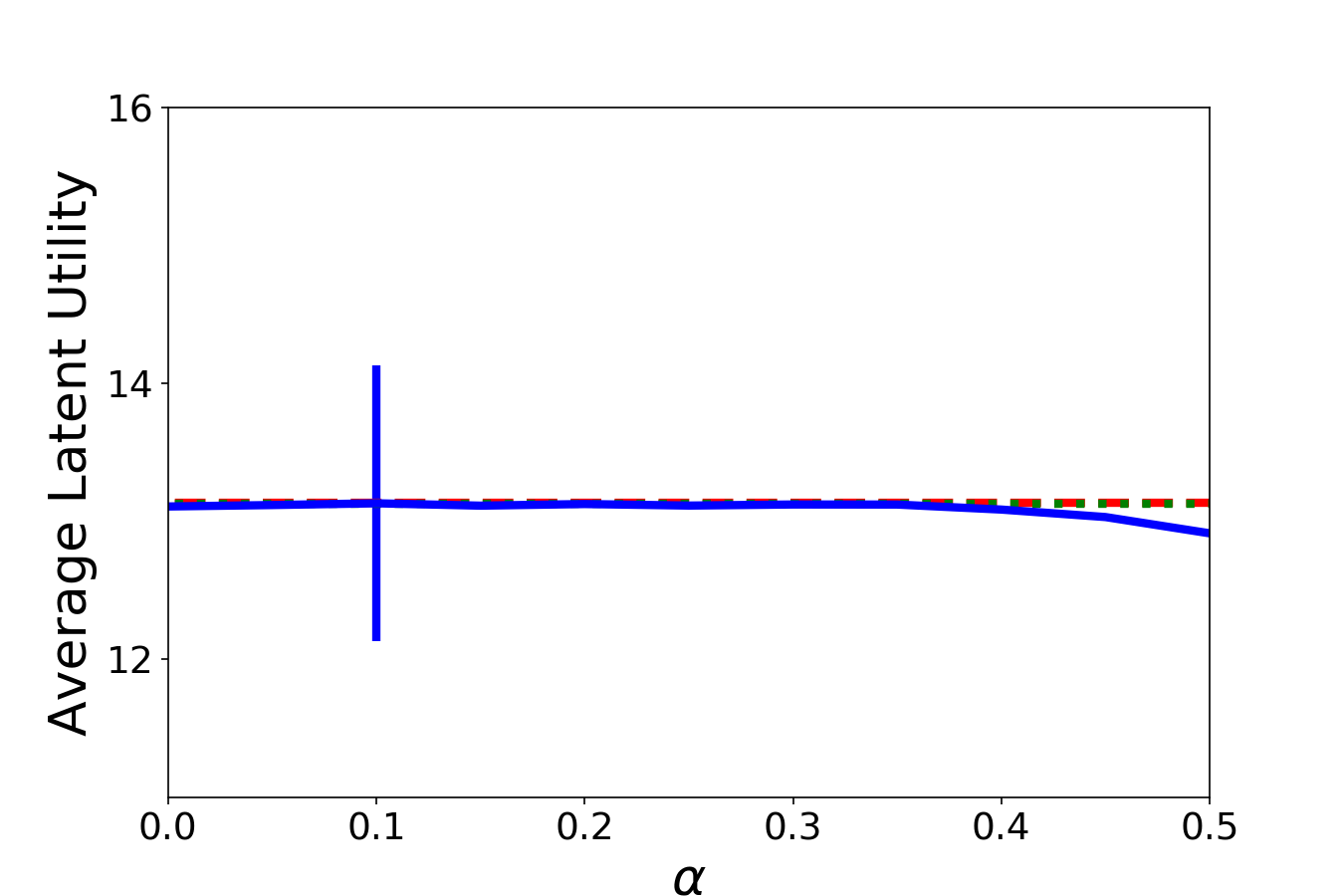}\hspace{-6mm}\\
    \vspace{-19mm}\begin{sideways}$\beta=\nfrac{1}{2}$\white{........}\end{sideways}&&&\\
    &
    \includegraphics[width=0.32\linewidth, trim={-0.1cm 0cm 0.5cm 1.7cm},clip]{./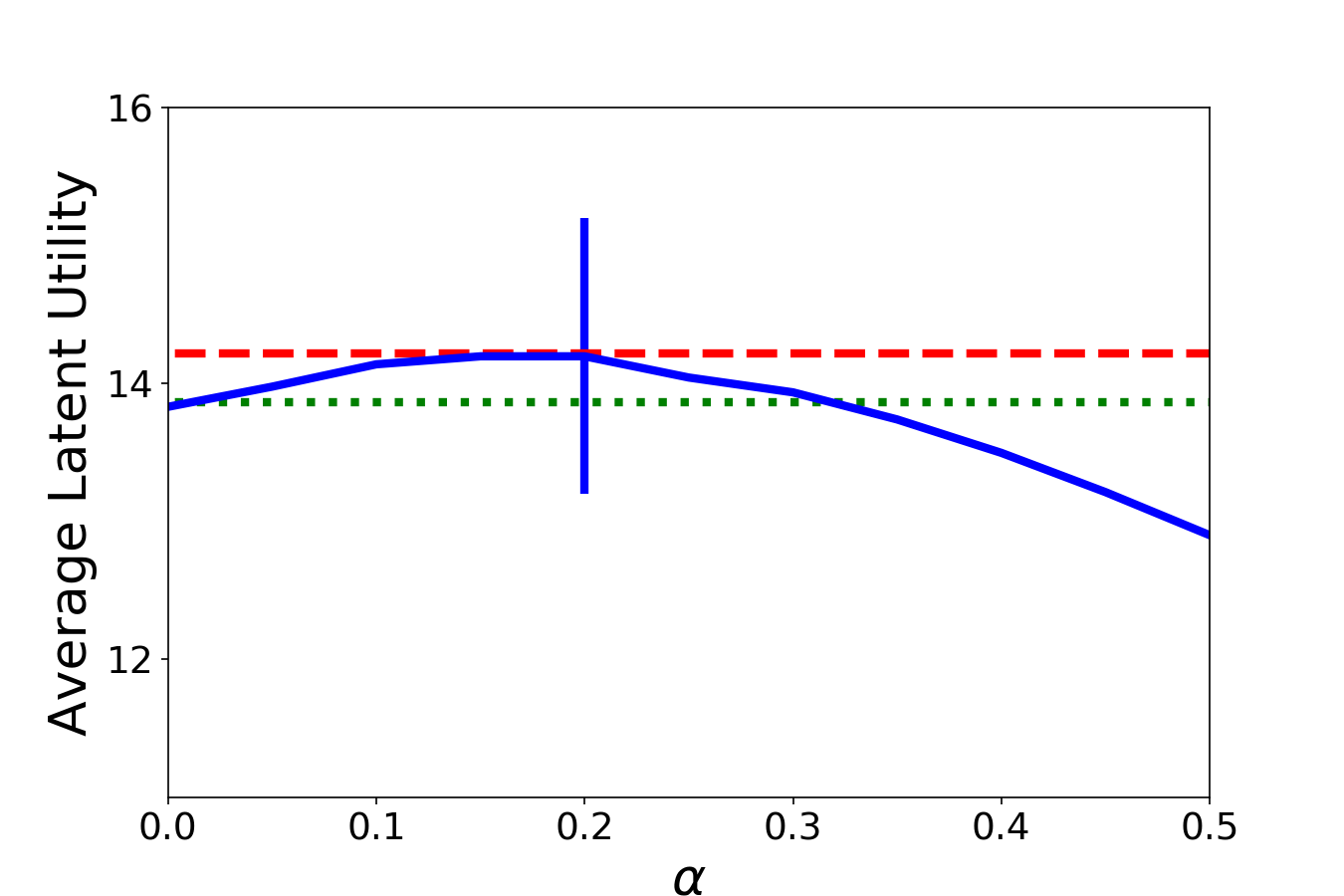}\hspace{-6mm} & %
    \includegraphics[width=0.32\linewidth, trim={-0.1cm 0cm 0.5cm 1.7cm},clip]{./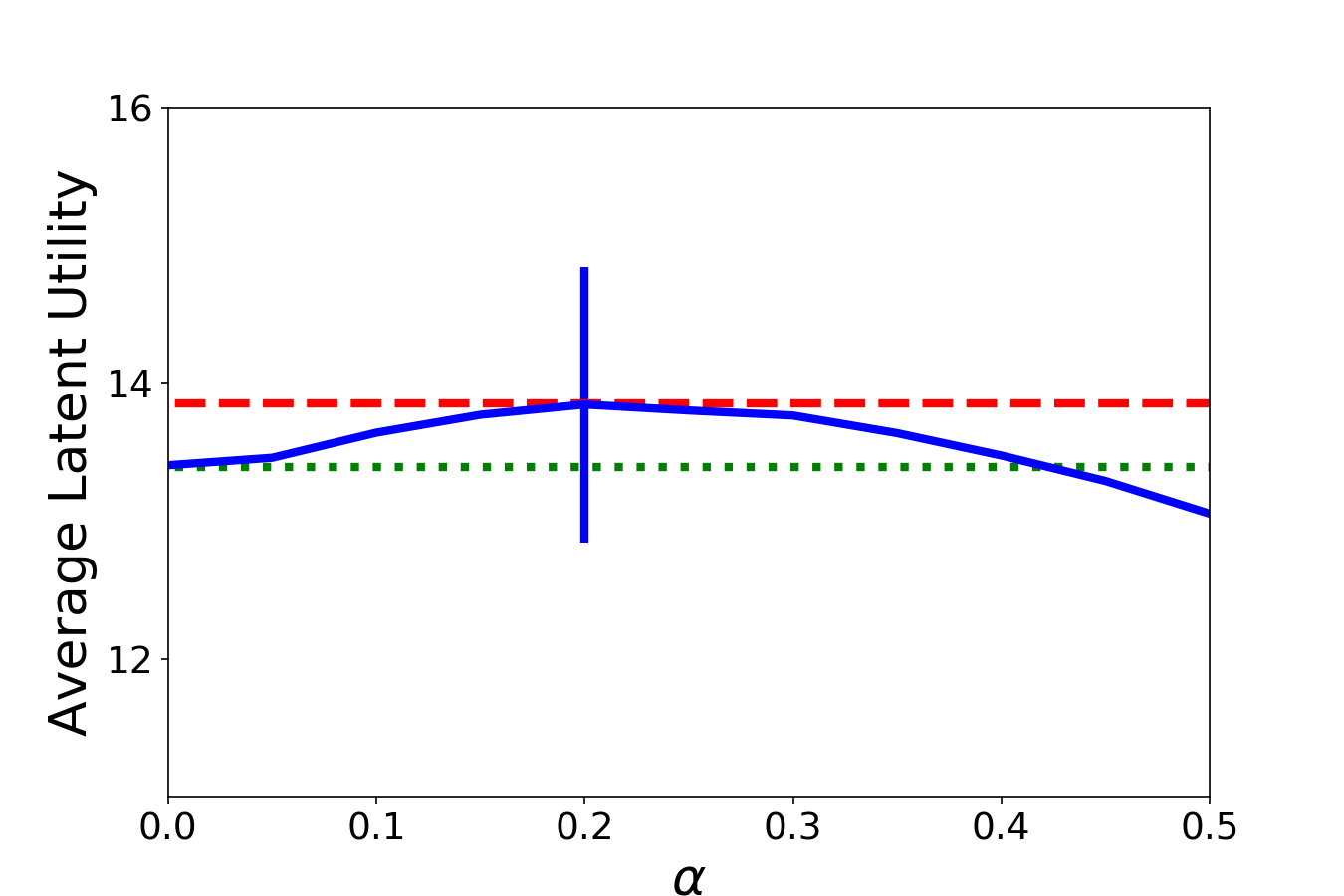}\hspace{-6mm} & %
    \includegraphics[width=0.32\linewidth, trim={-0.1cm 0cm 0.5cm 1.7cm},clip]{./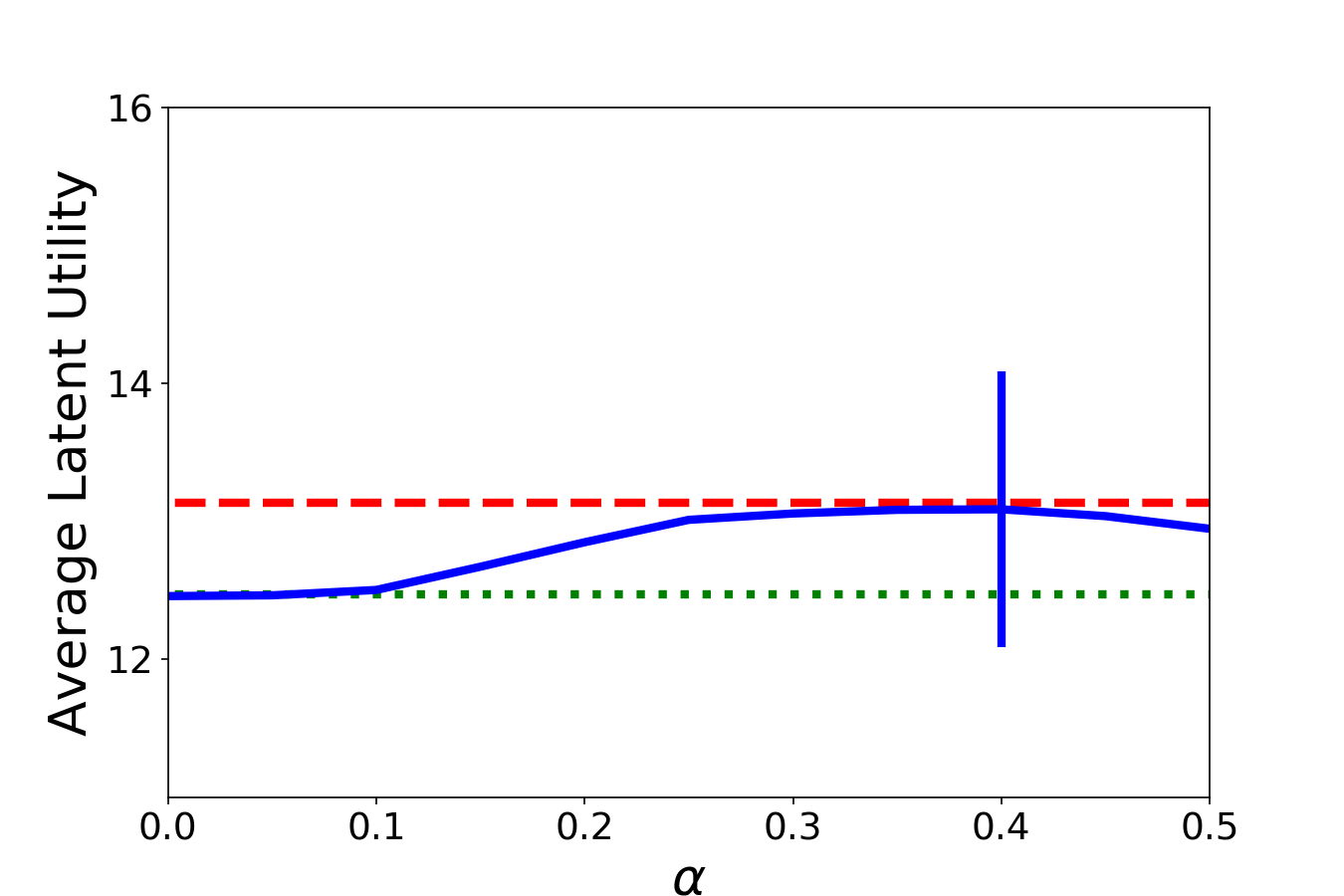}\hspace{-6mm}\\
    \vspace{-19mm}\begin{sideways}$\beta=\nfrac{1}{4}$\white{........}\end{sideways}&&&\\
    &
    \includegraphics[width=0.32\linewidth, trim={-0.1cm 0cm 0.5cm 1.7cm},clip]{./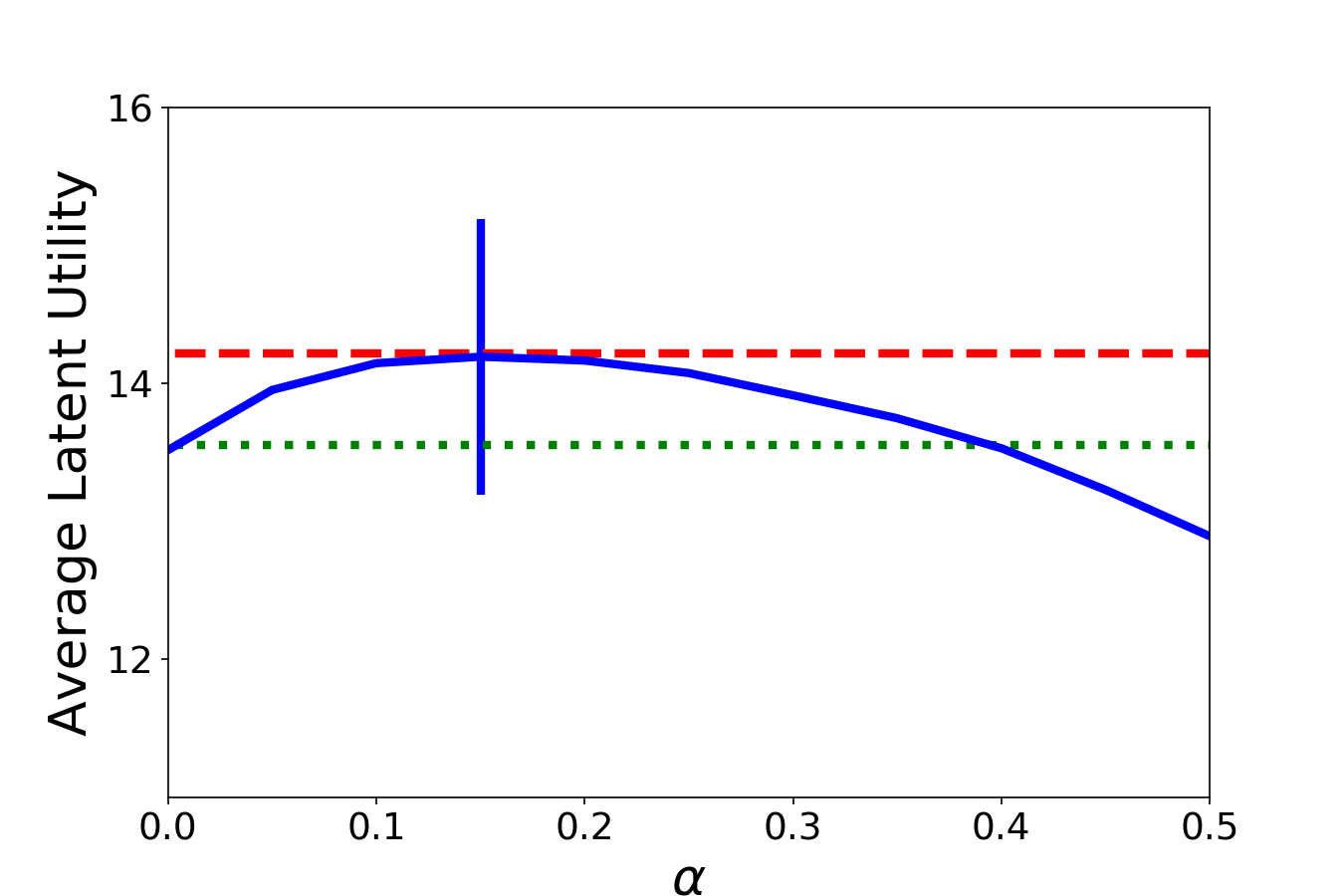}\hspace{-6mm} &
    \includegraphics[width=0.32\linewidth, trim={-0.1cm 0cm 0.5cm 1.7cm},clip]{./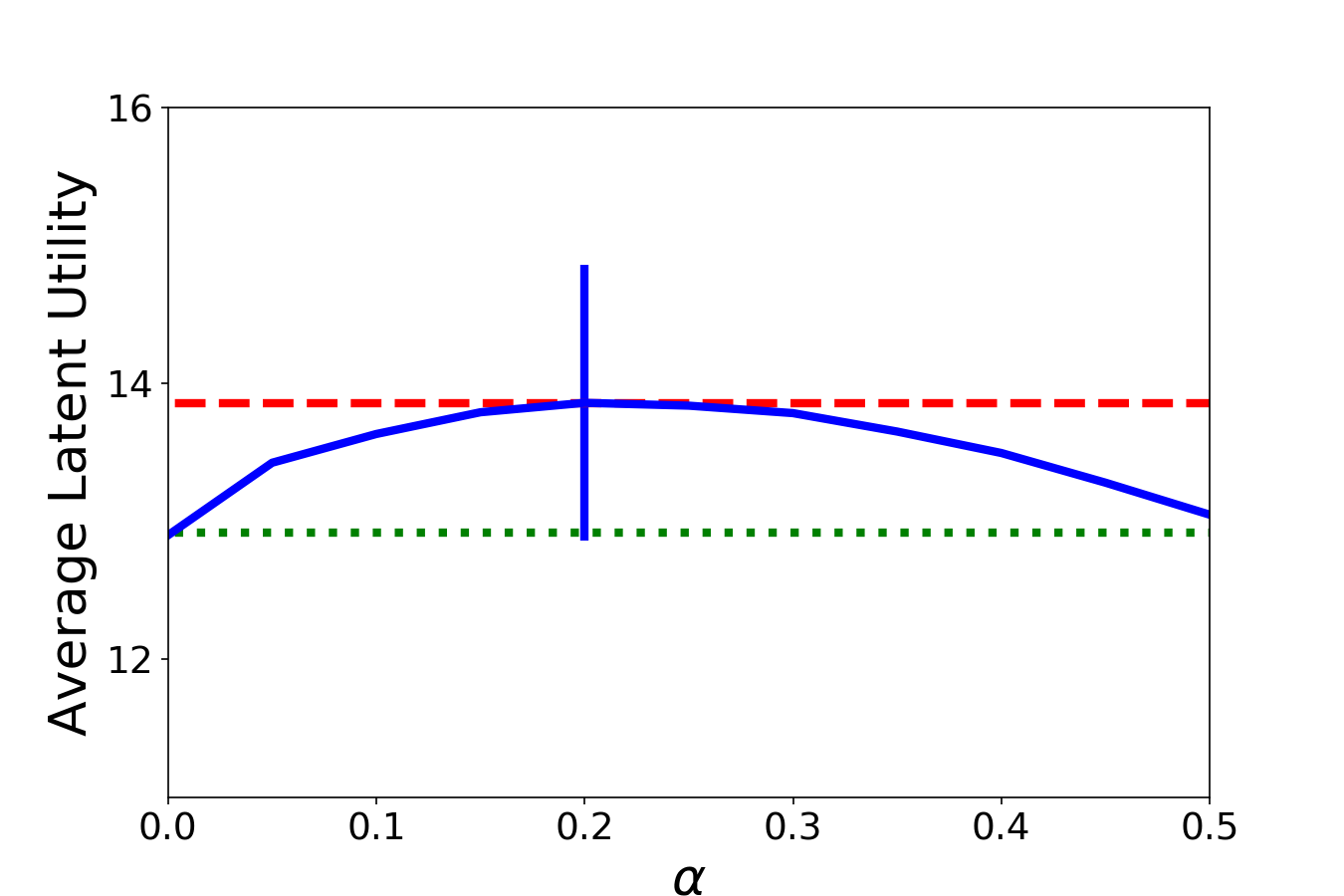}\hspace{-6mm} &
    \includegraphics[width=0.32\linewidth, trim={-0.1cm 0cm 0.5cm 1.7cm},clip]{./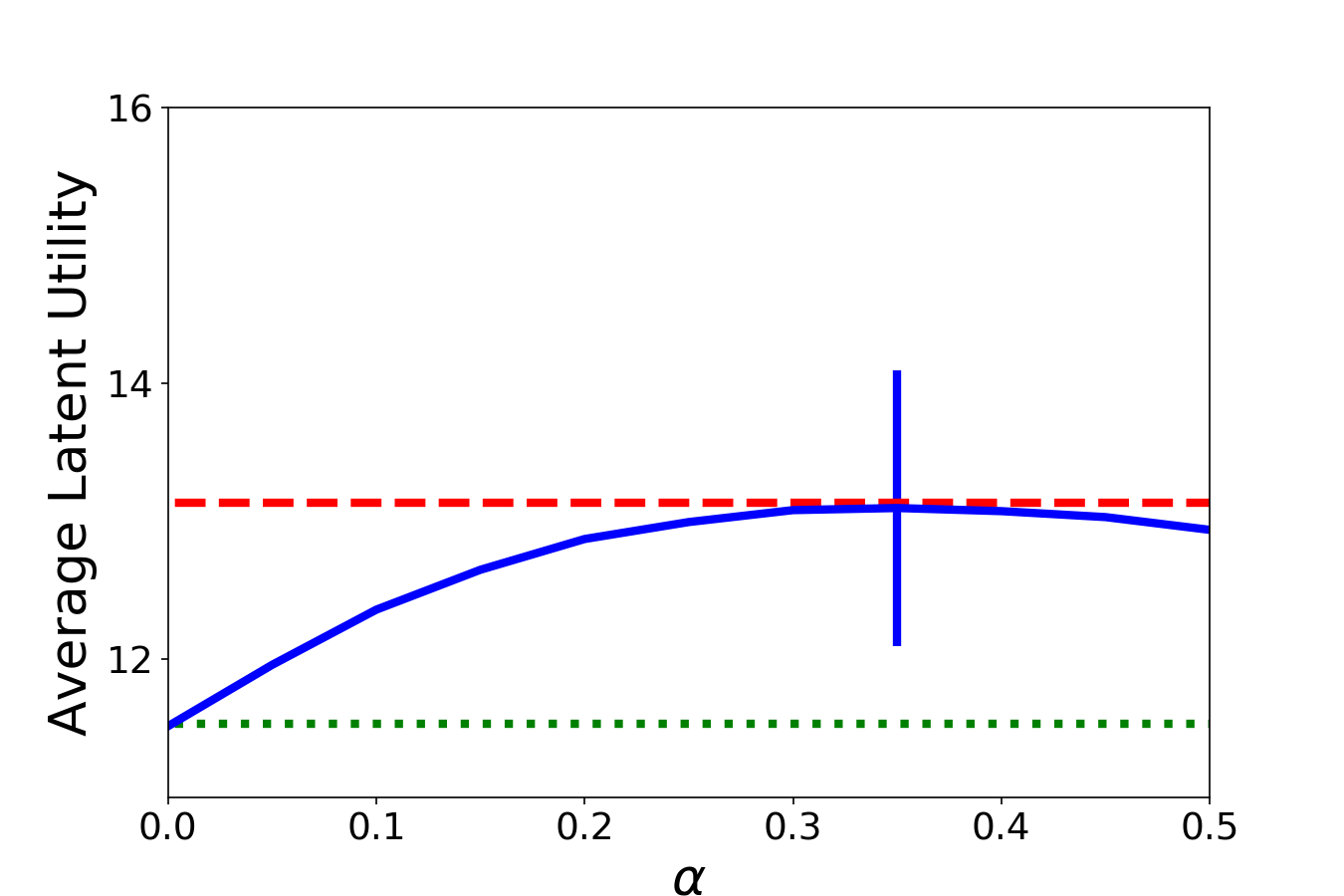}\hspace{-6mm}\\
    \hline
  \end{tabular}
  \caption{
  {\em {Empirical results on IIT-JEE 2009 dataset (without DCG):}}
  We plot the latent utilities, $U_{\mathcal{D},v}(\alpha, \beta)$, $U_{\mathcal{D},v}(0, \beta)$, and $U_{\mathcal{D},v}(0,1)$ obtained by \cons{}, \uncons{} and \opt{} respectively
  (see Equation~\eqref{expected_utility} for the definition of $U_{\mathcal{D},v}(\cdot,\cdot)$);
  we average over values over $5\cdot10^3$ trials.
  Each plot represents an instance of the problem for a given value of implicit bias parameter $\beta$ and the ratio of the size, $m_b$, of the underprivileged group, to the size $m_a$, of the privileged group.
  The bar represents the optimal constraint $\alpha$: where we require the ranking to place at least $k\alpha$ candidates in the top $k$ positions of the ranking for every position $k$.
  }
  \label{table_2}
\end{table*}

\begin{table*}[t!]
  \begin{tabular}{|c|ccc|}
    \hline
    & $\nfrac{m_b}{m}=\nfrac{1}{4}$  & $\nfrac{m_b}{m}=\nfrac{1}{3}$  & $\nfrac{m_b}{m}=\nfrac{1}{2}$  \\
    \hline
    &&&\\
    &&&\\
    \vspace{-15mm}\begin{sideways}$\beta=1$\white{........}\end{sideways}&&&\\
    &
    \includegraphics[width=0.32\linewidth, trim={-0.1cm 0cm 0.5cm 1.4cm},clip]{./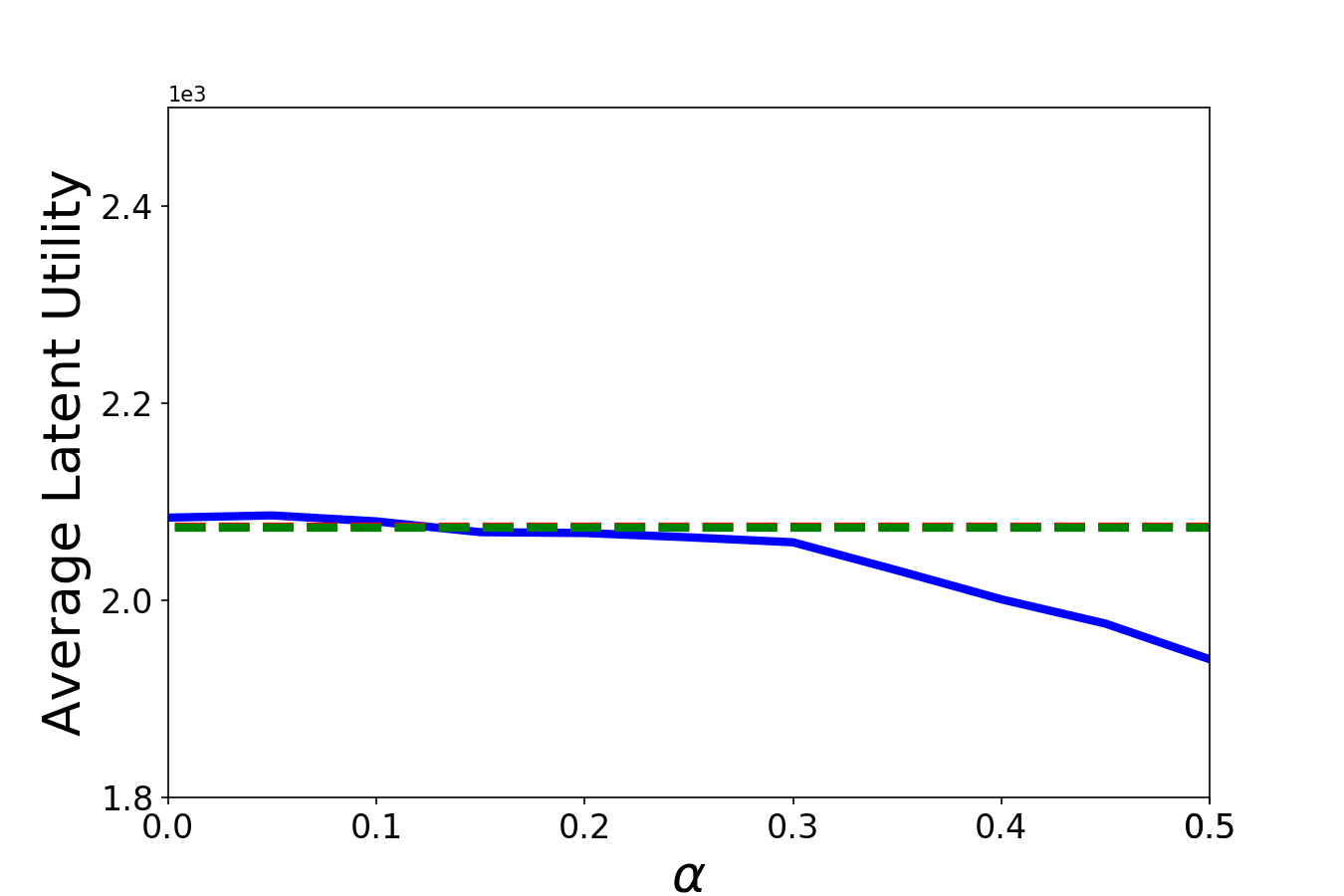}\hspace{-6mm} &
    \includegraphics[width=0.32\linewidth, trim={-0.1cm 0cm 0.5cm 1.4cm},clip]{./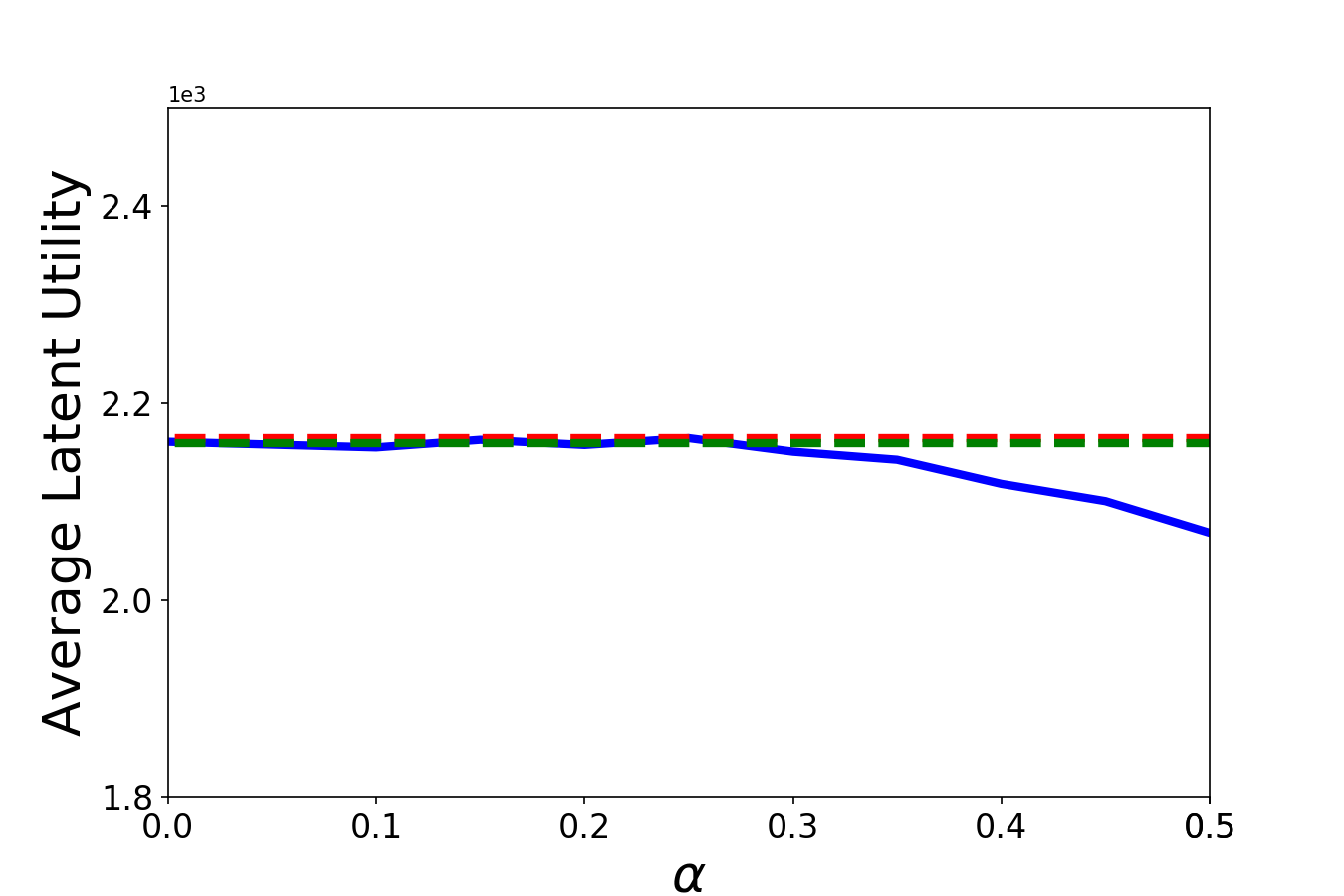}\hspace{-6mm} &
    \includegraphics[width=0.32\linewidth, trim={-0.1cm 0cm 0.5cm 1.4cm},clip]{./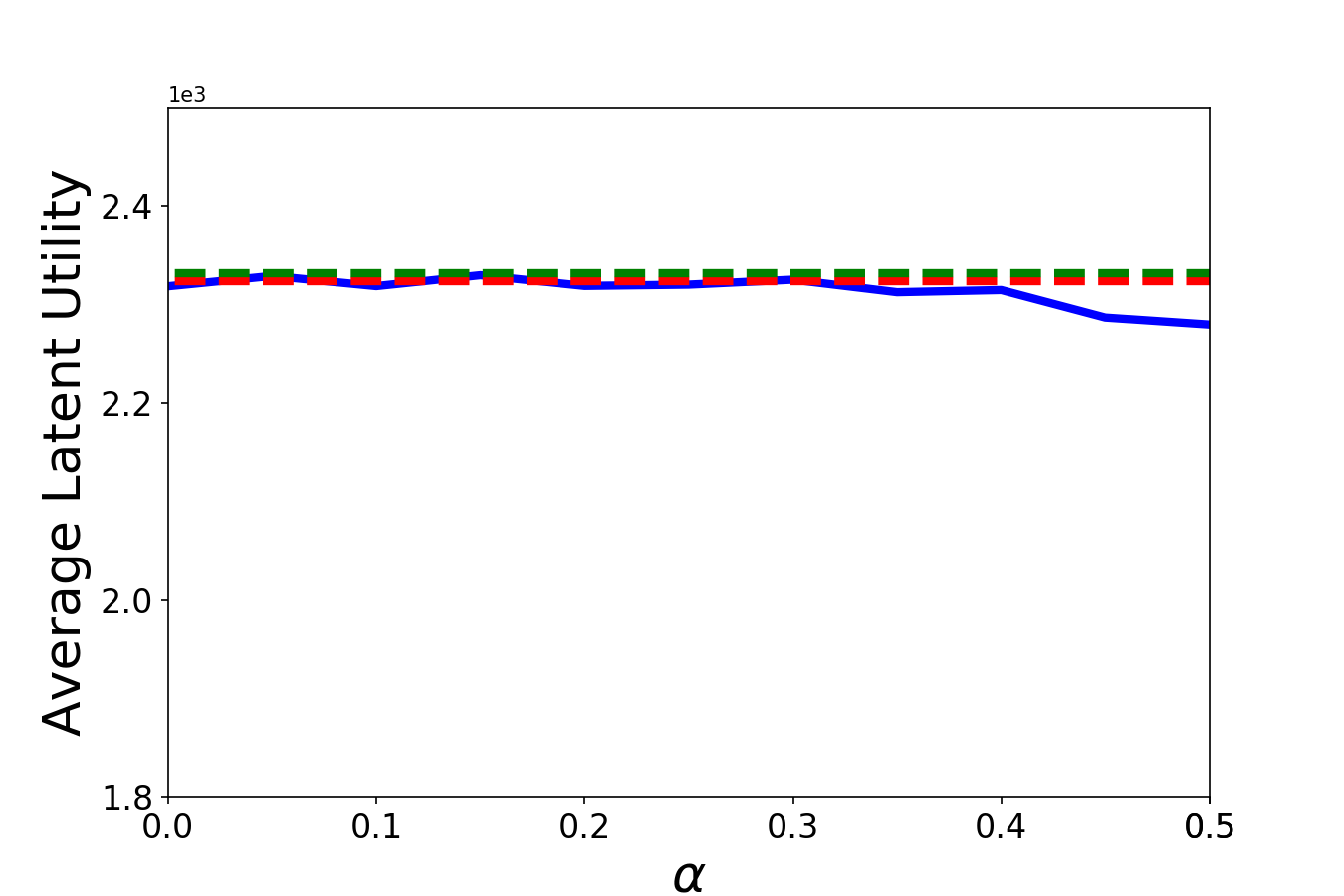}\hspace{-6mm}\\
    \vspace{-15mm}\begin{sideways}$\beta=\nfrac{1}{2}$\white{........}\end{sideways}&&&\\
    &
    \includegraphics[width=0.32\linewidth, trim={-0.1cm 0cm 0.5cm 1.4cm},clip]{./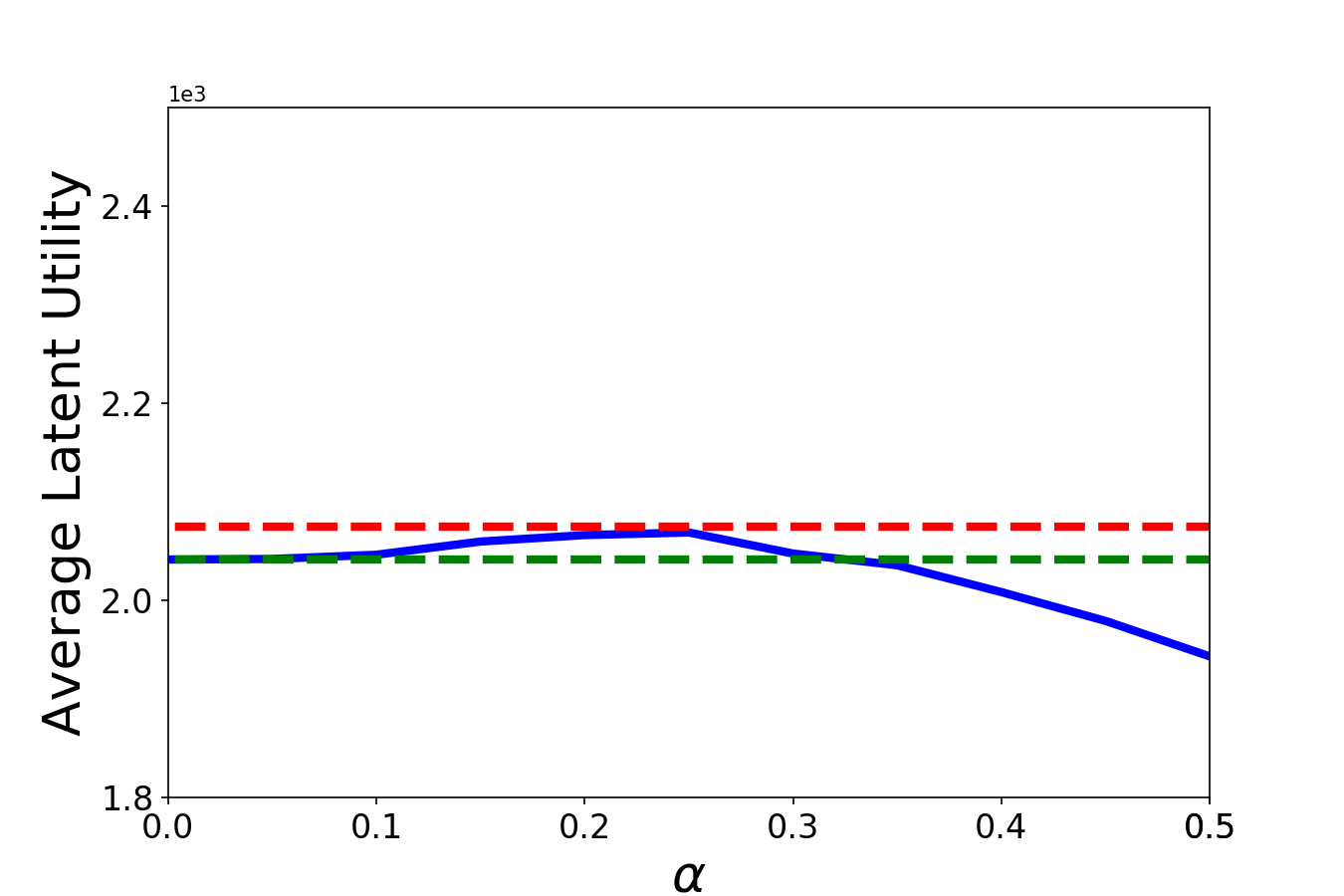}\hspace{-6mm} & %
    \includegraphics[width=0.32\linewidth, trim={-0.1cm 0cm 0.5cm 1.4cm},clip]{./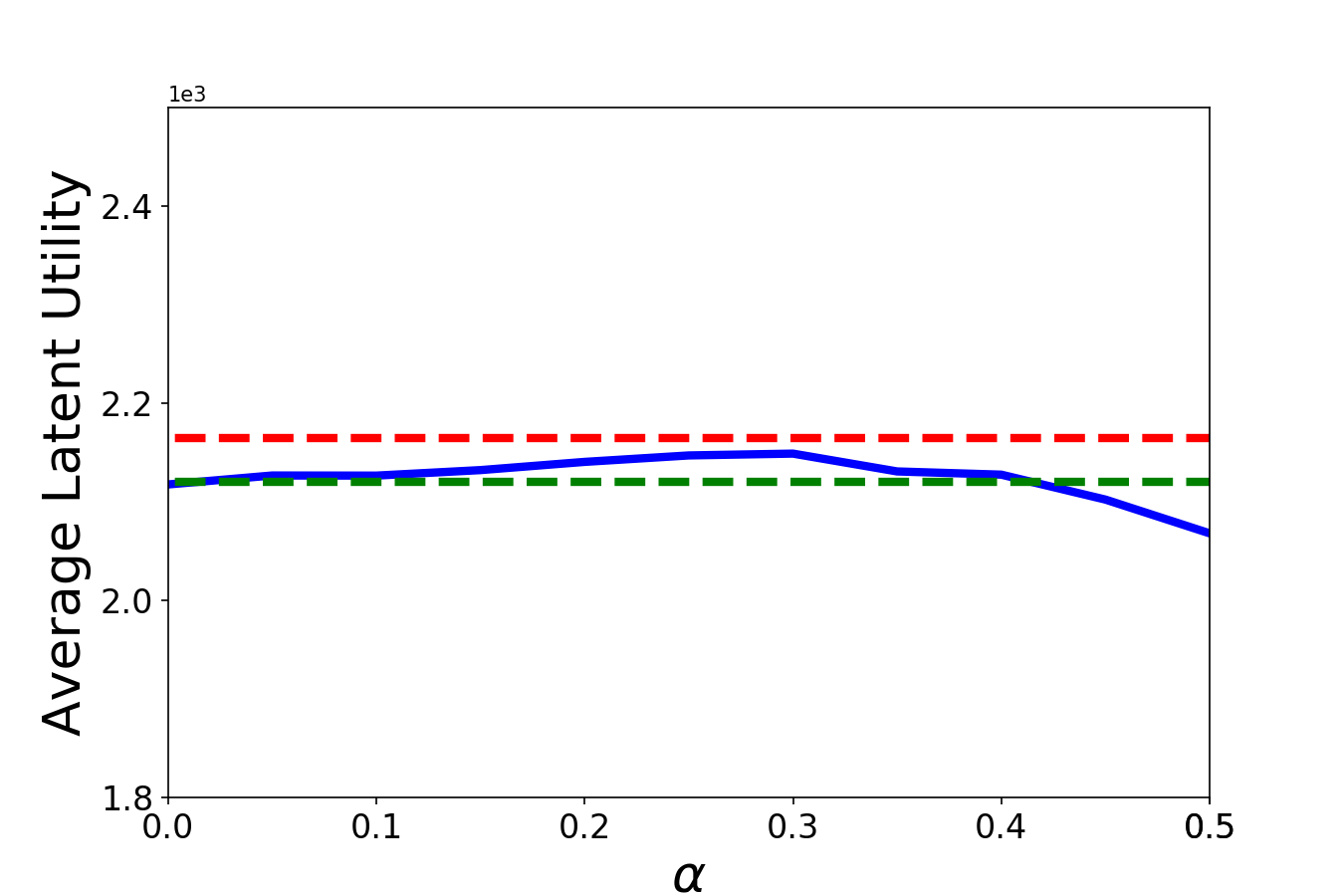}\hspace{-6mm} & %
    \includegraphics[width=0.32\linewidth, trim={-0.1cm 0cm 0.5cm 1.4cm},clip]{./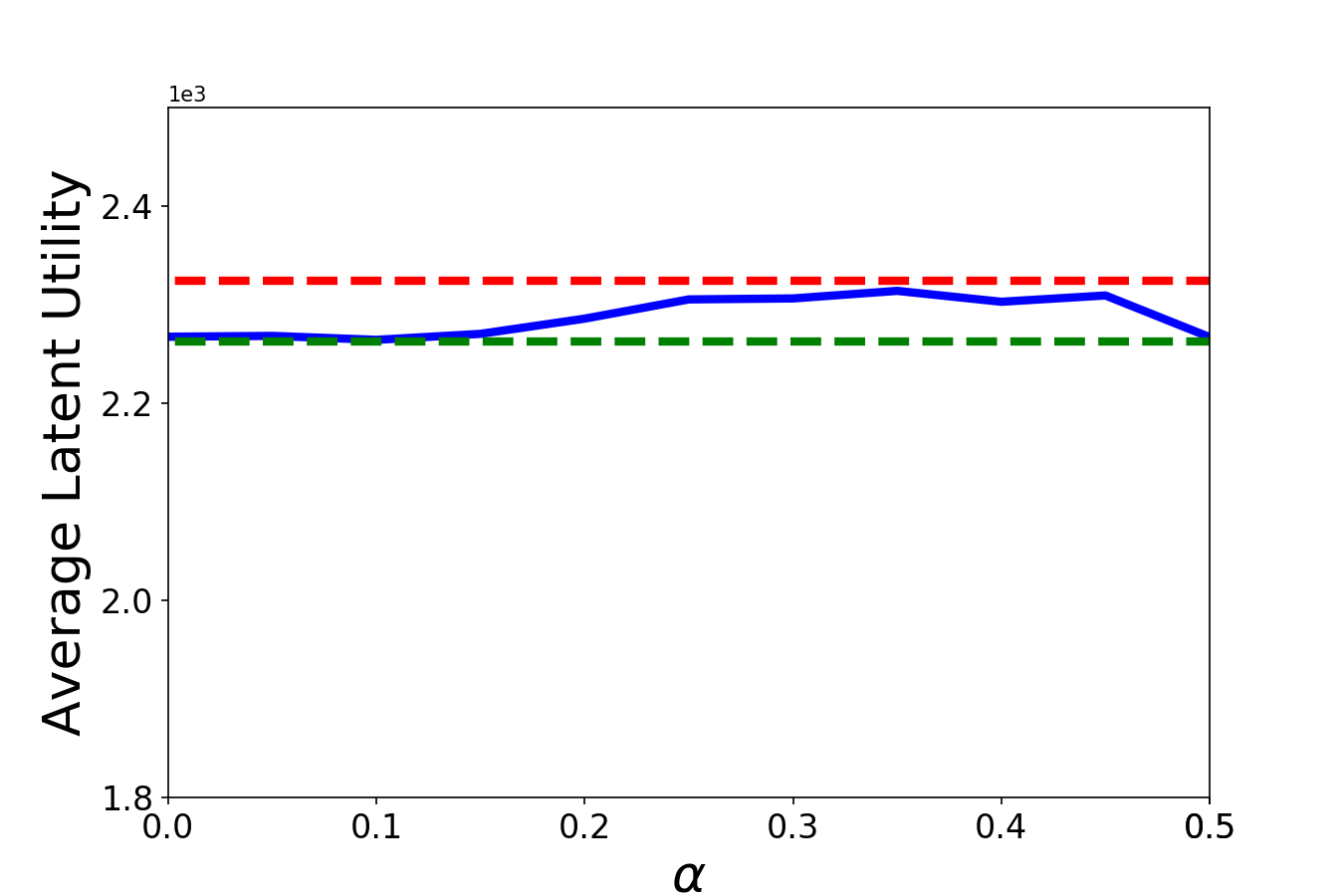}\hspace{-6mm}\\
    \vspace{-15mm}\begin{sideways}$\beta=\nfrac{1}{4}$\white{........}\end{sideways}&&&\\
    &
    \includegraphics[width=0.32\linewidth, trim={-0.1cm 0cm 0.5cm 1.4cm},clip]{./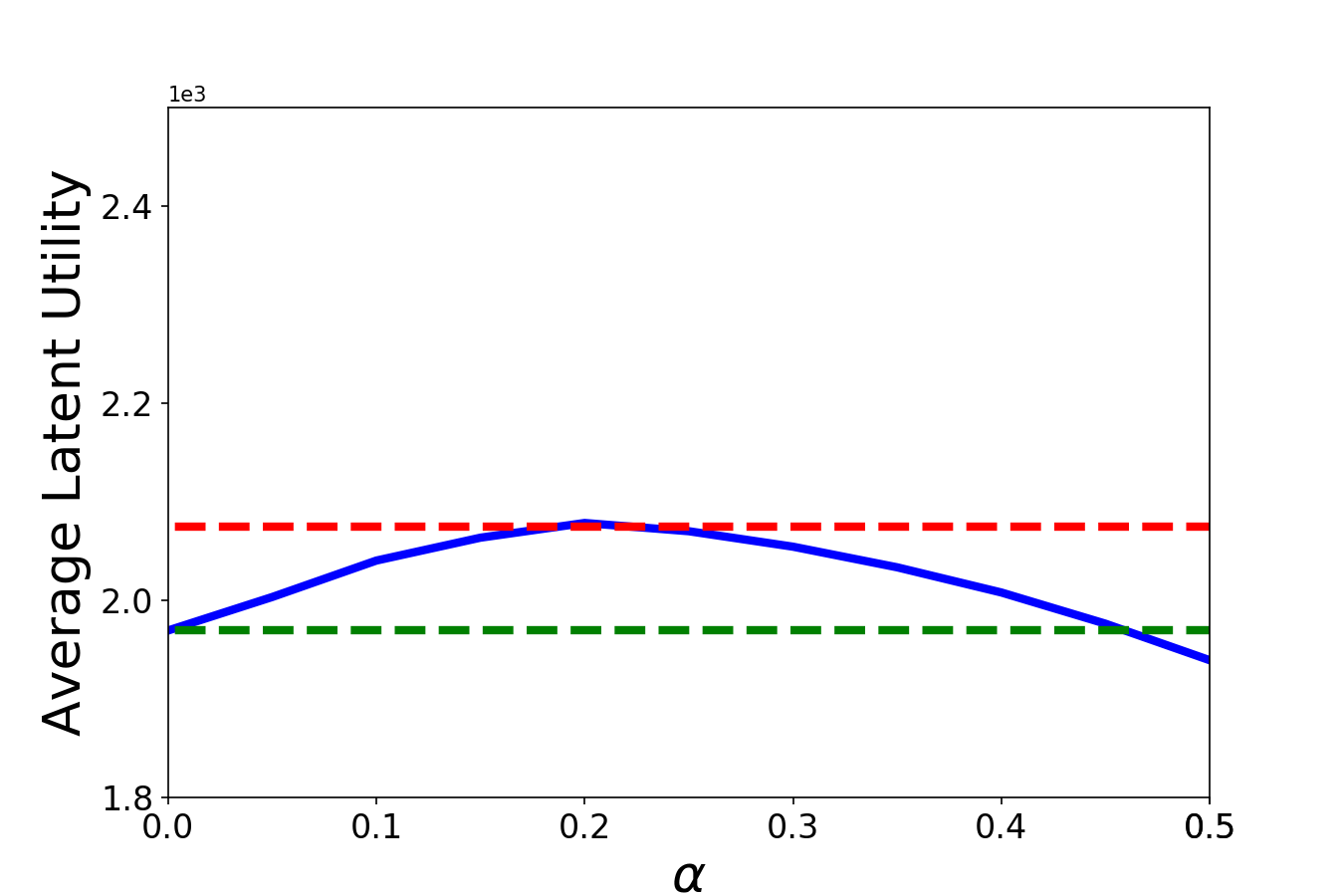}\hspace{-6mm} &
    \includegraphics[width=0.32\linewidth, trim={-0.1cm 0cm 0.5cm 1.4cm},clip]{./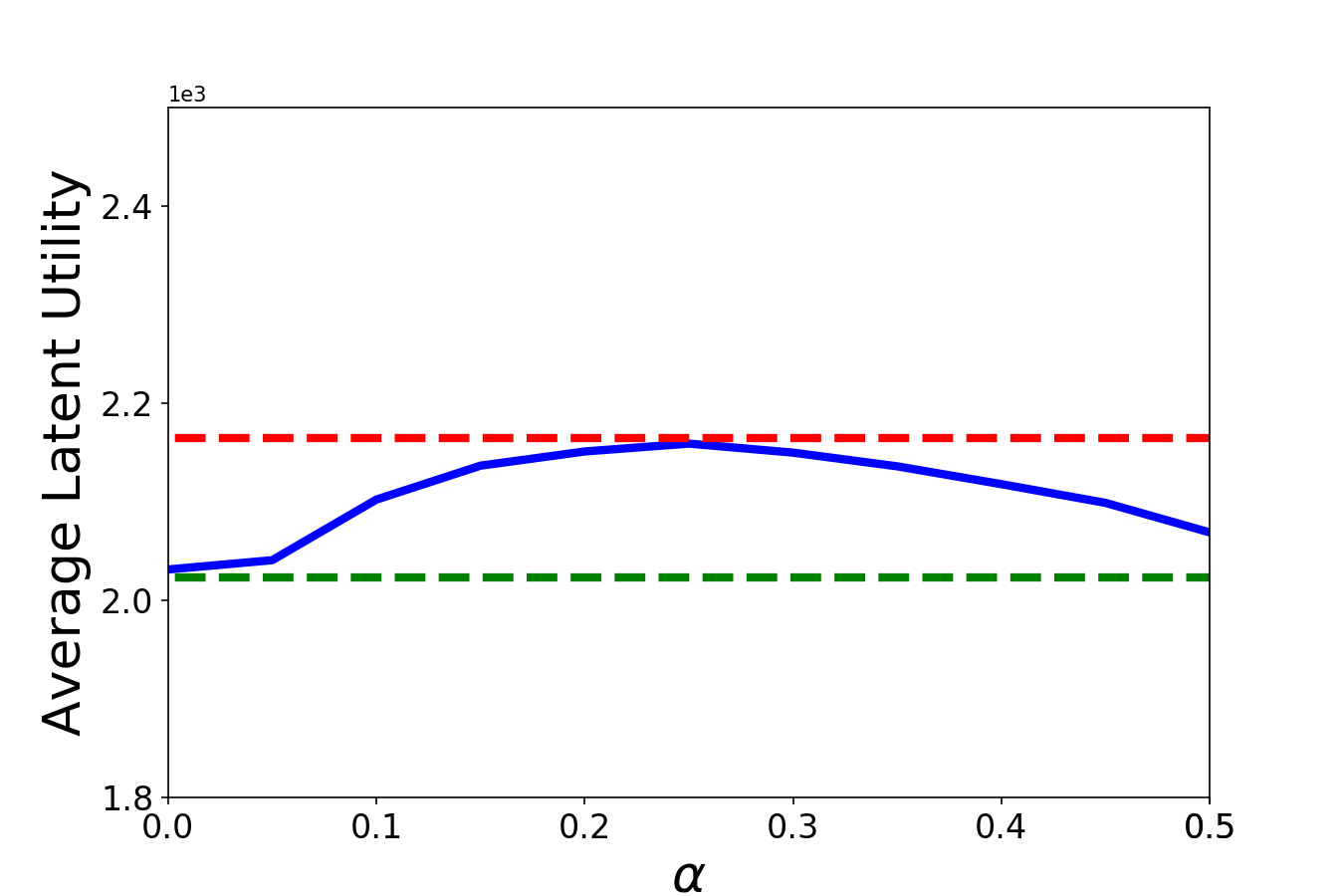}\hspace{-6mm} &
    \includegraphics[width=0.32\linewidth, trim={-0.1cm 0cm 0.5cm 1.4cm},clip]{./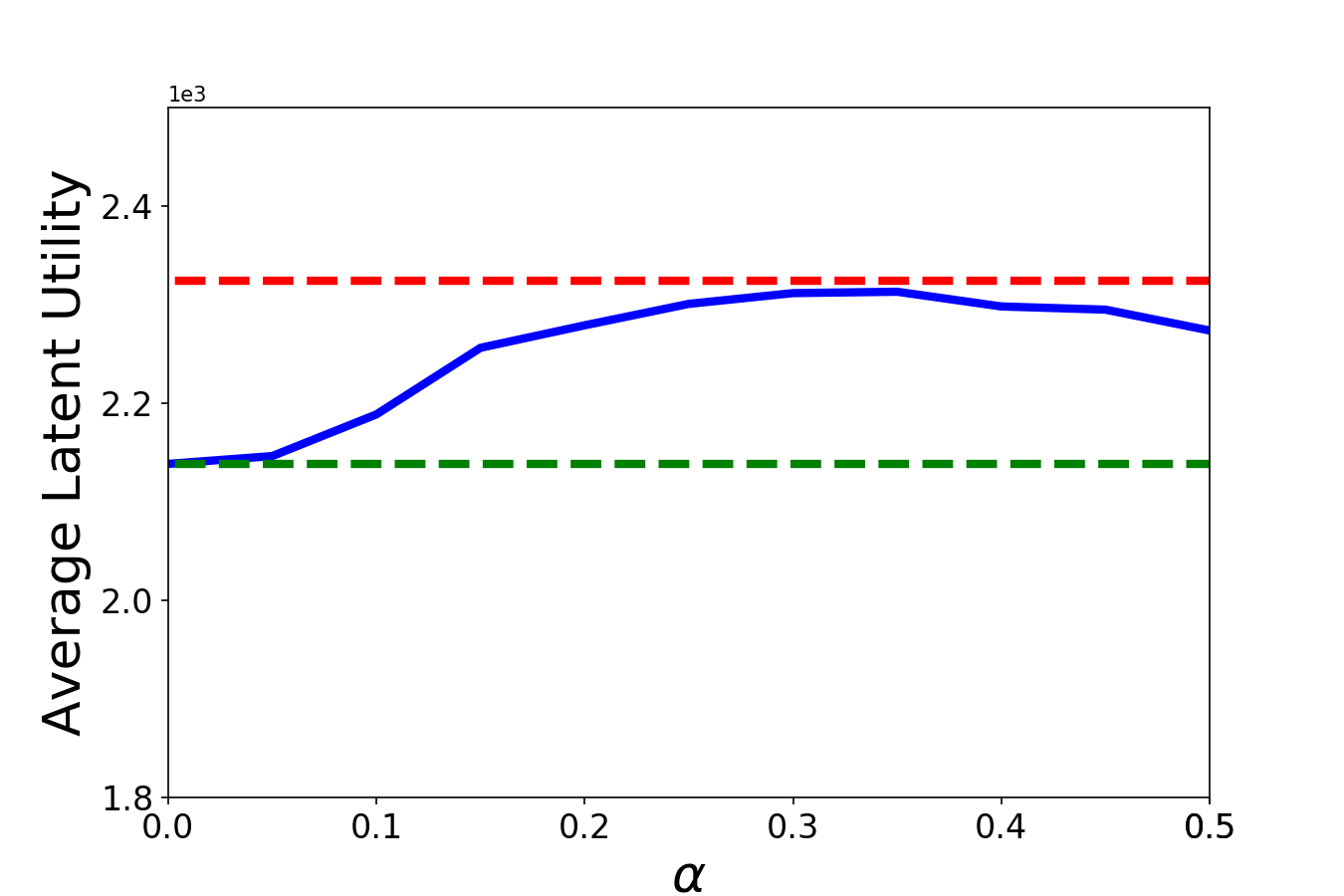}\hspace{-6mm}\\
    \hline
  \end{tabular}
  \caption{
  {\em {Empirical results on distributions of utilities from the Semantic Scholar Open Research Corpus (without DCG):}}
  We plot the latent utilities, $U_{\mathcal{D},v}(\alpha, \beta)$, $U_{\mathcal{D},v}(0, \beta)$, and $U_{\mathcal{D},v}(0,1)$ obtained by \cons{}, \uncons{} and \opt{} respectively
  (see Equation~\eqref{expected_utility} for the definition of $U_{\mathcal{D},v}(\cdot,\cdot)$);
  we average over values over $5\cdot10^3$ trials.
  Each plot represents an instance of the problem for a given value of implicit bias parameter $\beta$ and the ratio of the size, $m_b$, of the underprivileged group, to the size $m_a$, of the privileged group.
  }
  \label{table_3}
\end{table*}

\section{Proofs}
\subsection{Proof of Theorem~\ref{thm_constraints_are_sufficient}}\label{sec_proof_thm_constraints_are_sufficient}
\begin{theorem}
  {\bf (Restatement of Theorem~\ref{thm_constraints_are_sufficient}).}
  Given a set of latent utilities $\{w_i\}_{i=1}^{m}$,
  there exists constraints $L(w)\in \cZ^{n\times p}_{\geq 0}$,
  such that,
  for all implicit bias parameters $\{\beta_s\}_{s=1}^{p}\in(0,1)^{p}$,
  the optimal constrained ranking $\tilde{x}\coloneqq\argmax_{x\in \mathcal{K}(L(w))} \mathcal{W}(x,v,\hat{w})$ satisfies
  \begin{align*}
    \mathcal{W}(\tilde{x}, v, w)=\max\nolimits_{x}\mathcal{W}(x,v,w).\numberthis
  \end{align*}
\end{theorem}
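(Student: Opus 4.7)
The plan is to define the constraints directly from the optimal latent ranking $x^\star$, so that $x^\star$ itself is feasible, and then argue that the observed-utility maximizer over this feasible set must match $x^\star$ up to swaps that preserve latent utility. Concretely, for each $s \in [p]$ and $k \in [n]$ set
\[
L_{ks}(w) \coloneqq \sum_{i \in G_s,\ j \in [k]} x^\star_{ij},
\]
the number of group-$s$ items that $x^\star$ places in the top $k$ positions. Since $x^\star \in \mathcal{K}(L(w))$ by construction, the problem $\tilde{x} = \argmax_{x \in \mathcal{K}(L(w))} \mathcal{W}(x,v,\hat{w})$ is well posed and its optimum is at least $\mathcal{W}(x^\star, v, \hat{w})$.

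For the analysis I would introduce the group profile $T_i \coloneqq \{s : i \in G_s\}$ and record two elementary monotonicity facts. First, if $T_{i_1} = T_{i_2}$ then items $i_1$ and $i_2$ share the common multiplicative bias factor $\prod_{s \in T_{i_1}} \beta_s$, so the orderings by $w$ and by $\hat{w}$ agree on this class. Second, since $\beta_s \in (0,1)$, whenever $T_{i_2} \subsetneq T_{i_1}$ one has the strict inequality $\hat{w}_{i_1} < \hat{w}_{i_2}$ at equal latent utility. Using these two facts I would show that any observed-utility optimum under the $L(w)$-constraints can be realized by a position-by-position greedy that, at each step $k$, picks the item of largest $\hat{w}_i$ among those whose selection still admits a feasible completion. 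The $L(w)$-constraint forces this greedy to pick, at every $k$, some item whose $T$-profile coincides with the profile of whatever item $x^\star$ places at position $k$; otherwise the remaining per-group quotas could not be met.

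Once the $T$-profile at every position of $\tilde{x}$ agrees with that of $x^\star$, a simple induction on $k$ finishes the argument: assuming agreement of the assignments on the first $k-1$ positions, if $\tilde{x}$ and $x^\star$ place different items $i$ and $i'$ at position $k$ with $T_i = T_{i'}$, then the unused copy of $i$ (respectively $i'$) must appear later in $\tilde{x}$ (respectively $x^\star$), so we can swap within the $T$-class; by the first monotonicity fact this exchange changes neither $\mathcal{W}(\cdot,v,\hat{w})$ nor $\mathcal{W}(\cdot,v,w)$. Iterating yields a ranking of identical latent utility to $x^\star$, proving the claim.

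The main obstacle is the first step, namely showing rigorously that any constrained observed-utility maximizer agrees with $x^\star$ on the $T$-profile at every position. Naively, one might fear that the observed-utility optimizer would delay a low-$\hat{w}$ underprivileged item to use a high-$\hat{w}$ privileged item early, even when this contradicts $x^\star$'s profile; ruling this out is exactly where the inequality $\hat{w}_i \le w_i$, the strict monotonicity under $T$-inclusion, and the quota structure of the $L(w)$-constraints must be combined carefully. Once this structural claim is established, the exchange argument is routine.
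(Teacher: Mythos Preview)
Your proposal is correct and matches the paper's proof essentially step for step: the same choice $L_{ks}(w) = \sum_{i\in G_s,\, j\le k} x^\star_{ij}$, the same two monotonicity observations about $T$-profiles, an inductive argument that any constrained $\hat w$-optimizer agrees with $x^\star$ on the $T$-profile at every position, and then the within-$T$-class exchange to conclude equal latent utility. One small wording issue: your last step says the swap ``changes neither $\mathcal{W}(\cdot,v,\hat w)$ nor $\mathcal{W}(\cdot,v,w)$,'' but the actual point (as in the paper) is that the $\hat w$-optimality of $\tilde x$ together with the first monotonicity fact forces the within-$T$-class relative order already to agree with $x^\star$'s, so no utility-changing swap is needed.
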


\begin{proof}
  Let $\pi:[m]\to [n+1]$ be the observed utility maximizing ranking and $\pi^\star:[m]\to [n+1]$,  be the optimal {\em latent} utility maximizing ranking, where we define $\pi(i)\coloneqq n+1$ and $\pi^\star(i)\coloneqq n+1$, if item $i$ was not ranked in the first $n$ positions of $\pi$ or $\pi^\star$.
  We claim that the following constraints are suitable for Theorem~\ref{thm_constraints_are_sufficient}:
  For all $s\in [p]$ and $k\in[n]$:
  \begin{align}
    L_{ks} \coloneqq \sum
    \nolimits_{t\in G_s\colon \pi^\star(t)\leq k}1.\label{eq_construction_of_constraints}
  \end{align}
  We will prove that under these constraints, $\pi$ and $\pi^\star$ are the same up to the groups of the items ranked at each position.
  The following lemma shows that they also have the same latent utility.
  \begin{lemma}
    If for all $k\in[n]$, $T_{i_k} = T_{i^\star_{k}}$ where $i_k, i^\star_k$ are such that $i_k=\pi^{-1}(k)$ and $i^\star_k=(\pi^{\star})^{-1}(k)$, i.e., $\pi$ and  $\pi^\star$ are the same up to the groups of the items ranked at each position,
    then $\pi$ and $\pi^\star$ have the same latent utility.
  \end{lemma}
  \begin{proof}
    We show that the relative order of all items with different latent utilities is  the same  between the two rankings, proving that they have the same latent utility.
    Consider two items $i_1$ and $i_2$ in the same set of groups, i.e., such that $T_{i_1}=T_{i_2}$.
    We note that swapping their positions does not violate any new constraints.
    Further since, $i_1$ and $i_2$ have the same implicit bias, we have
    \begin{align*}
      \big( T_{i_1}=T_{i_2} \text{ and }w_{i_1} > w_{i_2} \big) \implies \hat{w}_{i_1} > \hat{w}_{i_2}.\label{eq_implication}\numberthis
    \end{align*}
    Towards a contradiction, assume that $\pi$ and $\pi^\star$ have different relative order of $i_1$ and $i_2$.
    Without loss of generality let $\pi^\star(i_1)  < \pi^\star(i_2)$ and $\pi(i_1) > \pi(i_2)$.
    Since $\pi^\star$ is optimal we have $w_{i_1} \geq w_{i_2}$.
    If $w_{i_1}=w_{i_2}$, then these items do not change the latent utility between $\pi^\star$  and $\pi$.
    Let $w_{i_1} < w_{i_2}$, from Equation~\eqref{eq_implication} we have that $\hat{w}_{i_1} < \hat{w}_{i_2}$.
    Therefore, we can swap the positions of $i_1$ and $i_2$ in $\pi$ to gain the following observed utility
    \begin{align*}
      \hat{w}_{i_1}v_{\pi(i_2)}+\hat{w}_{i_2}v_{\pi(i_1)}&-\hat{w}_{i_1}v_{\pi(i_1)}-\hat{w}_{i_2}v_{\pi(i_2)}
      =
      \big(\hat{w}_{i_2}-\hat{w}_{i_1}\big)\cdot\big(v_{\pi(i_1)}-v_{\pi(i_2)}\big)> 0.
    \end{align*}
    This contradicts the fact that $\pi$ has the optimal observed utility.
  \end{proof}

  It remains to prove that with these constraints, $\pi$ is the same as $\pi^\star$ up to the groups of the items.
  This proof is by induction.

  \noindent {\em Inductive hypothesis: } Let the two rankings agree on the first $(k-1)$ positions (up to the groups of the item ranked at each position).

  \noindent {\em Base case: } The base case for $k=1$ is trivially satisfied.
  \noindent Towards, a contradiction  assume that the items on the $k$-th positions of $\pi$ and $\pi^\star$ are $i$ and $i^\star\in [m]$, such that $T_{i}\neq T_{i^\star}$.
  Since $\pi^\star$ ranks $i$ after $i^\star$ it follows that
  \begin{align*}
    w_{i}\leq w_{i^\star}.\numberthis\label{eq_relative_utility}
  \end{align*}

  \noindent {\bf Case A: $T_{i}\subsetneq T_{i^\star}$}:
  \noindent We claim that this case is not possible.
  \noindent Consider any group $G_s$, for $s\in T_{i^\star}\backslash T_{i}$.
  Since the rankings agree on the first $(k-1)$ positions we have for all $j\in[k-1]$
  \begin{align*}
    \sum\nolimits_{t\in G_s\colon \pi(t)\leq j} 1=\sum\nolimits_{t\in G_s\colon \pi^\star(t)\leq j} 1\label{eq_constraints_in_first_k_pos}.\numberthis
  \end{align*}
  Since $\pi(i)=k$ we have
  \begin{align*}
    \sum\nolimits_{t\in G_s\colon \pi(t)\leq k} 1\ &= \sum\nolimits_{t\in G_s\colon \pi(t)<k} 1 + \mathbb{I}[i\in G_s]\\
    &\hspace{-1mm}\stackrel{\eqref{eq_constraints_in_first_k_pos}}{=}\sum\nolimits_{t\in G_s\colon \pi^\star(t)<k}1+\mathbb{I}[i\in G_s]\\
    &<\hspace{0.7mm}\sum\nolimits_{t\in G_s\colon \pi^\star(t)<k}1+ \mathbb{I}[i\in G_s]+ \mathbb{I}[i^\star \in G_s] \tag{Using $i^\star\in G_s$}\\
    &=\hspace{0.5mm}\sum\nolimits_{t\in G_s\colon \pi^\star(t)\leq k} 1+\mathbb{I}[i\in G_s] \\
    &=\hspace{0.5mm}\sum\nolimits_{t\in G_s\colon \pi^\star(t)\leq k} 1 \stackrel{\eqref{eq_construction_of_constraints}}{=}L_{ks}.\tag{Using $i\not\in G_s$}
  \end{align*}
  Therefore, in this case $\pi$ violates the constraint at position $k$.

  \noindent {\bf Case B: $T_{i}\supsetneq T_{i^\star}$:} In this case we have
  \begin{align*}
    \hat{w}_{i}\coloneqq \bigg(\prod\nolimits_{s\in [p]\colon G_s\ni i}\beta_s \bigg)\cdot w_{i}
    &= \bigg(\prod\nolimits_{s\in [p]}\beta_s^{\mathbb{I}[i\in G_s]}\bigg)w_i\\
    &\hspace{-0mm}= \bigg(\prod_{s\in [p]}\beta_s^{\mathbb{I}[i^\star\in G_s]}\bigg)
    \bigg(\prod_{s\in [p]}\beta_s^{\mathbb{I}[ i\in G_s]\cdot\mathbb{I}[i^\star\not\in G_s]}\bigg)  w_{i}\tag{Using $T_{i}\supsetneq T_{i^\star}$}\\
    &\hspace{-0mm}< \bigg(\prod\nolimits_{s\in [p]}\beta_s^{\mathbb{I}[i^\star\in G_s]}\bigg)\cdot  w_{i}\tag{For all $s\in [p]$, $\beta_s\in (0,1)$}\\
    &\stackrel{\eqref{eq_relative_utility}}{<} \bigg(\prod\nolimits_{s\in [p]}\beta_s^{\mathbb{I}[i^\star\in G_s]}\bigg)\cdot w_{i^\star}\\
    &< \hat{w}_{i^\star}.\label{eq_non_decreasing_utility}\numberthis
  \end{align*}
  \noindent Consider the ranking $\hat{\pi}$ formed by swapping the position of $i$ and $i^\star$ in $\pi$.
  The change in observed utility between $\hat{\pi}$ and $\pi$ is
  \begin{align*}
    \hat{w}_{i}v_{\pi(i^{\star})}+\hat{w}_{i^\star}v_{\pi(i)}&-\hat{w}_{i}v_{\pi(i)}-\hat{w}_{i^\star}v_{\pi(i^\star)}
    =
    \big(\hat{w}_{i^\star}-\hat{w}_{i}\big)\cdot\big(v_{\pi(i)}-v_{\pi(i^{\star})}\big)\stackrel{\eqref{eq_non_decreasing_utility}}{>} 0.
  \end{align*}
  Therefore, $\hat{\pi}$ has higher observed utility than $\pi$.
  Therefore, if $\hat{\pi}$ satisfies the constraints, then this contradicts the optimality of $\pi$, and we are done.
  $\hat{\pi}$ can only violate the constraints in positions $j\in [k, \dots, \pi(i^\star)]$ for groups $T_{i}\backslash T_{i^\star}$.
  Since the number of items selected from the set of groups $T_{i}\backslash T_{i^\star}$ only increases.
  It follows that if $\pi$ satisfies the lower bound constraints on positions $j\in [k, \dots, \pi(i^\star)]$, then so does $\hat{\pi}$.
  Since we assume $\pi$ to be feasible, we are done.
\end{proof}

\subsection{Proof of Theorem~\ref{fact_spread_dist}}\label{sec_proof_spread_distribution}
\begin{theorem}
  {\bf (Restatement of Theorem~\ref{fact_spread_dist}).}
  Let $\mathcal{D}$ be a continuous distribution, $\ell\leq m_b$, and $0<k < \min(m_a, m_b)$ be a position, then
  \begin{align}
    \forall\ \delta\geq 2 \hspace{5mm}&\Pr[\ N_{kb} \leq \eE[N_{kb}] - \delta \ ] \leq e^{-\frac{(2\delta^2-1)}{k}},\label{cons_of_nkb}\\
    &\eE[N_{kb}] = k\cdot\nfrac{m_b}{(m_a+m_b)},\label{expect_nkb}\\
    &\eE[P_\ell] = \ell\cdot\big(1+\nfrac{m_a}{(m_b+1)}\big).\label{eq:expectation_z_ell}
  \end{align}
\end{theorem}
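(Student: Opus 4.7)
The plan is to first reduce to a distribution-free combinatorial setting and then handle each of the three equations separately. Since $\mathcal{D}$ is continuous, the values $F_{\mathcal{D}}(W_i)$ are i.i.d.\ $\cU[0,1]$ random variables, and because $F_{\mathcal{D}}$ is monotone, ranking by the $W_i$'s is identical to ranking by the $F_{\mathcal{D}}(W_i)$'s. Moreover, the relative order of i.i.d.\ continuous random variables is a uniformly random permutation, so the induced labeling of positions $1,\ldots,m_a+m_b$ by group labels (``$a$'' or ``$b$'') is a uniformly random arrangement of $m_a$ $a$-labels and $m_b$ $b$-labels. After this reduction, every subsequent quantity is manifestly independent of $\mathcal{D}$, which is exactly what the theorem asserts.

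Under this reduction, $N_{kb}$ equals the number of $b$-labels in the first $k$ positions of a uniformly random arrangement of $m_a+m_b$ labels, hence $N_{kb}\sim \text{Hypergeometric}(m_a+m_b,\,m_b,\,k)$. Equation \eqref{expect_nkb} then follows immediately by linearity of expectation applied to the indicators $\mathbb{I}[\text{item }i\in G_b\text{ lies in top }k]$, each with probability $k/(m_a+m_b)$ by symmetry. For the concentration bound \eqref{cons_of_nkb}, I plan to invoke Hoeffding's inequality for sampling without replacement, which gives $\Pr[N_{kb}\leq \eE[N_{kb}]-t]\leq \exp(-2t^2/k)$ for all $t\geq 0$; substituting $t=\delta$ and using $\exp(-2\delta^2/k)\leq \exp(-2(\delta^2-1)/k)$ yields the stated inequality (the slack absorbs minor artefacts from the integrality of $N_{kb}$).

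For the expected position $\eE[P_\ell]$ in \eqref{eq:expectation_z_ell}, the plan is a symmetry-on-gaps argument. The positions of the $m_b$ $b$-items partition the $m_a$ $a$-items into $m_b+1$ ``gaps'' $G_1,\ldots,G_{m_b+1}$, where $G_j$ counts the $a$-items lying strictly between the $(j-1)$-th and $j$-th $b$-item (with the $0$-th $b$-item placed at position $0$). Because a uniformly random arrangement induces an exchangeable joint distribution on $(G_1,\ldots,G_{m_b+1})$ subject to $\sum_j G_j=m_a$, we get $\eE[G_j]=m_a/(m_b+1)$ for every $j$. Writing $P_\ell=\ell+\sum_{j=1}^{\ell}G_j$ and applying linearity yields $\eE[P_\ell]=\ell\bigl(1+\tfrac{m_a}{m_b+1}\bigr)$.

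The reduction to a random-permutation model and the gap/exchangeability computation are the conceptual crux; everything else is a one-line application of linearity or a standard concentration bound. The only technical wrinkle I foresee is pinning down the exact constant in the concentration inequality for the hypergeometric distribution, but the classical Hoeffding bound (or the Hoeffding--Serfling refinement) suffices once the small slack between $\delta^2$ and $\delta^2-1$ is observed.
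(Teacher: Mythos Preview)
Your proposal is correct, and the reduction to the uniform-permutation (``colored balls'') model is exactly what the paper does. Your handling of $N_{kb}$ is also essentially the paper's: the paper writes out the PMF, identifies $N_{kb}$ as hypergeometric, reads off the mean, and cites a standard hypergeometric tail bound (Hush--Kunniyur) where you invoke Hoeffding for sampling without replacement; the two bounds coincide up to the $-1$ slack you already noted.

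The one genuine difference is the computation of $\eE[P_\ell]$. The paper writes down $\Pr[P_\ell=k]$ explicitly, recognizes $P_\ell-\ell$ as a negative-hypergeometric random variable, and reads off its mean from that distribution. Your gaps-and-exchangeability argument is more elementary: it never names the distribution and derives the mean directly from the symmetry of a uniform arrangement. The paper's route has the side benefit of giving the full law of $P_\ell$ (handy if one later wants variance or tail information on $P_\ell$); your route is shorter and avoids any lookup of the negative-hypergeometric mean formula. For the statement as written, either approach is fully adequate.
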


\paragraph{Equivalent simple model.}
Since the items have the same distribution $\mathcal{D}$ of latent utility,
the event: $i\in G_a$ (or equivalently $i\in G_b$), is independent of the event:  $w_i=z$ for some $z\in \text{supp}(\mathcal{D})$.
Therefore, the latent utility maximizing ranking is independent of the groups  assigned to each item.

It follows that the following is an equivalent model of generating the unbiased ranking:
First, we draw $m_a+m_b$ latent utilities and order them in a non-increasing order (randomly breaking ties).
Then, we choose $m_b$ items uniformly without replacement and assign them to $G_b$, and assign the others to $G_a$.
Let a item $i\in G_b$ be a blue ball and $i\in G_a$ be a red ball, then distributions of positions of the balls is equivalent to:
Given $m_a+m_b$ numbered balls, we pick $m_b$ of them uniformly without replacement and color them blue, and color the rest of the balls red.

\begin{proof}
  Using the above model, we first find that $P_{\ell}$ has a negative-hypergeometric distribution and use it to calculate, $\eE[P_{\ell}]$, then we find that $N_{kb}$ is a hypergeometric distribution, use this fact to calculate $\eE[N_{kb}]$ and show $N_{kb}$ is concentrated around its mean.

  \noindent {\bf Expectation of $P_{\ell}$.}\
  The total ways of choosing $m_b$ blue balls is $\binom{m_a+m_b}{m_b}$.
  Given $P_{\ell}=k$, the number of ways of choosing the $\ell-1$ blue balls before it is $\binom{k-1}{\ell-1}$ and the number of ways of choosing the $b-\ell$ blue balls after it is $\binom{m_a+m_b-k}{b-\ell}$.
  Therefore by expressing the probability as the ratio of favorable and total outcomes we have
  \begin{align}
    \Pr[P_\ell = k] = \frac{\binom{k-1}{\ell-1}\binom{m_a+m_b-k}{b-\ell}}{\binom{m_a+m_b}{m_b}}.
  \end{align}
  Rewriting it using the fact that $\binom{m_a}{m_b} = \binom{m_a}{m_a-m_b}$ we get
  \begin{align*}
    \Pr[P_\ell = k] &= \frac{\binom{k-1}{k-\ell}\binom{m_a+m_b-k}{m_a-b+\ell}}{\binom{m_a+m_b}{m_a}}=\frac{\binom{(k-\ell)+(\ell-1)}{k-\ell}\binom{m_a+m_b-(k-\ell)-\ell}{m_a-(x-\ell)}}{\binom{m_a+m_b}{m_a}}.
  \end{align*}
  \noindent Comparing this with the density function of the Negative hypergeometric distribution~\myeqref{eq_neg_hyper}{28} it is easy to observe that $P_\ell-\ell$ is a negative hypergeometric variable.

  \begin{tcolorbox}[colback=yellow!10!white,colframe=black!75!black]
    Given numbers $N, K, r$, the negative hypergeometric random variable, $NG$, has the following distribution, for all $k\in[K]$
    \begin{align}
      \Pr[NG = k] \coloneqq \frac{\binom{k+r-1}{k}\binom{N-r-k}{K-k}}{\binom{N}{K}}.\label{eq_neg_hyper}
    \end{align}
  \end{tcolorbox}

  \hspace{-6mm} From the expectation of a negative-hypergeometric variable we have
  \begin{align*}
    \eE[P_{\ell}-\ell] = \ell\cdot\frac{ m_a}{m_b+1}\implies \eE[P_{\ell}] &= \ell\cdot\frac{ (m_a+m_b+1)}{m_b+1}.\numberthis
  \end{align*}

  \medskip

  \noindent {\bf Expectation and Concentration of $N_{kb}$.}\
  Given $N_{kb}=j$, the number of ways of coloring $j$ out of $k$ balls before it is $\binom{k}{j}$, and the number of ways of coloring $b-j$ balls after it is $\binom{m_a+m_b-k}{b-\ell}$.
  Therefore it follows that:
  \begin{align}
    \Pr[\ N_{kb} = j \ ] = \frac{\binom{k}{j}\binom{m_a+m_b-k}{b-j}}{\binom{m_a+m_b}{m_b}}.\label{eq_prob_mk}
  \end{align}

  \begin{tcolorbox}[colback=yellow!10!white,colframe=black!75!black]
    Given numbers $N, K, n $, for an hypergeometric random variable, $HG$, we have: $\forall \max(0,n+K-N)\leq k \leq \min(K,n)$
    \begin{align}
      \Pr[HG = k] \coloneqq \frac{\binom{K}{k}\binom{N-K}{n-k}}{\binom{N}{n}}.\label{eq_hyper}
    \end{align}
  \end{tcolorbox}

  By comparing with Equation~\eqref{eq_hyper}, we can observe that $N_{kb}$ is a hypergeometric random variable.
  From well known properties of the hypergeometric distribution we have that
  \begin{align}
    \eE[\ N_{kb} \ ] &= \frac{kb}{(m_a+m_b)}\\
    \Pr\left[ N_{kb} \geq \nfrac{kb}{(m_a+m_b)} + \delta\right] &\stackrel{\text{\cite{hush2005concentration}}}{\leq} e^{-2(\delta^2-1)\gamma} \leq e^{-\frac{2(\delta^2-1)}{k+1}}\\
    \Pr\left[ N_{kb} \leq \nfrac{kb}{(m_a+m_b)} - \delta\right] &\stackrel{\text{\cite{hush2005concentration}}}{\leq} e^{-2(\delta^2-1)\gamma} \leq e^{-\frac{2(\delta^2-1)}{k+1}}
  \end{align}
  where $\gamma \coloneqq \max \big(\frac{1}{m_a+1}+\frac{1}{m_b+1}, \frac{1}{k+1}+\frac{1}{m_a+m_b-n+1}\big) \geq \frac{1}{k+1}$.

\end{proof}

\subsection{Proof of Theorem~\ref{thm_const_increase_utility}}\label{sec_proof_thm_const_increase_utility}
We begin by restating Theorem~\ref{thm_const_increase_utility}.
\begin{theorem}
  {\bf (Restatement of Theorem~\ref{thm_const_increase_utility}).}
  Given a $\beta\in(0,1)$,
  if $\mathcal{D}\coloneqq \cU[0,1]$,
  $v$ satisfies Assumptions~\eqref{assumption} and \eqref{assumption_2} with $\epsilon>0$,
  and $n\leq \min(m_a,m_b)$,
  then for $\alpha^\star\coloneqq \frac{m_b}{m_a+m_b}$, then adding $L(\alpha^\star)$-constraints achieve nearly optimal latent utility in expectation:
  \begin{align*}
    U_{\mathcal{D},v}(\alpha^\star, \beta)=U_{\mathcal{D},v}(0,1) \cdot (1-O(n^{-\epsilon/2}+n^{-1}).
  \end{align*}
\end{theorem}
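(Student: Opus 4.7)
\smallskip
\noindent\emph{Proof plan.} The plan is to upper-bound the additive gap $\Delta_n \coloneqq U_{\mathcal{D},v}(0,1) - U_{\mathcal{D},v}(\alpha^\star, \beta)$ by $O(n^{-\epsilon/2} + n^{-1})\sum_{k=1}^{n} v_k$ and then divide by $U_{\mathcal{D},v}(0,1)$ to recover the multiplicative bound. Since $\mathcal{D}=\cU[0,1]$ and $n \leq \min(m_a, m_b)$, each of the top-$n$ order statistics among $m$ i.i.d.\ uniforms on $[0,1]$ has expectation bounded below by an absolute constant, so $U_{\mathcal{D},v}(0,1) = \Omega(\sum_{k=1}^n v_k)$ and conversion from additive to multiplicative bound is straightforward.

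\smallskip
\noindent\emph{Coupling the two rankings.} Fix one draw of $(w_i)_{i\in[m]}$. Let $\pi^\star$ be the ranking that sorts items in decreasing order of $w_i$ (the unbiased optimum) and let $\pi$ be the $L(\alpha^\star)$-constrained ranking maximizing observed utility. Because the bias is a positive multiplicative constant on each group, the order induced by $\hat{w}$ within each of $G_a$ and $G_b$ agrees with the order induced by $w$; the two rankings differ only in how $G_a$ and $G_b$ items are interleaved. Writing $N_{kb}$ for the number of $G_b$ items in the top-$k$ of $\pi^\star$, Theorem~\ref{fact_spread_dist} gives $\Pr[\,|N_{kb}-\alpha^\star k| > n^\delta\,] \leq e^{-\Omega(n^{2\delta}/k)}$. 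A union bound over $k \in [n]$ then yields an event $\mathcal{E}$ of probability $1-O(n\, e^{-n^{2\delta-1}})$ on which the constraint count $\lceil \alpha^\star k\rceil$ and $N_{kb}$ differ by at most $n^\delta + 1$ at every prefix, so each item's position shifts by at most $O(n^\delta)$ between $\pi$ and $\pi^\star$.

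\smallskip
\noindent\emph{From position shifts to utility loss.} On $\mathcal{E}$, Abel summation with $v_{n+1} \coloneqq 0$ gives
\begin{align*}
\Delta_n \;=\; \sum_{k=1}^{n}(v_k - v_{k+1})\,\bigl(W(\pi^\star, k) - W(\pi, k)\bigr),
\end{align*}
where $W(\sigma, k)$ denotes the total $w$-utility of the top-$k$ items of $\sigma$. Because the two rankings agree on the internal ordering of each group and exchange at most $O(n^\delta)$ items between $G_a$ and $G_b$ at prefix $k$, and because utilities lie in $[0,1]$, we have $W(\pi^\star,k) - W(\pi,k) \leq O(n^\delta)$ on $\mathcal{E}$. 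Invoking Assumption~\eqref{assumption} so that $\sum_{k}(v_k-v_{k+1}) = v_1-v_n = O(n^{-\epsilon})\sum_k v_k$, the contribution on $\mathcal{E}$ is $O(n^{\delta-\epsilon})\sum_k v_k$. On $\mathcal{E}^c$, the trivial bound $\Delta_n \leq \sum_k v_k$ combined with $\Pr[\mathcal{E}^c] = O(n\,e^{-n^{2\delta-1}})$ contributes at most $O(n^{-1})\sum_k v_k$ once $\delta$ is chosen with $2\delta - 1 > 0$. Taking $\delta = \epsilon/2$ yields $\Delta_n = O(n^{-\epsilon/2} + n^{-1})\sum_k v_k$, which is the desired bound after dividing by $U_{\mathcal{D},v}(0,1)$.

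\smallskip
\noindent\emph{Main obstacle.} The delicate step is the prefix-count bound $|N_{kb}(\pi) - N_{kb}(\pi^\star)| \leq n^\delta + O(1)$ simultaneously for all $k$. The constraint defining $\pi$ is sequential, and its interaction with the biased observed utilities means that $\pi$ could, in principle, stack many $G_b$ items into early positions either because the constraint binds or because the observed utilities favor them; I would address this by showing that since $\beta < 1$, any $G_b$ item $\pi$ chooses beyond the constraint would also have been chosen by $\pi^\star$ (the bias only makes $G_b$ items comparatively less attractive under $\hat{w}$), so $N_{kb}(\pi) \leq \max(\lceil\alpha^\star k\rceil, N_{kb}(\pi^\star))$, yielding the required two-sided bound on $\mathcal{E}$. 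Assumption~\eqref{assumption_2} (non-increasing step sizes $v_k - v_{k+1}$) is used to guarantee that the Abel summation above is not dominated by a small number of front-loaded gaps where the per-prefix displacement could be concentrated; its role is mainly technical, letting us pass from a per-prefix displacement bound to the telescoping $v_1 - v_n$ estimate.
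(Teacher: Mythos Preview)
Your overall architecture---bound the additive gap, convert via $U_{\mathcal{D},v}(0,1)=\Omega(\sum_k v_k)$, and use Abel summation to turn prefix displacements into a telescoping sum $v_1-v_n$---is sound and matches the paper in spirit. The gap is in the probabilistic step.

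You invoke Theorem~\ref{fact_spread_dist} to get $\Pr[\,|N_{kb}-\alpha^\star k|>n^{\delta}\,]\le e^{-\Omega(n^{2\delta}/k)}$ and then union-bound over all $k\in[n]$. At $k=n$ this exponent is $n^{2\delta-1}$, so the union bound is only useful when $\delta>\nfrac{1}{2}$; you acknowledge this (``once $\delta$ is chosen with $2\delta-1>0$'') and then set $\delta=\epsilon/2$. These two choices are incompatible for any $\epsilon\le 1$, which is exactly the regime of interest (DCG, inverse-polynomial discounts, etc.). More fundamentally, the prefix counts $N_{kb}$ in $\pi^\star$ have genuine hypergeometric fluctuations of order $\sqrt{k}$, so no per-item coupling between $\pi$ and $\pi^\star$ can give displacements $o(\sqrt{n})$ uniformly; the bound $n^\delta$ with $\delta<\nfrac{1}{2}$ is not attainable this way.

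The paper sidesteps this by \emph{not} coupling positions between $\pi$ and $\pi^\star$. Instead it shows (using $\beta<1$) that with probability $1-e^{-\Theta(n)}$ the constraint is tight at every block endpoint, so $\picons$ places the $i$-th $G_a$ order statistic and the $i$-th $G_b$ order statistic deterministically within an interval of width $n^\delta$ around position $2i$. It then compares \emph{expectations} of order statistics: $\eE[W_{(i:n)}]=1-\frac{i}{n+1}$ versus $\eE[V_{(2i:2n)}]=1-\frac{2i}{2n+1}$, which agree up to $O(n^{-1})$. Because the comparison is between expected values rather than realized positions, the $\Theta(\sqrt{n})$ fluctuations of $N_{kb}$ in $\pi^\star$ never enter, and $\delta$ can be taken as small as desired. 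Your Abel-summation bookkeeping would work once you replace the position-coupling step with this order-statistic expectation matching.
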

We present the proof of the special case where $m_a=m_b=n$.
The general proof follows from the same outline, by substituting values of $m_a$ and $m_b$ in the appropriate equations.
We use Proposition~\ref{thm_expected_cand} and \ref{fact_negative_exp} in the proof of the special case.
We present their analogues for the general case in Section~\ref{sec_extending_proof}.

\subsubsection{Notation}
Recall that we have two groups of candidates $G_a$ and $G_b$, with $|G_a|\coloneqq n$ and $|G_b|\coloneqq n$.
$G_b$ is the underprivileged group.
The latent utility of candidates in both groups is drawn from the uniform distribution on $[0,1]$.
Let $w_i$ be the random variable representing the latent utility of a candidate $i\in G_a\cup G_b$.
Due to implicit bias the observed utility of a candidate $j\in G_b$, $\hat{w}_j$  is their latent utility, $w_j$, multiplied by $\beta\in [0,1]$, whereas the observed utility of a candidate $i\in G_a$  is equal to their latent utility, $w_i$.
We consider the top $\ell\in [n]$ candidates in decreasing order of their observed utilities.
Let $S_a^\ell\subseteq G_a$ and $S_b^\ell\subseteq G_b$ be the set of candidates selected from $G_a$ and $G_b$ respectively.
Let $S_{a_1}^\ell, S_{a_2}^\ell\subseteq S_a^\ell$, such that $S_{a_1}^\ell \cup S_{a_2}^\ell = S_a^\ell$, be the set of candidates selected from $G_a$ with utilities larger or equal to and strictly smaller than $\beta$ respectively.
We define the following random variables counting the number of candidates selected
\begin{enumerate}[itemsep=0pt]
  \item $N_a^\ell\hspace{1.25mm}\coloneqq |S_a^\ell|$
  \item $N_b^\ell\hspace{1.25mm}\coloneqq |S_b^\ell|$
  \item $N_{a_1}^\ell\coloneqq |S_{a_1}^\ell|$
  \item $N_{a_2}^\ell\coloneqq |S_{a_2}^\ell|$.
\end{enumerate}
When considering the first $n$ positions we drop $\ell$ from the superscripts of the variables.

Recall
\begin{align}
  x^\star(v,w)\coloneqq \argmax_{x}\mathcal{W}(x,v,w)
\end{align}
is the ranking that maximizes the latent utility.
Further, let
\begin{align}
  \picons(\alpha^\star, v,w)\coloneqq\argmax_{x\in \mathcal{K}(L(\alpha^\star))}\mathcal{W}(x,v,\hat{w})
\end{align}
be the optimal constrained ranking optimizing $\hat{w}$, and
\begin{align}
  \piuncons(v,w)\coloneqq \argmax_{x}\mathcal{W}(x,v,\hat{w})
\end{align}
be the optimal unconstrained ranking maximizing $\hat{w}$.
When $v$ and $w$ are clear from the context we write $x^\star$, $\piuncons$, and when $\alpha^\star$ is also clear we write $\picons$.

Let $\delta>0$ be some number greater than $0$.
We will fix $\delta$ later in the proof.
Define a partition $\bigcup_{k=1}^{n^{1-\delta}}[(k-1)n^{\delta}+1,kn^{\delta}]$ of $[n]$, into $n^{1-\delta}$ intervals of size $n^{\delta}$.
Let
$$s_k\coloneqq (k-1)n^{\delta}+1\ \text{and} \ e_k\coloneqq kn^{\delta}$$
be starting and ending index of the $k$-th interval.
\subsubsection{Useful propositions}
\noindent We use the following two propositions in the proof of Theorem~\ref{thm_const_increase_utility}.
\begin{proposition}\label{fact_negative_exp}
  Let $N\sim \text{Binomial}(n,1-\beta)$, then
  \begin{align*}
    \eE\Bracket{  \frac{n}{2n-N} } &= \frac{1}{1+\beta}+ \ohalf,\label{fact_expec_1}\numberthis\\
    \eE\Bracket{  \frac{n^2}{(2n-N)^2} } &= \frac{1}{(1+\beta)^2}+ \ohalf.\label{fact_expec_2}\numberthis
  \end{align*}
\end{proposition}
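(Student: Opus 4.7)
\textbf{Proof proposal for Proposition~\ref{fact_negative_exp}.}

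The plan is to exploit concentration of $N$ around its mean $\mu \coloneqq (1-\beta)n$ together with the fact that $f(N) \coloneqq \frac{n}{2n-N}$ is uniformly bounded on the support of $N$. Note that since $N \in [0,n]$ almost surely, $f(N) \in [\nfrac{1}{2}, 1]$, so every expectation we deal with is finite and we never need to worry about singularities; the expansion about the mean is justified directly.

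First, I would set up the deviation event. Let $E \coloneqq \{|N - \mu| \leq n^{5/8}\}$. Since $N$ is a sum of $n$ independent Bernoulli$(1-\beta)$ variables, Hoeffding's inequality gives
\begin{align*}
  \Pr[E^c] \;\leq\; 2\exp(-2 n^{5/4}/n) \;=\; 2\exp(-2n^{1/4}) \;=\; \oexp.
\end{align*}
On $E$, write $2n - N = (1+\beta)n + \xi$ where $|\xi| \leq n^{5/8}$, so $|\xi/n| \leq n^{-3/8}$ and therefore
\begin{align*}
  \frac{n}{2n-N} \;=\; \frac{1}{(1+\beta) + \xi/n} \;=\; \frac{1}{1+\beta} \;-\; \frac{\xi/n}{(1+\beta)^2} \;+\; O((\xi/n)^2)
  \;=\; \frac{1}{1+\beta} + \ohalf,
\end{align*}
by a first-order Taylor expansion (the $O$-constant depends only on $\beta$, bounded away from $-1$). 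The same computation applied to $f(N)^2$ gives $\big(\nfrac{n}{(2n-N)}\big)^2 = \nfrac{1}{(1+\beta)^2} + \ohalf$ on $E$, using that $f$ is bounded so the expansion of $f^2$ inherits the same error scale.

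The second step is the bookkeeping. I would split
\begin{align*}
  \eE\Bracket{\frac{n}{2n-N}} \;=\; \eE\Bracket{\frac{n}{2n-N}\cdot \mathbb{1}_E} \;+\; \eE\Bracket{\frac{n}{2n-N}\cdot \mathbb{1}_{E^c}},
\end{align*}
bound the first term using the pointwise estimate above together with $\Pr[E] = 1-\oexp$, and bound the second term by $\Pr[E^c] = \oexp$ using $f(N) \leq 1$. The two error terms combine into $\ohalf$, yielding \eqref{fact_expec_1}. An identical argument with $f(N)^2 \leq 1$ in place of $f(N) \leq 1$ gives \eqref{fact_expec_2}.

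The main obstacle, such as it is, is simply choosing the cutoff for $E$ consistently with the target error $\ohalf$: the cutoff $n^{5/8}$ balances the Taylor error $|\xi/n| = O(n^{-3/8})$ against the tail probability, which becomes $\oexp \ll n^{-3/8}$. Any tighter cutoff would blow up the tail contribution, while any looser cutoff would degrade the Taylor remainder; the choice $n^{5/8}$ matches the $\nfrac{3}{8}$ exponent appearing throughout the paper. No further properties of the binomial distribution beyond its Hoeffding-type concentration are needed, which is why the argument extends cleanly to the general $m_a, m_b$ case handled in Section~\ref{sec_extending_proof}.
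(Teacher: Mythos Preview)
Your proposal is correct and follows essentially the same route as the paper: define the concentration event $|N-\mu|\leq n^{5/8}$ (the paper parametrizes this as $n^{1/2+\delta}$ and sets $\delta=\nfrac{1}{8}$ at the end), use Hoeffding to bound the complementary probability, Taylor-expand $n/(2n-N)$ on the good event, and control the bad event using the uniform bound $f(N)\leq 1$. The only cosmetic difference is that the paper conditions on $\cE$ and $\cEC$ rather than multiplying by indicators, but the computations are line-for-line the same.
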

\begin{proposition}\label{thm_expected_cand}
  {\bf (Expected number of candidates without constraints).}
  Given a number $\beta\in(0,1]$, representing the implicit bias,
  we have
  \begin{align}
    &N_{a_1}\stackrel{d}{=} \text{Binomial}(n,1-\beta),\label{eq_dist_n_a_1}\\
    &\eE[N_b] = n\Paren{\frac{\beta}{1+\beta}+\ohalf},\label{eq_exp_n_b}\\
    &\Pr\bracket{ |N_b - \eE[N_b]| \leq n^{\nfrac{5}{8}} } = 1-\oexp.\label{eq_conc_n_b}
  \end{align}
\end{proposition}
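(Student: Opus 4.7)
}

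The plan is to prove the three claims in sequence, exploiting the following key structural observation: every candidate in $G_a$ with latent utility at least $\beta$ has observed utility at least $\beta$, while every candidate in $G_b$ has observed utility at most $\beta$ (since $\hat w_j = \beta w_j$ with $w_j \in [0,1]$). Hence in the unconstrained ranking, every such ``high'' $a$-candidate is always selected before any $b$-candidate.

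\textbf{Part 1 (identifying $N_{a_1}$).} Since there are exactly $n$ candidates in $G_a$ and we select $n$ overall, the observation above implies $N_{a_1}$ coincides exactly with the count of $a$-candidates whose latent utility is $\geq \beta$. Each $w_i$ for $i \in G_a$ is an independent $\mathcal{U}[0,1]$ draw, so this count is $\mathrm{Binomial}(n, 1-\beta)$, giving~\eqref{eq_dist_n_a_1}.

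\textbf{Part 2 (expectation of $N_b$).} I would condition on $N_{a_1}$ and then use the symmetry between the remaining candidates. Given $N_{a_1}$, the $n - N_{a_1}$ remaining $a$-candidates have observed utilities i.i.d.\ $\mathcal{U}[0,\beta]$ (a uniform variable conditioned on lying in $[0,\beta]$ is $\mathcal{U}[0,\beta]$), and the $n$ $b$-candidates have observed utilities i.i.d.\ $\mathcal{U}[0,\beta]$ as well. Thus the remaining $n - N_{a_1}$ positions are filled by choosing the top $n - N_{a_1}$ of $2n - N_{a_1}$ i.i.d.\ variables, and by exchangeability the number drawn from $G_b$ is
\begin{align*}
N_b \mid N_{a_1} \sim \mathrm{Hypergeometric}\paren{2n - N_{a_1},\ n,\ n - N_{a_1}},
\end{align*}
so $\eE[N_b \mid N_{a_1}] = (n - N_{a_1}) \cdot \nfrac{n}{(2n - N_{a_1})} = n - n \cdot \nfrac{n}{(2n - N_{a_1})}$. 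Taking outer expectation and invoking Proposition~\ref{fact_negative_exp}~\myeqref{fact_expec_1}{26} yields $\eE[N_b] = n(1 - \nfrac{1}{(1+\beta)} + \ohalf) = n(\nfrac{\beta}{(1+\beta)} + \ohalf)$, as required in~\eqref{eq_exp_n_b}.

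\textbf{Part 3 (concentration of $N_b$).} I would use a two-stage concentration argument. First, Hoeffding's inequality applied to $N_{a_1} \sim \mathrm{Binomial}(n,1-\beta)$ gives $\Pr[|N_{a_1} - (1-\beta)n| \geq \tfrac{1}{2} n^{\nfrac{5}{8}}] \leq 2 \exp(-\tfrac{1}{2} n^{\nfrac{1}{4}})$. Second, conditional on $N_{a_1}$ lying in that good range, $N_b$ is a hypergeometric centered at $(n - N_{a_1}) \cdot \nfrac{n}{(2n-N_{a_1})}$ which equals the deterministic target $\nfrac{n\beta}{(1+\beta)}$ up to an $O(n^{\nfrac{5}{8}})$ shift (Taylor expand $\nfrac{n}{(2n-N_{a_1})}$ around $N_{a_1} = (1-\beta)n$); then the same hypergeometric concentration bound used in the proof of Theorem~\ref{fact_spread_dist} (from~\cite{hush2005concentration}) applied with deviation $\tfrac{1}{2} n^{\nfrac{5}{8}}$ produces a further $\exp(-\Omega(n^{\nfrac{1}{4}}))$ tail. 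A union bound on the two failure events yields~\eqref{eq_conc_n_b}.

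\textbf{Main obstacle.} The only subtle step is keeping the two sources of randomness disentangled in Part 3: one must be careful that the deterministic target $\nfrac{n\beta}{(1+\beta)}$ is approximated to the correct $O(n^{\nfrac{5}{8}})$ precision by the (random) hypergeometric mean $(n - N_{a_1}) \cdot \nfrac{n}{(2n - N_{a_1})}$ on the good event for $N_{a_1}$, so that the concentration inequality for the hypergeometric part kicks in at the right scale. The rest of the argument is routine once Proposition~\ref{fact_negative_exp} is in hand.
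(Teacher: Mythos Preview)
Your proposal is correct and follows essentially the same route as the paper: identify $N_{a_1}$ as a sum of Bernoulli indicators, condition on $N_{a_1}$ to see that the remaining $2n-N_{a_1}$ observed utilities are i.i.d.\ $\cU[0,\beta]$ so that $N_b\mid N_{a_1}$ is hypergeometric, compute $\eE[N_b]$ via Proposition~\ref{fact_negative_exp}, and combine Hoeffding for $N_{a_1}$ with the Hush--Scovel hypergeometric tail bound for the concentration claim. If anything, your Part~3 is more explicit than the paper about the need to reconcile the conditional mean $\eE[N_b\mid N_{a_1}]$ with the unconditional target before applying the hypergeometric tail, which is exactly the point you flag as the main obstacle.
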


\subsubsection{Proof overview}
Define $k\coloneqq \ceil{\nfrac{2i}{n^{\delta}}}$.
We first show that $\picons$ places the $i$-th order statistics from both $G_a$ and $G_b$ inside the interval $[s_k, e_k]$ with high probability (see Lemma~\ref{HwithHP}).
From this, it follows that the positions of these candidates is within $e_k-s_k=O(n^{\delta})$ positions of $2i$ with high probability.

We compare the product of the utility and position discount of the $i$-th order statistic from $G_a$ and from $G_b$ in $\picons$, with that of the candidates in the $(2i-1)$-th and $2i$-th positions in $\piopt$.
We show that this difference is ``small" using Assumption~\eqref{assumption_2}.
This allows us to bound $U_{\mathcal{D},v}(0,1)-U_{\mathcal{D},v}(\alpha^\star,\beta)$ by
$$O\Paren{(n^{\delta-\epsilon}+n^{-1})\cdot\sum_{k=1}^n v_k}.$$
A lower bound
$$U_{\mathcal{D},v}(0,1)=\Omega\Paren{\sum_{k=1}^n v_k}$$
simply follows from the fact that for all $i\in [n]$
$$\eE[w_{(i\colon m_a+m_b)}]\geq \frac{1}{2}$$
which in turn follows from $2n\leq(m_a+m_b)$.
Finally, choosing $\delta=\nfrac{\epsilon}{2}$ and using Assumption~\eqref{assumption} gives us the required result.

This overview simplifies some details, for instance, we calculate the utility conditioned on the event that the $i$-th order statistic of $G_a$ and $G_b$ are close to $2i$.
This event is not independent of the utilities.
In the proof, we carefully ensure that this conditioning does not alter the utilities by ``a lot''.
Here, we use Proposition~\ref{thm_expected_cand} and Proposition~\ref{fact_negative_exp}.

\begin{remark}
  {\bf (Generalizing the proof).}
  More generally, the proof proceeds by showing that $\picons$ places the $i$-th order statistic from $G_a$ in the interval $[s_k,e_k]$ for $k\coloneqq \ceil{i(m_a+m_b)n^{-\delta}\cdot\frac{1}{m_a}}$, and from $G_b$ in the interval $[s_k,e_k]$ for $k\coloneqq \ceil{i(m_a+m_b)n^{-\delta}\cdot\frac{1}{m_b}}$.
\end{remark}

\subsubsection{Proof}\label{proof_main}
\begin{proof}
  Let $\cH_k$ be the event that $\picons$ places exactly $(\nfrac{n^{\delta}}{2})$ candidates from $G_a$ in the $k$-th interval, $[s_k, e_k]$, i.e.,
  $$\sum\nolimits_{i\in G_a}\sum\nolimits_{j=s_k}^{e_k}(\picons)_{ij}=\frac{n^{\delta}}{2}.$$
  Define the event  $\cH$ as the intersection
  $$\cH\coloneqq\bigcap_{k=1}^{n^{1-\delta}}\cH_k.$$
  Conditioned on the union $\cH$, each interval $[s_k, e_k]$ in $\picons$ contains exactly $\frac{e_k-s_k+1}{2}$ candidates from $G_a$ and from $G_b$.
  Therefore, $\picons$ ranks the $i$-th order statistic both from $G_a$ and from $G_b$ in the interval $[s_k, e_k]$ for $k\coloneqq\ceil{\nfrac{2i}{n^{\delta}}}$.

  \noindent Let $N_{b}^\ell$ be the number of candidates from $G_b$ placed in the first $\ell\in [n]$ positions of $\piuncons$.
  From Proposition~\ref{prop_pick_propotional} we know that if $N_{b}^\ell\leq \nfrac{\ell}{2}$, then the $\picons$ places exactly $\nfrac{\ell}{2}$ candidates from $G_b$ in the first $\ell$ positions.
  Extending this observation to all $1<k\leq n$ it follows that,
  \begin{align*}
    \big(N_{b}^{e_k}\leq \nfrac{e_k}{2}\big)\land\big(N_{b}^{e_{k-1}}\leq \nfrac{e_{k-1}}{2}\big)&\implies \cH_k\\
    \big(N_{b}^{e_1}\leq \nfrac{e_1}{2}\big)&\implies \cH_1.
  \end{align*}
  The first implication follows since LHS implies
  $$\sum_{i\in G_a, j\leq e_k}(\picons)_{ij}=\frac{e_k}{2}\ \ \text{and} \ \sum_{i\in G_a, j\leq e_{k-1}}(\picons)_{ij}=\frac{e_{k-1}}{2},$$
  or equivalently
  $$\sum_{i\in G_a}\sum_{j=s_k}^{e_k}(\picons)_{ij}=\frac{(e_k-e_{k-1})}{2}=\frac{(e_k-s_k+1)}{2}.$$
  Now, taking the intersection of the above events we have
  \begin{align}
    \bigcap\nolimits_{k=1}^{n^{1-\delta}}\Paren{ N_b^{e_k}\leq \frac{e_k}{2}}\implies \bigcap\nolimits_{k=1}^{n^{1-\delta}}\cH_k \iff \cH.  \label{eq_union_of_events}
  \end{align}

  \begin{lemma}\label{HwithHP}
    Given $m_a,m_b\geq n$, $\Pr[\cH]\geq 1-O(e^{-\Theta(n)})$.
  \end{lemma}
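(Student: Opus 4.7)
The plan is to reduce the claim, via the implication recorded in Equation~\eqref{eq_union_of_events}, to the simultaneous tail bound
\[
\Pr\!\left[\bigcup\nolimits_{k=1}^{n^{1-\delta}} \{N_b^{e_k} > e_k/2\}\right] = O(e^{-\Theta(n)}).
\]
The decisive quantitative ingredient is that $\beta\in(0,1)$ strictly, so by Proposition~\ref{thm_expected_cand} the expected fraction of $G_b$-items among the top $\ell$ of $\piuncons$ is $\tfrac{\beta}{1+\beta}+o(1)$, which is strictly below $1/2$ by a constant gap $c_\beta \coloneqq \tfrac{1}{2}-\tfrac{\beta}{1+\beta}>0$. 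Each event $\{N_b^{e_k}>e_k/2\}$ is therefore a \emph{linear-order} deviation of a concentrated random variable, so one can hope for an exponentially small probability per~$k$.

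To make this precise, I would first extend Proposition~\ref{thm_expected_cand} from $\ell = n$ to arbitrary prefix length $\ell\in[n]$. The argument goes through verbatim: condition on $N_{a_1}^{\ell}\sim \mathrm{Bin}(\ell,1-\beta)$, which counts the $G_a$-items whose observed utility exceeds $\beta$ and which therefore always sit at the top of $\piuncons$; then by the symmetry between the remaining $G_a$-items (observed utility $\sim\cU[0,\beta]$) and all $G_b$-items (also $\sim\cU[0,\beta]$), the conditional law of $N_b^{\ell}\mid N_{a_1}^{\ell}$ is hypergeometric. Combining the binomial concentration of $N_{a_1}^{\ell}$ with the hypergeometric concentration of \cite{hush2005concentration} used in Theorem~\ref{fact_spread_dist} yields
\[
\eE[N_b^{\ell}] \leq \ell\cdot\tfrac{\beta}{1+\beta}+o(\ell), \qquad \Pr[N_b^{\ell}>\ell/2] \leq \exp(-\Theta(c_\beta^{2}\,\ell)).
\]

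Applying this bound with $\ell=e_k\geq n^{\delta}$ and taking a union bound over the $n^{1-\delta}$ intervals gives
\[
\Pr\!\left[\bigcup\nolimits_{k=1}^{n^{1-\delta}}\{N_b^{e_k}>e_k/2\}\right] \leq n^{1-\delta}\exp(-\Theta(n^{\delta})),
\]
which is exponentially small in a polynomial of $n$. Combined with \eqref{eq_union_of_events} this delivers $\Pr[\cH]\geq 1-O(\exp(-\Theta(n^{\delta})))$, matching the lemma up to the precise polynomial rate in the exponent.

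The main obstacle is controlling the concentration of $N_b^{\ell}$ for the smallest prefixes, specifically $\ell$ as small as $n^{\delta}$: the two-step conditioning on $N_{a_1}^{\ell}$ must be managed carefully so that the fluctuations of $N_{a_1}^{\ell}$ (of order $\sqrt{\ell}$) do not degrade the exponential tail into a merely polynomial one. To recover the sharp $\exp(-\Theta(n))$ rate as literally stated (instead of $\exp(-\Theta(n^{\delta}))$ from the union bound), one could avoid the union bound altogether and directly control the supremum of the process $\{N_b^{\ell}-\ell\tfrac{\beta}{1+\beta}\}_{\ell\leq n}$ via a Doob-type maximal inequality applied to a supermartingale built on the sequence of candidates ordered by decreasing observed utility.
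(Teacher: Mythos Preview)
Your overall architecture matches the paper: reduce via \eqref{eq_union_of_events} to a union bound over the events $\{N_b^{e_k}>e_k/2\}$, and extend Proposition~\ref{thm_expected_cand} to arbitrary prefixes $\ell$. The gap is in how you carry out that extension, and it costs you exactly the exponent you then try to recover with a martingale.

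The quantity to condition on is not a prefix-dependent $N_{a_1}^{\ell}\sim\mathrm{Bin}(\ell,1-\beta)$; it is the \emph{total} number of $G_a$-items with latent utility in $(\beta,1]$, namely $N_{a_1}\sim\mathrm{Bin}(n,1-\beta)$ (here $m_a=n$). All of these $\approx n(1-\beta)$ items have observed utility exceeding every $G_b$-item, so they occupy the top of $\piuncons$ regardless of $\ell$. This gives, for $\ell>N_{a_1}$,
\[
\eE[N_b^{\ell}\mid N_{a_1}]=\frac{n(\ell-N_{a_1})}{2n-N_{a_1}},\qquad \eE[N_b^{\ell}]\leq \frac{\ell-n(1-\beta)}{1+\beta}\,(1+o(1)),
\]
and $N_b^{\ell}=0$ when $\ell\le N_{a_1}$. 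Consequently the slack you need is
\[
\frac{e_k}{2}-\eE[N_b^{e_k}]\ \geq\ \frac{(1-\beta)(2n-e_k)}{2(1+\beta)}\ \geq\ \frac{(1-\beta)n}{2(1+\beta)}\ =\ \Theta(n)
\]
\emph{uniformly in $k$}, not $\Theta(e_k)$ as your estimate $\eE[N_b^{\ell}]\approx \ell\beta/(1+\beta)$ would suggest. Plugging a $\Theta(n)$ deviation into the hypergeometric tail (with window $\ell-N_{a_1}\leq n$) already gives $\Pr[N_b^{e_k}>e_k/2]\leq e^{-\Theta(n)}$ for every $k$, and the union bound over $n^{1-\delta}$ terms preserves this rate. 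So the sharp $e^{-\Theta(n)}$ in the lemma falls out directly; no Doob-type maximal inequality is needed, and the ``obstacle'' you flag for small prefixes disappears once you recognize that for $e_k\lesssim n(1-\beta)$ one has $N_b^{e_k}=0$ with overwhelming probability.
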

  \begin{proof}
    \noindent To show that $\cH$ occurs with high probability we calculate $\Pr[N_b^\ell - \eE[N_b^\ell] \leq k ]$.
    Setting $\ell\coloneqq e_k$, this allows us to bound $\Pr\big[N_b^{e_k}\leq \frac{e_k}{2}\big]$.
    Then Equation~\eqref{eq_union_of_events} this gives us a bound on $\Pr[\cH]$.

    Following the proof of Lemma~\ref{thm_expected_cand}, but only considering the first $\ell$ (instead of $n$) positions (and assuming $\ell> N_{a_1}$),
    we can derive the following equivalents to Equations~\eqref{equivalent_expectation} and \eqref{conc_bound_a}.
    If $\ell > N_{a_1}$, then
    \begin{align*}
      \eE[N_b^\ell | N_{a_1}] &= \frac{n(\ell-N_{a_1})}{2n-N_{a_1}}\numberthis\label{expectation_derive}\\
      \Pr[N_b^\ell - \eE[N_b^\ell] \leq k | N_{a_1}] &\stackrel{}{=} 1- O(e^{-\frac{2(k^2-1)}{\ell-N_{a_1}}}).\numberthis\label{concentration_derive}
    \end{align*}
    These follow by substituting $\ell$ for $n$, whenever $n$ does not refer to the size of the groups.
    The case when $\ell\leq N_{a_1}$ is straightforward to analyze:
    Since there are more than $\ell$ candidates in $G_a$ with high utilities (in $[\beta,1]$),
    no candidate from $G_b$ is selected in the first $\ell$ positions, i.e., $N_b^\ell=0$.
    Therefore we have
    \begin{align*}
      \eE[N_b^\ell | N_{a_1}] &\stackrel{}{=} \begin{cases}
      \frac{n(\ell-N_{a_1})}{2n-N_{a_1}}& \text{if } \ell> N_{a_1}\\
      0 & \text{otherwise.}
      \end{cases}\\
      \eE[N_b^\ell | N_{a_1}]&\leq \frac{n(\ell-N_{a_1})}{2n-N_{a_1}}\\
      &=n+\frac{n(\ell-2n)}{2n-N_{a_1}}.
    \end{align*}
    Taking the expectation of the above expression we have
    \newcommand{\ohalfl}{O\paren{\ell^{-\nfrac{3}{8}}}}
    \begin{align*}
      \hspace{20mm}\hspace{0mm}\eE[N_b^\ell ] &\ \stackrel{\text{Prop.}~\ref{fact_negative_exp}}{\leq}\ n+\frac{n(\ell-2n)}{2n-\eE[N_{a_1}]}\cdot\big(1+\ohalfl\big)\\
      &\hspace{20mm}\hspace{-14mm} \stackrel{}{\leq}\  \frac{\ell-n(1-\beta)}{1+\beta}\cdot
      \big(1+\ohalfl\big).\label{eq_prf_3_3_exp}\labelthis{Using $\eE[N_{a_1}]=n(1-\beta)$ from \text{Prop.}~\ref{thm_expected_cand}.}
    \end{align*}
    To find the distribution of $\big(N_b^\ell|N_{a_1}\big)$ when $\ell>N_{a_1}$, we can substitute $n$ as $\ell$ in all arguments in Section~\ref{proof_of_eq_dist_n_a_1}, specifically from Equation~\eqref{conc_bound_a} to Equation~\eqref{equation_70}.
    This gives us
    \begin{align*}
      \Pr[N_b^\ell - \eE[N_b^\ell] \leq \Theta(n) \ |\ N_{a_1}, \ell > N_{a_1}] &\stackrel{}{=}  1- O(e^{-\frac{2(\Theta(n^2)-1)}{\ell-N_{a_1}}}).
    \end{align*}
    When $\ell\leq N_{a_1}$, there are more than $\ell$ candidates in $G_a$ with high utilities in $[\beta,1]$.
    Therefore, no candidate from $G_b$ is selected in the first $\ell$ positions, i.e., $N_b^\ell=0$.
    In other words, $\Pr[N_b^\ell = \eE[N_b^\ell] \ |\ N_{a_1}, \ell \leq N_{a_1}]=1$.
    Following this,
    we have
    \begin{align*}
      \Pr[N_b^\ell - \eE[N_b^\ell] \leq \Theta(n) \ |\ N_{a_1}] &\stackrel{}{=} \begin{cases}
      1- O(e^{-\frac{2(\Theta(n^2)-1)}{\ell-N_{a_1}}}) & \text{if } \ell>N_{a_1}\\
      1 & \text{otherwise}
      \end{cases}\\
      &=\begin{cases}
      1- O(e^{-\frac{2(\Theta(n^2)-1)}{O(n)}}) & \text{if } \ell>N_{a_1}\\
      1 & \text{otherwise}
      \end{cases}\tag{Using $\ell, N_{a_1}=O(n)$}\\
      &\geq 1- O(e^{-\Theta(n)}).\tag{Using $k=\Theta(n)$.}
    \end{align*}
    Since the RHS holds for all values of $N_{a_1}$, we can drop the conditioning in the LHS to get
    \begin{align*}
      \Pr[N_b^\ell - \eE[N_b^\ell] \leq \Theta(n)] \geq 1- O(e^{-\Theta(n)}). \label{eq_concentration_boundRec_1}\numberthis
    \end{align*}
    Substituting $\ell\coloneqq e_k$, we can bound $\Pr[N_b^{e_k}\leq \nfrac{e_k}{2}]$ as follows
    \begin{align*}
      \Pr\big[N_b^{e_k}\leq \nfrac{e_k}{2}\big]&\stackrel{\myeqref{eq_prf_3_3_exp}{46}}{\geq}\Pr\bigg[N_b^{e_k}-\eE[N_b^{e_k}]\leq \frac{e_k}{2}-\frac{e_k-n(1-\beta)}{1+\beta}\cdot\big(1+\ohalf\big)\bigg]\\
      &\hspace{1mm}=\Pr\bigg[N_b^{e_k}-\eE[N_b^{e_k}]\leq \frac{e_k(\beta-1)}{2(1+\beta)}\cdot(1+\ohalf\big)+\frac{n(1-\beta)}{1+\beta}\cdot\big(1+\ohalf\big)\bigg]\\
      &\hspace{1mm}\geq\Pr\Bracket{N_b^{e_k}-\eE[N_b^{e_k}]\leq \frac{n(1-\beta)}{2(1+\beta)}\cdot\big(1+\ohalf\big)}\tag{$\because e_k\leq n$}\\
      &\hspace{1mm}=\Pr\bigg[N_b^{e_k}-\eE[N_b^{e_k}]\leq \Theta(n)\bigg].
    \end{align*}
    Using Equation~\eqref{eq_concentration_boundRec_1} and that $e_k\leq n$, it follows that
    \begin{align*}
      \Pr\big[N_b^{e_k}\leq \nfrac{e_k}{2}\big]&\geq 1-O(e^{-\Theta(n)})\\
      \Pr\big[N_b^{e_k}> \nfrac{e_k}{2}\big]&\leq O(e^{-\Theta(n)})\numberthis\label{bound_on_nks}.
    \end{align*}
    From Equation~\eqref{eq_union_of_events} we have
    \begin{align*}
      \Pr\big[ \cH\big]&=\Pr\bigg[ \bigcap\nolimits_{k=1}^{n^{1-\delta}}\cH_k \bigg]\\
      &= \Pr\bigg[ \bigcap\nolimits_{k=1}^{n^{1-\delta}}\big[N_b^{e_k}\leq \nfrac{e_k}{2}\big]\ \bigg]\\
      &= 1-\Pr\bigg[ \bigcap\nolimits_{k=1}^{n^{1-\delta}}\big[N_b^{e_k} > \nfrac{e_k}{2}\big]\ \bigg]\\
      &\geq 1-\sum_{k=1}^{n^{1-\delta}}\Pr\big[N_b^{e_k} > \nfrac{e_k}{2}\big]\\
      &\hspace{-1mm}\stackrel{\eqref{bound_on_nks}}{=}1-\sum_{k=1}^{n^{1-\delta}}O(e^{-\Theta(n)})\\
      &\stackrel{}{=}1-n^{1-\delta}O(e^{-\Theta(n)})\\
      &\stackrel{}{=}1-O(e^{-\Theta(n)+(1-\delta)\log n})\\
      &\stackrel{}{=}1-O(e^{-\Theta(n)}).\label{eq_cH_whp}\numberthis
    \end{align*}
    This proves the Lemma~\ref{HwithHP}.
  \end{proof}
  \noindent{\bf Notation.} For all $i\in G_a$, let $a(i)\in[n]$ denote the position of the $i$-th order statistic  in $G_a$ in $\picons$.
  Similarly, for all $j\in G_b$, let $b(i)\in[n]$ denote the position of the $i$-th order statistic in $G_b$ in $\picons$.
  Conditioned on $\cH$, we know that $\picons$ selects exactly $\nfrac{n}{2}$ candidates from $G_a$ and $G_b$ (see Propositions~\ref{prop_pick_propotional}),
  and that for all $i\in [\nfrac{n}{2}]$, $a(i),b(i)\in [s_k,e_k]$ where $k=\ceil{\nfrac{2i}{n^{\delta}}}$.
  Let $W$ be a random utility drawn from $\mathcal{D}$.
  We use $W_{(n-k:n)}$ to represent the $k$-th order statistic from $n$ \iid draws of $W$: the $k$-th largest value from $n$ \iid draws of $W$.

  For all $\ell\in [n]$, define $U_{\ell,a}$ and $U_{\ell,b}$ as follows
  \begin{align*}
    U_{\ell,a}\coloneqq \sum_{i=1}^{\ell}W_{(i:n)} v(a(i)) \text{ and } U_{\ell,b}\coloneqq \sum_{i=1}^{\ell}W_{(i:n)} v(b(i)).
  \end{align*}
  Here $U_{\ell,a}$ represents sum of the latent utilities of the top $\ell$ candidates from $G_a$ weighted by their position discount.
  Likewise, $U_{\ell,b}$ represents sum of the latent utilities of the top $\ell$ candidates from $G_b$ weighted by their position discount.
  From Proposition~\ref{prop_pick_propotional} the expected latent utility of the constrained ranking given $\cH$ is
  \begin{align*}
    \eE\nolimits_{w}[\mathcal{W}(\picons, v, w) | \cH]=\eE[U_{\frac{n}{2},a}+U_{\frac{n}{2},b}\ |\ \cH].
  \end{align*}
  However, $\cH$ is not independent of $U_{\frac{n}{2},a}$ and $U_{\frac{n}{2},b}$.
  Therefore, we cannot calculate $\eE[U_{\text{cons}}\ | \ \cH]$, directly from $\eE[U_{\frac{n}{2},a}]$ and $\eE[U_{\frac{n}{2},b}]$.
  We overcome this by showing that we show that conditioning on $\cH$ changes the expectation by a small amount for large $n$.
  We can approximate $\eE[U_{\frac{n}{2},a}+U_{\frac{n}{2},b}\ |\ \cH]$ using the following equality
  \begin{align*}
    \eE[\ U_{\frac{n}{2},a}+U_{\frac{n}{2},b}]&\proofHspace= \eE[\ U_{\frac{n}{2},a}
    +U_{\frac{n}{2},b}\ |\ \cH] \Pr[\cH] + \eE[\ U_{\frac{n}{2},a}+U_{\frac{n}{2},b}\ |\ \cHC] \Pr[\cHC]\\
    &\proofHspace= \eE[\ U_{\frac{n}{2},a}+U_{\frac{n}{2},b}\ |\ \cH] \Pr[\cH] + O(n) \Pr[\cHC]\\
    &\proofHspace\hspace{-3.5mm}\stackrel{{\rm Lem.}~\ref{HwithHP}}{=} \eE[\ U_{\frac{n}{2},a}+U_{\frac{n}{2},b}\ |\ \cH] \big(1- O(e^{-\Theta(n)})\big) + O(ne^{-\Theta(n)})\label{eq_prf_3_3_expect_util}\numberthis\\
    \eE\nolimits_{w}[\mathcal{W}(\picons, v, w) | \cH]&\proofHspace =\eE[\ U_{\frac{n}{2},a}+U_{\frac{n}{2},b}\ |\ \cH] \\
    &\proofHspace\hspace{-1mm}\stackrel{\eqref{eq_prf_3_3_expect_util}}{=} \bigg(\eE[\ U_{\frac{n}{2},a}
    +U_{\frac{n}{2},b}]-  O\big(ne^{-n^{2\delta}}\big)\bigg) \cdot \big(1- O(e^{-\Theta(n)})\big)^{-1}\\
    &\proofHspace=\bigg(\eE[\ U_{\frac{n}{2},a}+U_{\frac{n}{2},b}]-  O\big(ne^{-n^{2\delta}}\big)\bigg) \cdot \big(1+ O(e^{-\Theta(n)})\big)\\
    &\proofHspace=\eE[\ U_{\frac{n}{2},a}+U_{\frac{n}{2},b}] \cdot \big(1+ O(e^{-\Theta(n)})\big) -  O(ne^{-\Theta(n)}).\label{approximation_of_utility_aa}\numberthis
  \end{align*}
  We would later find a $O(e^{\delta-\epsilon})$ factor (see Equation~\eqref{eq_bound_conditioned}) in our analysis.
  Note that we can ignore the $O(ne^{-\Theta(n)})$ factors in the above equation in front of $O(e^{\delta-\epsilon})$.
  For ease of notation, we omit these factors in the next few equations, and approximate
  $\eE\nolimits_{w}[\mathcal{W}(\picons, v, w) | \cH]$ by $\eE[\ U_{\frac{n}{2},a}+U_{\frac{n}{2},b}]$.
  We explicitly account for them in Equation~\eqref{adding_errors_back}.
  Now, we can calculate $\eE\nolimits_{w}[\mathcal{W}(\picons, v, w) | \cH]$ as follows
  \newcommand{\tm}{\paren{\frac{j+s_k+1}{2}:n} }
  \newcommand{\ta}{\paren{\nfrac{1}{2}\cdot(j+s_k+1)} }
  \begin{align*}
    \eE\nolimits_{w}[\mathcal{W}(\picons, v, w) | \cH]&=\eE[\ U_{\frac{n}{2},a}+U_{\frac{n}{2},b}]\\
    &= \eE\big[\ \sum_{i=1}^{\nfrac{n}{2}}W_{(i:n)} v(a(i)) \ \big]+\eE\big[\ \sum_{i=1}^{\nfrac{n}{2}}W_{(i:n)} v(b(i)) \ \big]\\
    &=\sum_{i=1}^{\nfrac{n}{2}}\Paren{\eE\bracket{W_{(i:n)} v(a(i))}+\eE\bracket{W_{(i:n)} v(b(i))}}.
  \end{align*}
  Before rewriting the sum, we note that for ease of notation we denote $\floor{\ta}$ by $\ta$ in the following equations.
  \begin{align*}
    \eE\nolimits_{w}[\mathcal{W}(\picons, v, w) | \cH]&\proofHspace=\frac{1}{2}\sum_{k=1}^{n^{1-\delta}}\sum_{j=0}^{n^{\delta}}\bigg(\hspace{3.5mm}\eE\bracket{ W_{\tm} v\paren{a\ta} }\\
    &\hspace{25.6mm}+\eE\bracket{ W_{\tm}  v\paren{b\ta}} \bigg)\\
    &\proofHspace\geq \frac{1}{2}\sum_{k=1}^{n^{1-\delta}}\sum_{j=0}^{n^{\delta}}\bigg(\eE\bracket{W_{\tm}\cdot  v(e_k)}+\eE\bracket{W_{\tm} \cdot v(e_k)} \  \bigg).
  \end{align*}
  Here, the last inequality follow since given $\cH$, for all $j<n^{\delta}$ and $k\in [n^{1-\delta}]$, $a\ta\in [s_k, e_k]$ and $b\ta\in [s_k, e_k]$.
  Next we have
  \begin{align*}
    \eE\nolimits_{w}[\mathcal{W}(\picons, v, w) | \cH]&\geq\sum_{k=1}^{n^{1-\delta}}\sum_{j=0}^{n^{\delta}}\Paren{\eE\bracket{\ W_{\tm} \ }\cdot  v(e_k)\ }.\label{const_utility_expr}\numberthis
  \end{align*}
  Consider $\piopt\coloneqq \argmax_x \mathcal{W}(x,v,w)$.
  By our assumption that $v_k\geq v_{k+1}$, it follows that $\piopt$ places candidates in decreasing order of their latent utility ($w$), i.e., it places the $i$-th order statistic from $G_a\cup G_b$ at the $i$-th position.
  Let us denote the $i$-th order statistic from the $2n$ values by $V_{(i:2n)}$.
  Then $U_{\mathcal{D},v}(0,1)\coloneqq\eE\nolimits_{w}\big[\mathcal{W}(\piopt, v, w)\big] $, of $\piopt$ is
  \begin{align*}
    U_{\mathcal{D},v}(0,1)&= \eE\bigg[\ \sum_{i=1}^{n}V_{(i:2n)}\cdot v(i)\ \bigg] \\
    &= \sum_{i=1}^{n}\eE[V_{(i:2n)}]\cdot v(i)\\
    &= \sum_{k=1}^{n^{1-\delta}}\sum_{j=0}^{n^{\delta}} \eE[V_{(j+s_k:2n)}]\cdot v(j+s_k)\numberthis\label{optimal_utility_expr}.
  \end{align*}
  Using Equations~\eqref{const_utility_expr} and \eqref{optimal_utility_expr} we can calculate the difference in the expected utilities as
  \begin{align*}
    &U_{\mathcal{D},v}(0,1)-\eE\nolimits_{w}\big[\mathcal{W}(\picons, v, w) | \cH\big]\\
    &\hspace{-1mm}\stackrel{\eqref{const_utility_expr}, \eqref{optimal_utility_expr}}{\leq}  \sum_{k=1}^{n^{1-\delta}}\sum_{j=0}^{n^{\delta}} \eE[V_{(j+s_k:2n)}]\cdot v(j+s_k)
    -\sum_{k=1}^{n^{1-\delta}}\sum_{j=0}^{n^{\delta}}
    \eE\bracket{ W_{\tm} }\cdot  v(e_k)\\
    &\proofHspace=\sum_{k=1}^{n^{1-\delta}}\sum_{j=0}^{n^{\delta}}\Paren{
    \eE[V_{(j+s_k:2n)}]\cdot v(j+s_k)
    -\eE\bracket{\ W_{\tm} \ }\cdot  v(e_k)
    }\\
    &\proofHspace= \sum_{k=1}^{n^{1-\delta}}\sum_{j=0}^{n^{\delta}} \bigg(
    \bigg(1-\frac{j+s_k}{2n+1}\bigg)\cdot v(j+s_k)
    - \bigg(1-\frac{j+s_k+1}{2(n+1)}\bigg)\cdot  v(e_k) \bigg)
    \tag{$\forall\ x,y\in \cZ,\ $ $\eE[W_{(x:x+y)}]=\eE[V_{(x:x+y)}]=1-\frac{x}{x+y+1}$}\\
    &\proofHspace= \sum_{k=1}^{n^{1-\delta}}\sum_{j=0}^{n^{\delta}}
    \Paren{1-\frac{j+s_k+1}{2(n+1)} + \frac{2n+1-j-s_k}{2(n+1)(2n+1)}}\cdot v(j+s_k)
    - \sum_{k=1}^{n^{1-\delta}}\sum_{j=0}^{n^{\delta}}\Paren{1-\frac{j+s_k+1}{2(n+1)}}\cdot  v(e_k)\\
  \end{align*}
  \begin{align*}
    &\proofHspace= \sum_{k=1}^{n^{1-\delta}}\sum_{j=0}^{n^{\delta}}
    \Paren{1-\frac{j+s_k+1}{2(n+1)} + O(n^{-1})}\cdot v(j+s_k)
    - \sum_{k=1}^{n^{1-\delta}}\sum_{j=0}^{n^{\delta}}\Paren{1-\frac{j+s_k+1}{2(n+1)}}\cdot  v(e_k)\\
    &\proofHspace\leq \Paren{1-\frac{j+s_k+1}{2(n+1)}}\cdot \sum_{k=1}^{n^{1-\delta}}\sum_{j=0}^{n^{\delta}} \paren{
    v(j+s_k) -   v(e_k)}
    + O(n^{-1})\cdot\sum_{k=1}^{n^{1-\delta}}\sum_{j=0}^{n^{\delta}}v(j+s_k)\\
    &\leq O(1)\cdot\sum_{k=1}^{n^{1-\delta}}\sum_{j=0}^{n^{\delta}} \paren{v(j+s_k) - v(e_k)} + O(n^{-1})\cdot\sum_{i=1}^{n}v(i)\\
    &\proofHspace\hspace{-2mm}\stackrel{\eqref{assumption_2}}{\leq} O(1)\sum_{k=1}^{n^{1-\delta}}\sum_{j=0}^{n^{\delta}} O(n^{\delta})\paren{v(j+s_k) - v(j+s_k+1)}
    + O(n^{-1})\sum_{i=1}^{n}v(i) \tag{Using $|(j+s_k)-e_{k}|\leq n^{\delta}$}\\
    &\proofHspace\leq O(n^{\delta})\cdot\sum_{i=1}^{n}\paren{v(i) - v(i+1)} + O(n^{-1})\cdot\sum_{i=1}^{n}v(i)\\
    &\proofHspace\stackrel{\eqref{assumption}}{\leq} O(n^{\delta-\epsilon}+n^{-1})\cdot\sum_{i=1}^{n}v(i).\label{eq_difference_utility}\numberthis
  \end{align*}
  \noindent  To show this difference in utilities is much smaller than the unconstrained utility, consider $U_{\mathcal{D},v}(0,1)\coloneqq\eE\nolimits_{w}\big[\mathcal{W}(\piopt, v, w)\big]$,
  \begin{align*}
    U_{\mathcal{D},v}(0,1) &= \eE\bigg[\ \sum_{i=1}^{n}V_{(i:2n)}\cdot v(i)\ \bigg]\\
    &= \sum_{i=1}^{n}\eE\big[ V_{(i:2n)}\big]\cdot v(i)\\
    &= \sum_{i=1}^{n}\bigg(1-\frac{i}{2n+1}\bigg)\cdot v(i)\\
    &\geq \frac{1}{2}\sum_{i=1}^{n}v(i).\label{eq_sum_utility}\numberthis
  \end{align*}
  Using Equations~\eqref{eq_difference_utility} and \eqref{eq_sum_utility} we have
  \begin{align*}
    \eE\nolimits_{w}\big[\mathcal{W}(\picons, v, w) | \cH\big]&\geq U_{\mathcal{D},v}(0,1) - O(n^{\delta-\epsilon})\sum_{i=1}^{n}v(i)\\
    &\stackrel{\eqref{eq_sum_utility}}{=} U_{\mathcal{D},v}(0,1)\cdot \paren{1- O(n^{\delta-\epsilon}+n^{-1})}.
  \end{align*}
  Adding the additional errors terms from Equation~\eqref{approximation_of_utility_aa} we have
  \begin{align*}
    \eE\nolimits_{w}\big[\mathcal{W}(\picons, v, w) | \cH\big]&\proofHspace = U_{\mathcal{D},v}(0,1) \cdot \paren{1- O(n^{\delta-\epsilon}+n^{-1})}\cdot \paren{1 + O(e^{-\Theta(n)})} -  O(ne^{-\Theta(n)})\numberthis\label{adding_errors_back}\\
    &\proofHspace = U_{\mathcal{D},v}(0,1)\cdot \paren{1- O(n^{\delta-\epsilon}+n^{-1})+ O(e^{-\Theta(n)})} -  O(ne^{-\Theta(n)})\\
    \eE\nolimits_{w}\big[\mathcal{W}(\picons, v, w) | \cH\big]
    &\proofHspace=U_{\mathcal{D},v}(0,1)\cdot \paren{1- O(n^{\delta-\epsilon}+n^{-1})} -  O(ne^{-\Theta(n)}).\label{eq_bound_conditioned}\numberthis
  \end{align*}
  Finally, using the fact that $\cH$ occurs with high probability from Equation~\eqref{eq_cH_whp} we have
  \begin{align*}
    U_{\mathcal{D},v}(\alpha^\star,\beta)=\eE\nolimits_{w}\big[\mathcal{W}(\picons, v, w)]&\proofHspace= \eE\nolimits_{w}\big[\mathcal{W}(\picons, v, w) | \cH]\Pr[\cH]+\eE\nolimits_{w}\big[\mathcal{W}(\picons, v, w)  | \cHC]\Pr[\cHC]\\
    &\proofHspace= \eE\nolimits_{w}\big[\mathcal{W}(\picons, v, w)  | \cH]\Pr[\cH]+O(n)\Pr[\cHC]\\
    &\proofHspace\hspace{-5mm}\ \stackrel{\text{Prop.}~\ref{HwithHP}}{=}\eE\nolimits_{w}\big[\mathcal{W}(\picons, v, w)  | \cH]\cdot (1-O(e^{-\Theta(n)}))+O(n)\cdot O(e^{-\Theta(n)})\\
    &\proofHspace= \eE\nolimits_{w}\big[\mathcal{W}(\picons, v, w) | \cH]+O(ne^{-\Theta(n)})\\
    &\proofHspace\stackrel{\eqref{eq_sum_utility}}{=} U_{\mathcal{D},v}(0,1)\cdot \big(1- O(n^{\delta-\epsilon}+n^{-1})\big)-O(ne^{-\Theta(n)})+O(ne^{-\Theta(n)}).
  \end{align*}
  Fixing $\delta=\nfrac{\epsilon}{2}$, we have
  \begin{align*}
    U_{\mathcal{D},v}(\alpha^\star,\beta) &= U_{\mathcal{D},v}(0,1)\cdot \paren{1- O(n^{-\epsilon/2}+n^{-1})}+O(ne^{-\Theta(n)})\\
    &=U_{\mathcal{D},v}(0,1)\cdot \paren{1- O(n^{-\epsilon/2}+n^{-1})}.
  \end{align*}
  This completes the proof of Theorem~\ref{thm_const_increase_utility} when $m_a=m_b=n$.
\end{proof}

\subsubsection{Proof of Proposition~\ref{fact_negative_exp}}\label{sec_proof_fact_negative_exp}
\begin{proof}
  Let $\mu\coloneqq \eE[N]=n(1-\beta)$.
  Let $\cE$ be the event that $|N-\mu| \leq n^{\nfrac{1}{2}+\delta}$.
  Using Hoeffding's inequality~\cite{boucheron2013concentration} and the fact that $N_{a_1}\stackrel{d}{=}\text{Binomial}(n,1-\beta)$ from Proposition~\ref{thm_expected_cand} we have
  \begin{align*}
    \Pr[\cE]=\Pr[\ |N_{a_1}-\mu| \leq n^{\nfrac{1}{2}+\delta}\ ] &\stackrel{    }{\geq} 1-2e^{-2n^{2\delta}}\label{eq_conc_of_binomial_1}\numberthis\\
    \Pr[\cEC]=\Pr[\ |N_{a_1}-\mu| > n^{\nfrac{1}{2}+\delta}\ ] &\stackrel{    }{\leq} 2e^{-2n^{2\delta}}.\label{eq_conc_of_binomial_2}\numberthis
  \end{align*}
  From these we can prove Equation~\eqref{fact_expec_1} as follows
  \begin{align*}
    \eE\bigg[ \frac{n}{2n-N} \bigg]&\proofHspace= \eE\bigg[ \frac{n}{2n-N} \bigg| \ \cE \bigg]\cdot\Pr[\cE] + \eE\bigg[ \frac{n}{2n-N} \bigg| \ \cEC  \bigg]\cdot \Pr[\cEC]\\
    &\proofHspace= \eE\bigg[ \frac{n}{2n-N} \bigg| \ \cE \bigg]\cdot\Pr[\cE] +  \Pr[\cEC]\tag{$\forall \ 0\leq N\leq n,\ \frac{n}{2n-N}\leq 1$\hspace{-5mm}}\\
    &\hspace{-1mm}\proofHspace\stackrel{\eqref{eq_conc_of_binomial_2}}{=}\hspace{2mm} \eE\bigg[  \frac{n}{2n-N} \bigg| \ \cE \bigg]\cdot\Pr[\cE] + \oexpd\\
    &\hspace{-1mm}\proofHspace\stackrel{\eqref{eq_conc_of_binomial_1}}{=} \eE\bigg[  \frac{n}{2n-N} \bigg| \ \cE \bigg]\cdot\big(1-\oexpd\big) + \oexpd\\
    &\hspace{-7.75mm}\proofHspace\hspace{2mm}\stackrel{\mu\coloneqq n(1-\beta)}{=}\ \ \eE\bigg[ \frac{n}{n(1+\beta)-(N-\mu)} \bigg| \ \cE \bigg]\cdot\big(1-\oexpd\big)+ \oexpd \\
    &\proofHspace= \eE\bigg[ \frac{1}{1+\beta}\cdot\frac{1}{1-\nfrac{(N-\mu)}{(n+n\beta)}} \bigg| \ \cE \bigg]\cdot\big(1-\oexpd\big)+ \oexpd\\
    &\proofHspace= \eE\bigg[ \frac{1}{1+\beta}c\cdot\bigg(1+\frac{(N-\mu)}{n(1+\beta)}+O\bigg(\frac{(N-\mu)^2}{n^2}\bigg) \bigg) \bigg| \ \cE \bigg]\cdot\big(1-\oexpd\big)+ \oexpd	\tag{Using $\forall\ x\in [0,1), \frac{1}{1+x}=1-x+O(x^2)$}\\
    &\proofHspace= \eE\bigg[ \frac{1}{1+\beta}\cdot\bigg(1+ \ohalfd \bigg) \bigg| \ \cE \bigg] + \oexpd\tag{Given $\cE$, $N-\mu\leq n^{\nfrac{1}{2}+\delta}$}\\
    &\proofHspace= \frac{1}{1+\beta}+ \ohalfd+ \oexpd\\
    &\proofHspace\hspace{-1.5mm}\stackrel{\delta>0}{=} \frac{1}{1+\beta}+ \ohalfd.\numberthis\label{approximation}
  \end{align*}
  Setting $\delta\coloneqq \frac{1}{8}$ we have
  \begin{align*}
    \eE\bigg[ \frac{n}{2n-N} \bigg]\stackrel{}{=} \frac{1}{1+\beta}+ \ohalf.
  \end{align*}
  Further, we can prove Equation~\eqref{fact_expec_2} as follows
  \begin{align*}
    \eE\bigg[\frac{n^2}{(2n-N)^2}\bigg]&\proofHspace=\eE\bigg[\frac{n^2}{(2n-N)^2} \ \bigg|\  \cE\ \bigg]\cdot\Pr[\cE]+\eE\bigg[\frac{n^2}{(2n-N)^2} \ |\  \cEC\ \bigg]\cdot\Pr[\cEC]\\
    &\proofHspace=\eE\bigg[\frac{n^2}{(2n-N)^2} \ \bigg|\  \cE\ \bigg]\cdot\Pr[\cE]+\Pr[\cEC]\tag{$\forall \ 0\leq N\leq n,\ $ $\frac{n}{2n-N}\leq 1$}\\
    &\proofHspace\stackrel{\eqref{eq_conc_of_binomial_2}}{=}\eE\bigg[\frac{n^2}{(2n-N)^2} \ \bigg|\  \cE\ \bigg]\cdot\Pr[\cE]+\oexpd\\
    &\proofHspace\stackrel{\eqref{eq_conc_of_binomial_1}}{=}\eE\bigg[\frac{n^2}{(2n-N)^2} \ \bigg|\  \cE\ \bigg]\cdot\big(1-\oexpd\big)+\oexpd\\
    &\hspace{-7.75mm}\proofHspace\hspace{2mm}\stackrel{\mu\coloneqq n(1-\beta)}{=}\ \ \eE\bigg[\frac{n^2}{(n(1+\beta)-(N-\mu))^2} \ \bigg|\  \cE\ \bigg]\cdot(1-\oexpd)+\oexpd\\
    &\proofHspace= \eE\bigg[\frac{1}{(1+\beta)^2}  \frac{1}{\big(1-\nfrac{(N-\mu)}{n(1+\beta)}\big)^2 }\ \bigg|\  \cE \bigg]\cdot(1-\oexpd)+\oexpd
    %s
  \end{align*}

  \begin{align*}
    \eE\bigg[\frac{n^2}{(2n-N)^2}\bigg]&\proofHspace= \eE\bigg[\frac{1}{(1+\beta)^2}  \bigg(1+\frac{(N-\mu)}{n(1+\beta)}+O\big(n^{-1+2\delta}\big)\bigg)^2 \ \bigg|\  \cE\ \bigg]\cdot(1-\oexpd)+\oexpd	\tag{Using $\forall x\in [0,1), \frac{1}{1+x}=1-x+O(x^2)$}\\
    &\proofHspace= \eE\bigg[\frac{1}{(1+\beta)^2} \cdot \paren{1+\ohalfd}^2 \ \bigg|\  \cE\ \bigg]\big(1-\oexpd\big)+\oexpd\tag{Given $\cE$, $N-\mu\leq n^{\nfrac{1}{2}+\delta}$}\\
    &\proofHspace= \eE\bigg[\frac{1}{(1+\beta)^2} \cdot \paren{1+\ohalfd} \ \bigg|\  \cE\ \bigg]\big(1-\oexpd\big)+\oexpd\\
    &\proofHspace= \frac{1}{(1+\beta)^2} \cdot \big(1+\ohalfd\big) \cdot \big(1-\oexpd\big)+\oexpd\\
    &\proofHspace\stackrel{\delta>0}{=} \frac{1}{(1+\beta)^2} \cdot \big(1+\ohalfd\big)+\oexpd\\
    &\proofHspace\stackrel{\delta>0}{=} \frac{1}{(1+\beta)^2} \cdot \big(1+\ohalfd\big).
  \end{align*}
  Setting $\delta\coloneqq \nfrac{1}{8}$ we have
  \begin{align*}
    \eE\bigg[\frac{n^2}{(2n-N)^2}\bigg]\stackrel{}{=} \frac{1}{(1+\beta)^2} \cdot \big(1+\ohalf\big).
  \end{align*}
\end{proof}

\subsubsection{Proof of Proposition~\ref{thm_expected_cand}}\label{sec_proof_thm_expected_cand}

\begin{proof}
  We prove each of the Equations~\eqref{eq_dist_n_a_1}, \eqref{eq_exp_n_b} and \eqref{eq_conc_n_b}  in the following sections.

  \paragraph{Proof of Equation~\eqref{eq_dist_n_a_1}.}\label{proof_of_eq_dist_n_a_1}

  We can express $N_{a_1}$ as a sum of the following $n$ random variables
  \begin{align*}
    N_{a_1} = \sum\nolimits_{i\in G_a } \mathbb{I}[W_i\in (\beta,1]].
  \end{align*}
  Since $W_i$  are drawn \iid from $\cU[0,1]$, it follows that for $W\stackrel{d}{=} \cU[0,1]$
  \begin{align*}
    N_{a_1} = n\cdot \mathbb{I}[W\in (\beta,1]].
  \end{align*}
  Since $\mathbb{I}[W\in (\beta,1]]$ a bernoulli random variable, which is 1 with probability $(1-\beta)$ and 0 with probability $\beta$,
  it follows that
  \begin{align*}
    N_{a_1}\stackrel{d}{=}\text{Binomial}(n,1-\beta).
  \end{align*}

  \paragraph{Proof of Equations~\eqref{eq_exp_n_b} and \eqref{eq_conc_n_b}.}

  \noindent We would require the following proposition
  \begin{proposition}\label{fact_conditioning_uniform}
    {\bf (Truncated uniform distribution). } Let $W\stackrel{d}{=} \cU[0,1]$, then for all $0<u<v<1$, $\big(W \ | \ W \in [u,v]\big)\stackrel{d}{=}\cU[u,v]$.
  \end{proposition}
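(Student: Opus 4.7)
\textbf{Proof plan for Proposition~\ref{fact_conditioning_uniform}.} The plan is to verify the claim directly by computing the cumulative distribution function of the conditional random variable and observing that it coincides with the CDF of $\cU[u,v]$. Since the statement is a standard fact about truncation of uniform distributions, no clever argument is needed; the work is purely a short calculation via the definition of conditional probability.

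First, I would fix $t \in [u,v]$ and apply the definition
\begin{align*}
\Pr\bigl[W \leq t \,\bigm|\, W \in [u,v]\bigr] \;=\; \frac{\Pr\bigl[W \leq t \,\wedge\, W \in [u,v]\bigr]}{\Pr\bigl[W \in [u,v]\bigr]}.
\end{align*}
Next, I would evaluate the numerator by observing that $\{W \leq t\} \cap \{W \in [u,v]\} = \{W \in [u,t]\}$ (since $t \geq u$), so the numerator equals $t - u$ by the density of $\cU[0,1]$. The denominator equals $v - u$ for the same reason. Hence
\begin{align*}
\Pr\bigl[W \leq t \,\bigm|\, W \in [u,v]\bigr] \;=\; \frac{t-u}{v-u},
\end{align*}
which is exactly the CDF of $\cU[u,v]$ at $t$. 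Finally, I would handle the boundary cases $t < u$ (the conditional probability is $0$) and $t > v$ (it is $1$), matching $\cU[u,v]$ on all of $\mathbb{R}$.

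There is no real obstacle here: the identity $\{W \leq t\} \cap \{W \in [u,v]\} = \{W \in [u,t]\}$ is the only substantive observation, and it follows immediately from $t \in [u,v]$. The proposition is included as a technical lemma to justify the claim in the proof sketch of Theorem~\ref{thm_const_increase_utility} that, conditioned on $N_{a_1}$, the observed utilities of the remaining items from $G_a$ and $G_b$ share a common distribution; the truncated-uniform property is precisely what is needed to establish that symmetry.
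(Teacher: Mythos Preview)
Your proposal is correct and matches the paper's own proof essentially line for line: the paper also fixes a point in $[u,v]$, applies the definition of conditional probability to obtain $(k-u)/(v-u)$, and identifies this as the CDF of $\cU[u,v]$. Your additional handling of the cases $t<u$ and $t>v$ is a minor completeness bonus the paper omits.
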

  \begin{proof}
    For any $k\in [u,v]$ we have,
    \begin{align*}
      \Pr[\ W \leq k\ | \ W \in [u,v]\ ] &= \frac{ \Pr[\ W\leq k  \land W \in [u,v]\ ]}{\Pr[\ W\in [u,v]\ ]}
      = \frac{ k-u }{ v-u }.
    \end{align*}
    Comparing the above with the cdf of $\cU[u,v]$ we have $(W \ |\ W\in[u,v])\stackrel{d}{=}\cU[u,v]$.
  \end{proof}
  \noindent
  Conditioning on $N_{a_1}$ fixes the number of candidates (all from $G_a$) with utility in $[\beta,1]$.
  From Proposition~\ref{fact_conditioning_uniform} we have that for all $i\in G_a$ and $j\in G_b$
  \begin{align*}
    (w_i \ |\ w_i\in[0,\beta])\stackrel{d}{=}\cU[0,\beta]\stackrel{d}{=}w_j\beta\stackrel{d}{=}\hat{w}_j.
  \end{align*}
  In other words, all the candidates with utilities in $[0,\beta]$ ($n$ candidates from $G_b$ and $n-N_{a_1}$ candidates in $G_a$) have the same distribution of observed utilities conditioned on $N_{a_1}$.
  Consider the ordered list of these $2n-N_{a_1}$ candidates in decreasing order of their observed utilities (breaking ties with unbiased coin tosses).
  Since, all observed utilities are drawn \iid each permutation of these candidates is equally likely.

  We choose the top $n-N_{a_1}$ candidates from this list to complete our ranking (which has $N_{a_1}$ candidates as of now).
  The number of ways of choosing $k$ candidates from the $n$ in $G_b$ is $\binom{n}{k}$, and  the number of ways of choosing the other $n-N_{a_1}-k$ candidates from the $n-N_{a-1}$ in $G_a$ is $\binom{n-N_{a_1}}{n-N_{a_1}-k}$.
  Since each of these colorings is equally likely it follows that
  \begin{align*}
    \Pr[N_b=k \ | \  N_{a_1}] &= \frac{ \binom{n-N_{a_1}}{n-N_{a_1}-k} \binom{n}{k} }{ \binom{2n-N_{a_1}}{n-N_{a_1}} } = \frac{ \binom{n-N_{a_1}}{k} \binom{n}{n-k} }{ \binom{2n-N_{a_1}}{n} }.
  \end{align*}

  \begin{tcolorbox}[colback=yellow!10!white,colframe=black!75!black]
    Given numbers $N, K, n $, for an hypergeometric random variable, $HG$, we have: $\forall \max(0,n+K-N)\leq k \leq \min(K,n)$
    \begin{align}
      \Pr[HG = k] \coloneqq \frac{\binom{K}{k}\binom{N-K}{n-k}}{\binom{N}{n}}.\label{eq_hyper_2}
    \end{align}
  \end{tcolorbox}

  By comparing with Equation~\eqref{eq_hyper_2}, we can observe that conditioned on $N_{a_1}$, $N_b$ is a hypergeometric random variable.
  We would soon use this fact to show  that $N_b$ is concentrated around its mean.
  But, first we would calculate its mean\footnote{Note that using the mean of the hypergeometric distribution  we get $\eE[N_{b}|N_{a_1}]$ and not $\eE[N_{b}]$.}.

  Let $X_i(k)$ be a random variable which is 1 if candidate $i$ is in the first $k$ positions out of $2n-N_{a_1}$, and 0 otherwise.
  Since, each permutation of these candidates is equally likely it follows that
  \begin{align*}
    \text{For all }i\in (G_a\cup G_b)\backslash S_{a_1},\  \Pr[X_i(k)=1\ | \ N_{a_1}]=\frac{k}{2n-N_{a_1}}.
  \end{align*}
  Expressing $N_b$ as $\sum_{i\in G_b} X_i(n-N_{a_1})$ we have
  \begin{align*}
    \eE[N_b|N_{a_1}] &=\eE\bracket{\sum_{i\in G_b}X_i(n-N_{a_1}) | N_{a_1}}=\sum_{i\in G_b}\eE[X_i(n-N_{a_1}) | N_{a_1}]\\
    &=n\cdot \frac{n-N_{a_1}}{2n-N_{a_1}}.\numberthis\label{equivalent_expectation}
  \end{align*}
  Recall that $N_{a_1}\stackrel{d}{=}\text{Binomial}(n,1-\beta)$ from Proposition~\ref{thm_expected_cand}.
  Taking the expectation of these we get
  \begin{align*}
    \eE[N_{b} ] &= \eE[\ \eE[N_{b}|N_{a_1}] \ ]= n\eE\bigg[1-\frac{n}{2n-N_{a_1}}\bigg] .\numberthis\label{double_expectation_n_b}\\
    &\hspace{-3mm}\stackrel{\text{Fact}~\ref{fact_negative_exp}}{=}\ n \cdot\big(\nfrac{\beta}{1+\beta}+ \ohalf\big).
  \end{align*}

  This completes the proof of Equation~\eqref{eq_exp_n_b} and gives us the mean of $\eE[N_b]$.
  We proceed to show that $N_b$ is concentrated around this mean.

  From our earlier observation that $N_b$ follows a hypergeometric distribution and using known tail bounds for the hypergeometric distribution (see Theorem 1 in \cite{hush2005concentration}) we have
  \begin{align*}
    \Pr[\ |N_b - \eE[N_b]| \leq k \ | \  N_{a_1}\ ] &\stackrel{\text{Thm. 1 \cite{hush2005concentration}}   }{\geq}  1 - 2e^{-\frac{2(k^2-1)}{n-N_{a_1}} }.\label{conc_bound_a}\numberthis
  \end{align*}
  However, the above concentration bound is not very useful unless we remove the conditioning on $N_{a_1}$.
  To do this, we generalize the conditioning in the above to a lowerbound on $N_{a_1}$
  \begin{align*}
    \Pr\bigg[\ |N_b - \eE[N_b]| \leq k \ \bigg| \  N_{a_1} \geq \eE[N_{a_1}]-k \ \bigg].
  \end{align*}
  We lowerbound the above probability by choosing the smallest value of $N_{a_1}$, $N_{a_1}=\eE[N_{a_1}]-k$ in Equation~\eqref{conc_bound_a} to get
  \begin{align*}
    \Pr\bigg[\ |N_b - \eE[N_b]| \leq k \ \bigg| \  N_{a_1} \geq \eE[N_{a_1}]-k \ \bigg] \geq  1 - 2e^{-\frac{2(k^2-1)}{n-\eE[N_{a_1}]+k} }.
  \end{align*}
  Finally, we remove the conditioning in the above using the concentration of $N_b$ and a union bound.

  \noindent Let $\cE$ be the event that $N_{a_1} \geq \eE[N_{a_1}]-k$.
  Using Hoeffding's inequality~\cite{boucheron2013concentration} and the fact that $N_{a_1}\stackrel{d}{=}\text{Binomial}(n,1-\beta)$ we have
  \begin{align*}
    \Pr[\cE]=\Pr[\ N_{a_1}\geq \eE[N_{a_1}] -  k\ ] &\stackrel{    }{\geq} 1-e^{-\nfrac{2k^2}{n}}.\numberthis\label{conc_of_na1_b}
  \end{align*}
  Using this we can calculate $\Pr[\ |N_b - \eE[N_b]| \leq k \ ]$ as follows
  \begin{align*}
    \Pr\big[\ |N_b - \eE[N_b]| \leq k \ \big] &=  \Pr\big[\ |N_b - \eE[N_b]| \leq k \ \big| \  \cE \ \big] \cdot \Pr\big[ \cE \big] + \Pr\big[\ |N_b - \eE[N_b]| \leq k \ \big| \  \cEC \ \big] \cdot \Pr\big[ \cEC \big]\\
    & \geq \Pr\big[\ |N_b - \eE[N_b]| \leq k \ \big| \  \cE \ \big] \cdot \Pr\big[ \cE \big]\\
    & \hspace{-4mm}\stackrel{ \eqref{conc_bound_a},\eqref{conc_of_na1_b}}{\geq}  \big(1 - 2e^{-\frac{2(k^2-1)}{n-\eE[N_{a_1}]+k} } \big)\cdot\big( 1-e^{-\nfrac{2k^2}{n}} \big)\\
    & \geq  1 - 2e^{-\frac{2(k^2-1)}{n-\eE[N_{a_1}]+k} } - e^{-\nfrac{2k^2}{n}}\\
    &\stackrel{}{\geq}  1 - 2e^{-\frac{2(k^2-1)}{n\beta+k} } - e^{-\nfrac{2k^2}{n}}.\numberthis\label{equation_70}
  \end{align*}
  Substituting $k = n^{\nfrac{1}{2}+\delta}$ we get
  \begin{align*}
    \Pr\big[\ |N_b - \eE[N_b]| \leq n^{\nfrac{1}{2}+\delta} \ \big] &=  1 - 2e^{-\frac{2(k^2-1)}{n\beta+k} } - e^{-\nfrac{2k^2}{n}}\\
    &\geq 1 - 2e^{-\frac{k^2-1}{n\beta} } - e^{-\nfrac{2k^2}{n}}\tag{$k<n\beta$}\\
    &= 1 - 2e^{\frac{1}{n\beta} }e^{-\frac{k^2}{n\beta} } - e^{-\nfrac{2k^2}{n}}\\
    &= 1 - O(e^{-\nfrac{k^2}{n}})\tag{$\beta\in (0,1]$}\\
    &= 1 - \oexpd.\tag{$k\coloneqq n^{\nfrac{1}{2}+\delta}$}
  \end{align*}
  Setting $\delta=\nfrac{1}{8}$ have
  \begin{align*}
    \Pr\big[\ |N_b - \eE[N_b]| \leq n^{\nfrac{5}{8}} \ \big] = 1 - \oexp.
  \end{align*}
  This completes the proof of Equation~\eqref{eq_conc_n_b} and also of Proposition~\ref{thm_expected_cand}.
\end{proof}

\subsection{Proof of Equations~\myeqref{eq_utility_with_const}{16} and \myeqref{eq_utility_without_const}{17}}\label{proof_of_utilit_when_no_disc}
  We begin by restating the two equations
  \begin{align*}
    U_{\mathcal{D},v}(\alpha^\star, \beta)&=n\big(1-\frac{n}{2(m_a+m_b)}+\ohalf\big),\labelthis{Utility with constraints}\\
    U_{\mathcal{D},v}(0, \beta)&=\begin{cases}
    \frac{m_a(1-\beta^2)}{2} + \frac{m_a\beta^2+m_b}{2}\bigg[1-\frac{(m_a+m_b-n)^2}{(m_a\beta+m_b)^2}+\ohalf\bigg], &\hspace{-2mm}c=n-\omega(n^{\nfrac{5}{8}})\\
    n\big(1-\frac{n}{2m_a}+\ohalf\big),  &\hspace{-2mm}c\geq n+\Theta(n^{\nfrac{5}{8}}).\labelthis{Utility without constraints}
  \end{cases}
  \end{align*}
  Recall that these equations hold when $v_k=1 \ \forall \ k\in [n]$ and we define $c\coloneqq m_a(1-\beta)$.
  We first give the proof for the special case where $m_a=m_b=n$ and then show how to extend this proof to the general case with ($m_a\neq m_b\neq n$) in Section~\ref{sec_extending_proof}.
  We split this into the following lemmas, one for each equation.
  \begin{lemma}\label{lem_utility_unconst_app}
    {\bf (Expected latent utility without constraints).}
    Given $\beta\in(0,1]$, representing the implicit bias,
    we have $U_{\mathcal{D},1}(0,\beta)$ is
    \begin{align}
      U_{\mathcal{D},1}(0,\beta) = \frac{n}{2}\left(1+\frac{2\beta}{(1+\beta)^2}+\ohalf \right).
    \end{align}
  \end{lemma}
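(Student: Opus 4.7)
The plan is to condition on $N_{a_1}$, the number of $G_a$ candidates with utility at least $\beta$, derive a closed-form expression for $\eE[\mathcal{W}(\piuncons,v,w)\mid N_{a_1}]$, and then take expectation using Proposition~\ref{fact_negative_exp}. The starting observation is that every $G_a$ candidate with utility at least $\beta$ has observed utility strictly exceeding the observed utility of every $G_b$ candidate, so all $N_{a_1}$ such candidates are selected; by Proposition~\ref{fact_conditioning_uniform} each contributes an independent $\cU[\beta,1]$ latent utility, giving expected contribution $N_{a_1}(1+\beta)/2$.

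The remaining $n-N_{a_1}$ slots are filled from a pool of $n-N_{a_1}$ ``low'' $G_a$ candidates and all $n$ $G_b$ candidates; by Proposition~\ref{fact_conditioning_uniform} these $2n-N_{a_1}$ observed utilities are i.i.d.\ $\cU[0,\beta]$. Since the group labels are exchangeable and independent of the sorted observed utilities, by linearity of expectation the expected latent contribution from the bottom selection factors as the expected sum of the top $n-N_{a_1}$ order statistics times the multiplier $\frac{n-N_{a_1}}{2n-N_{a_1}} + \frac{n}{\beta(2n-N_{a_1})}$ (the first term coming from $G_a$ selections whose latent equals observed, the second from $G_b$ whose latent equals observed/$\beta$). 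Using the standard identity $\eE[\sum_{i=1}^{k}\cU[0,\beta]_{(i:m)}] = \beta k(2m-k+1)/(2(m+1))$ and combining, one obtains the exact formula
\begin{align*}
U(N_{a_1}) \coloneqq \eE[\mathcal{W}(\piuncons,v,w)\mid N_{a_1}] = \frac{N_{a_1}(1+\beta)}{2} + \frac{(n-N_{a_1})(3n-N_{a_1}+1)(\beta(n-N_{a_1})+n)}{2(2n-N_{a_1})(2n-N_{a_1}+1)}.
\end{align*}

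To take expectation, I would set $Y\coloneqq n/(2n-N_{a_1})$ and absorb the $(2n-N_{a_1}+1)$ versus $(2n-N_{a_1})$ discrepancy into a multiplicative $O(1/n)$ error (legitimate since $2n-N_{a_1}\geq n$). A direct expansion then yields
\begin{align*}
U(N_{a_1})/n = -\frac{1}{2Y} + \frac{3+\beta}{2} - \frac{\beta Y}{2} - \frac{(1-\beta)Y^2}{2} + O(1/n).
\end{align*}
Since $1/Y = 2 - N_{a_1}/n$ and $\eE[N_{a_1}] = n(1-\beta)$, the term $\eE[1/Y]=1+\beta$ is exact, while Proposition~\ref{fact_negative_exp} supplies $\eE[Y] = (1+\beta)^{-1} + \ohalf$ and $\eE[Y^2] = (1+\beta)^{-2} + \ohalf$. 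Substituting and simplifying, the constant-order terms collapse algebraically to $\frac{1+4\beta+\beta^2}{2(1+\beta)^2}$, which equals $\frac{1}{2}\paren{1 + \frac{2\beta}{(1+\beta)^2}}$, matching the target.

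The main obstacle I anticipate is the bookkeeping of error terms. I must verify that the multiplicative $O(1/n)$ from replacing $(2n-N_{a_1}+1)$ by $(2n-N_{a_1})$, the additive $\ohalf$ from Proposition~\ref{fact_negative_exp}, and the contribution from the exponentially small tail where $N_{a_1}$ deviates far from $n(1-\beta)$ (where $U(N_{a_1})$ is uniformly $O(n)$) all combine cleanly into the claimed $\ohalf$ error and do not conspire to produce a larger residual through some cancellation. Verifying that the exchangeability argument truly decouples the group labels from the order statistics (given $N_{a_1}$) is the one structural step that must be handled carefully before the algebra proceeds.
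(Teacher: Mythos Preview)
Your proposal is correct and follows essentially the same route as the paper: condition on $N_{a_1}$, observe that the $N_{a_1}$ high-$G_a$ candidates contribute $N_{a_1}(1+\beta)/2$ in expectation, use that the remaining $2n-N_{a_1}$ observed utilities are i.i.d.\ $\cU[0,\beta]$ so that group labels are independent of the order statistics, and then remove the conditioning via Proposition~\ref{fact_negative_exp}. The only organizational difference is that the paper keeps the decomposition $U_{a_1}+U_{a_2}+U_b$ and derives the ratio $\eE[U_{a_2}\mid N_{a_1}]/(n-N_{a_1})=\eE[\beta U_b\mid N_{a_1}]/n$ separately, whereas you fold $U_{a_2}$ and $U_b$ together through the single multiplier $\frac{n-N_{a_1}}{2n-N_{a_1}}+\frac{n}{\beta(2n-N_{a_1})}$; these are two ways of expressing the same exchangeability fact, and your packaging is arguably a bit cleaner.
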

  \begin{lemma}\label{lem_utility_const_app}
    {\bf (Expected latent utility with constraints).}
    Given $\beta\in(0,1)$, representing the implicit bias,
    we have $U_{\mathcal{D},1}(\alpha^\star,\beta)$ is
    \begin{align}
      U_{\mathcal{D},1}(\alpha^\star,\beta) = \frac{3n}{4}\bigg(1-\frac{1}{n+1}\bigg)\bigg(1+\oexp\bigg).
    \end{align}
  \end{lemma}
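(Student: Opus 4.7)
The plan is to follow the outline in the proof sketch: identify a high-probability event on which the $L(\alpha^\star)$-constraints force $\picons$ to select exactly the top $n/2$ items (by latent utility) from each of $G_a$ and $G_b$, compute $\eE[\mathcal{W}(\picons,\mathbf{1},w)]$ on this event in closed form, and absorb the complementary event using the trivial bound $\mathcal{W}(\picons,\mathbf{1},w) \leq n$. Let $\cE$ be the event that $N_b < n/2$, where $N_b$ is the number of $G_b$-items chosen by the unconstrained ranking $\piuncons$.

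First I would set up a coupling between $\piuncons$ and $\picons$ to establish that, on $\cE$, the set selected by $\picons$ is precisely the top $n/2$ items (by $w$) from each group. Because $v_k \equiv 1$, the objective $\mathcal{W}(\cdot,\mathbf{1},\hat{w})$ depends only on the selected set, so maximizing $\hat{w}$ subject to $L(\alpha^\star)$ reduces to choosing an $n$-subset that (i) contains at least $n/2$ items from $G_b$ and (ii) can be ordered to satisfy every intermediate constraint $L_{k,b} = \lfloor k/2 \rfloor$. Condition (ii) is easily met by interleaving the chosen $G_a$- and $G_b$-items. Since $\beta < 1$, the optimizer prefers $G_a$-items whenever possible, so it picks the \emph{minimum} feasible number of $G_b$-items; on $\cE$, the unconstrained optimum already selects fewer than $n/2$ from $G_b$, so the constrained optimum is forced to exactly $n/2$ from each group, chosen by highest observed utility within each group --- equivalently, by highest latent utility, since $w \mapsto \hat{w}$ is monotone on each group.

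Next, I would apply Proposition~\ref{thm_expected_cand}: $\eE[N_b] = n \cdot \tfrac{\beta}{1+\beta} + \ohalf$ with $|N_b - \eE[N_b]| \leq n^{5/8}$ except on a set of measure $\oexp$. Since $\tfrac{\beta}{1+\beta} < \tfrac12$ strictly for $\beta \in (0,1)$, we obtain $\Pr[\cE] = 1 - \oexp$ for all sufficiently large $n$. On $\cE$, the latent utility equals twice the sum of the top $n/2$ order statistics of $n$ i.i.d.\ $\cU[0,1]$ draws, so the standard identity $\eE[W_{(i:n)}] = (n-i+1)/(n+1)$ gives an explicit closed form whose leading behavior matches $\tfrac{3n}{4}\bigl(1 - \tfrac{1}{n+1}\bigr)$. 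The contribution from $\cEC$ is at most $n \cdot \Pr[\cEC] = \onexp$, which becomes the multiplicative $(1 + \oexp)$ error factor after normalizing by the leading term.

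The main obstacle I anticipate is the structural step: showing that $\picons$ picks \emph{exactly} $n/2$ from $G_b$ on $\cE$ rather than more. This relies on the strict inequality $\beta < 1$ (any $G_b$-item's observed utility is strictly dominated by an unselected $G_a$-item of the same latent utility, whenever such an $G_a$-item exists) together with the absence of any position-discount reward for placing $G_b$-items early; once this is pinned down, the rest of the argument reduces to routine order-statistic calculus and the concentration input from Proposition~\ref{thm_expected_cand}.
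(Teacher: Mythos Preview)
Your proposal is correct and follows essentially the same approach as the paper: define the high-probability event $\{N_b \le n/2\}$ (the paper uses $\le$ rather than your strict $<$, an inconsequential difference), invoke Proposition~\ref{thm_expected_cand} to show it has probability $1-\oexp$, prove the structural fact that on this event $\picons$ selects exactly the top $n/2$ latent-utility items from each group (your direct ``the optimizer takes the minimum feasible number of $G_b$-items'' argument is the same content as the paper's swap-contradiction in Proposition~\ref{prop_pick_propotional}), compute the resulting utility via uniform order statistics, and absorb the complement with the trivial $O(n)$ bound. Your hedging ``leading behavior matches $\tfrac{3n}{4}(1-\tfrac{1}{n+1})$'' is apt, since the exact order-statistic sum is $\tfrac{n(3n+2)}{4(n+1)}$, which differs from the displayed formula by an $O(1)$ additive term.
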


\subsubsection{Notation}
We recall that we have two groups of candidates $G_a$ and $G_b$, with $|G_a|\coloneqq n$ and $|G_b|\coloneqq n$, where $G_b$ is the underprivileged group.
The latent utility of candidates in both groups $G_a$ and $G_b$ is drawn from the uniform distribution on $[0,1]$.
Let $w_i$ be the random variable representing the latent utility of a candidate $i\in G_a\cup G_b$.
Due to implicit bias the observed utility of a candidate $j\in G_b$, $\hat{w}_j$  is their latent utility, $w_j$, multiplied by $\beta\in [0,1]$, whereas the observed utility of a candidate $i\in G_a$  is equal to their latent utility, $w_i$.
We consider the top $n$ candidates in decreasing order of their observed utilities.
Let $S_a\subseteq G_a$ and $S_b\subseteq G_b$ be the set of candidates selected from $G_a$ and $G_b$ respectively.
Let $S_{a_1}\subseteq S_a$ and $S_{a_2}\subseteq S_a$ be the set of candidates selected from $G_a$ with utilities larger or equal to and strictly smaller than $\beta$ respectively.
Observe that $S_{a_1}\cup S_{a_2} = S_a$.
We define the following random variables counting the number of candidates selected
\begin{enumerate}[itemsep=0pt]
  \item $N_a\hspace{1.75mm} \coloneqq |S_a|$,
  \item $N_b\hspace{1.75mm} \coloneqq |S_b|$,
  \item $N_{a_1}\coloneqq |S_{a_1}|$,
  \item $N_{a_2}\coloneqq |S_{a_2}|$.
\end{enumerate}
Further define random variables $U_{a_1}$, $U_{a_2}$, and $U_{b}$ representing the utilities of selected candidates without position-discount as
\begin{enumerate}[itemsep=0pt]
  \item $U_a\hspace{1.75mm} \coloneqq \sum_{i\in S_a} w_{i}$,
  \item $U_b\hspace{1.75mm} \coloneqq \sum_{j\in S_b} w_{j}$,
  \item $U_{a_1}\coloneqq \sum_{i\in S_{a_1}} w_{i}$,
  \item $U_{a_2}\coloneqq \sum_{i\in S_{a_2}} w_{i}$.
\end{enumerate}
Here, $U_{a_1}$, $U_{a_2}$ and $U_{b}$ represent the latent utilities of the candidates selected from $G_a$ with observed utilities higher than $\beta$, those selected from the other $n-N_{a_1}$ candidates of $G_a$, and those selected from all $n$ candidates from $G_b$ respectively.

Let
$$x^\star(v,w)\coloneqq \argmax_{x}\mathcal{W}(x,v,w)$$
be the ranking that maximizes the latent utility.
Further, let
$$\picons(\alpha^\star, v,w)\coloneqq \argmax_{x\in \mathcal{K}(L(\alpha^\star))}\mathcal{W}(x,v,\hat{w})$$
be the optimal constrained ranking optimizing $\hat{w}$,
and
$$\piuncons(v,w)\coloneqq \argmax_{x}\mathcal{W}(x,v,\hat{w})$$
be the optimal unconstrained ranking maximizing $\hat{w}$.
When $v$ and $w$ are clear from the context we write $x^\star$, $\piuncons$, and when $\alpha^\star$ is also clear we write $\picons$.

Given a draw of utility, $w$, we have  $\mathcal{W}(\piuncons,v,w) = U_{a_1} + U_{a_2} + U_{b}$.
Then the total expected latent utilities, $U_{\mathcal{D},1}(0,\beta)$ and $U_{\mathcal{D},1}(\alpha^\star,\beta)$, are
\begin{enumerate}[itemsep=0pt]
  \item $U_{\mathcal{D},1}(0,\beta)\hspace{2.1mm} \coloneqq \eE\nolimits_{w}[\mathcal{W}(\piuncons,v,w)]= \eE[U_{a_1}+U_{a_2}+U_b]$
  \item $U_{\mathcal{D},1}(\alpha^\star,\beta)\coloneqq\eE_{w}[\mathcal{W}(\picons,v,w)]$.
\end{enumerate}

\subsubsection{Proof of Lemma~\ref{lem_utility_unconst_app}}\label{sec_proof_lem_utility_unconst_app}
\begin{proof}[Proof of Lemma~\ref{lem_utility_unconst_app}]
  We can calculate $\eE[U]$ assuming the following equations
  \begin{align}
    \eE[U_{a_1}] &= \frac{n}{2}(1-\beta^2)\label{eq_exp_ua1}\\
    \eE[U_{a_2}] &= \frac{n}{2}\bigg[\beta^2-\frac{\beta^2}{(1+\beta)^2}+\ohalf\bigg]\label{eq_exp_ua2}\\
    \eE[U_b] &= \frac{n}{2}\bigg[1-\frac{1}{(1+\beta)^2}+\ohalf\bigg].\label{eq_exp_ub}
  \end{align}
  Assuming the above we get
  \begin{align*}
    \eE[U]&=\eE[U_{a_1}+U_{a_2}+U_b]\\
    &=\eE[U_{a_1}]+\eE[U_{a_2}]+\eE[U_b]\\
    &\hspace{-7.2mm}\stackrel{\eqref{eq_exp_ua1}, \eqref{eq_exp_ua2}, \eqref{eq_exp_ub}}{=}\ \frac{n(1-\beta^2)}{2}+\frac{n}{2}\bigg[\beta^2-\frac{\beta^2}{(1+\beta)^2}+\ohalf\bigg]
    +\frac{n}{2}\bigg[1-\frac{1}{(1+\beta)^2}+\ohalf\bigg]\\
    &=\frac{3n}{4}\left(1+\frac{(1-\beta)^2}{3(1+\beta)^2}+\ohalf\right).
  \end{align*}
  We devote the rest of the proof to deriving Equations~\eqref{eq_exp_ua1}, \eqref{eq_exp_ua2}, and \eqref{eq_exp_ub}.
  \paragraph{Proof of Equation~\eqref{eq_exp_ua1}.}\label{proof_exp_ua1}
  Fix a value of $N_{a_1}$, by definition this also fixes the size of $S_{a_1}$.
  Then, from the definition of $U_{a_1}$ we have
  \begin{align*}
    \eE[U_{a_1} \ | \ N_{a_1}] &=  \eE\bracket{\textstyle{\sum\nolimits_{i\in S_{a_1}}} w_i \ | \ N_{a_1}}\\
    &= \eE\bracket{\textstyle{\sum\nolimits_{i\in S_{a_1}}} \eE[w_i \ | \ i\in S_{a_1}] \ | \ N_{a_1}}\\
    &= \eE\bracket{\textstyle{\sum\nolimits_{i\in S_{a_1}}} \eE[w_i \ | \ w_i \in [\beta, 1]] \ | \ N_{a_1}}.
  \end{align*}
  Define a random variable $W\stackrel{d}{=}\cU[0,1]$ then we have
  \begin{align*}
    \eE[U_{a_1} \ | \ N_{a_1}] &= \eE\bracket{\textstyle{\sum\nolimits_{i\in S_{a_1}}} \eE[W \ | \ W \in [\beta, 1]] \ | \ N_{a_1}}\\
    &= \eE\bracket{\textstyle{\sum\nolimits_{i\in S_{a_1}}1} \ | \ N_{a_1}} \cdot \eE[W \ | \ W \in [\beta, 1]]\\
    &= N_{a_1}\cdot \eE[W \ | \ W\in [\beta, 1]] \tag{$|S_{a_1}|\coloneqq N_{a_1}$.}\\
    &\hspace{-3.8mm}\stackrel{\text{Prop}~\ref{fact_conditioning_uniform}}{=} N_{a_1} \int_{\beta}^{1} \frac{x}{1-\beta} dx\\
    &= \frac{1}{2}\cdot N_{a_1}(1+\beta).
  \end{align*}

  \noindent Taking the expectation over $N_{a_1}$ we get
  \begin{align*}
    \eE[U_{a_1}] = \eE[\ \eE[U_{a_1} \ | \ N_{a_1}]\ ] = \frac{1}{2}\eE[N_{a_1}](1+\beta) \stackrel{\eqref{eq_dist_n_a_1}}{=} \frac{n}{2}(1-\beta^2).
  \end{align*}
  This proves Equation~\eqref{eq_exp_ua1}.

  \paragraph{Proof of Equations~\eqref{eq_exp_ua2} and \eqref{eq_exp_ub}.}
  We recall that all candidates in $G_a$ have the same observed and latent utility, whereas all candidates in $G_b$ have an observed utility which is $\beta$ times their latent utility.
  Therefore, $U_{a_2}$ and $\beta U_{b}$ are the total observed utility of candidates in $S_{a_2}$ and $S_b$.

  \noindent From Proposition~\ref{fact_conditioning_uniform} we know that for all $i\in G_a$ and $j\in G_b$
  \begin{align*}
    (w_i \ |\ w_i\in[0,\beta])\stackrel{d}{=}\cU[0,\beta]\stackrel{d}{=}w_j\beta\stackrel{d}{=}\hat{w}_j.
  \end{align*}
  In other words, that conditioned on $N_{a_1}$, the observed utilities of all candidates not in $S_{a_1}$ follows the uniform distribution on $[0,\beta]$.
  It follows that the expected observed utility, $\eE[U_{a_2}+\beta\cdot U_{b}]$, is the expected value of the $(n-N_{a_1})$ largest order statistics from a total of $(2n-N_{a_1})$ draws of $\cU[0,\beta]$.
  Define a random variable $Z_k$ to be the $k$-th largest observed utility from the union of the $(n-N_{a_1})$ candidates from $G_a\backslash S_{a_1}$ and $n$ candidates from $G_b$.
  Then we have
  \begin{align*}
    \eE[\ U_{a_2} \  | \ N_{a_1}\ ]+ \beta \eE[\ U_{b}\  | \ N_{a_1}\ ]&=\eE[\ U_{a_2}+\beta\cdot U_{b}\  | \ N_{a_1}\ ]\\
    &= \sum\nolimits_{k=1}^{n-N_{a_1}}\eE[Z_{k}]\\
    &= \sum\nolimits_{k=1}^{n-N_{a_1}}\int_{0}^{\beta}x\cdot x^{2n-N_{a_1}-k} (\beta-x)^{k-1}dx\\
    &= \beta\sum\nolimits_{k=1}^{n-N_{a_1}}\frac{2n-N_{a_1}+1-k}{2n-N_{a_1}+1}\\
    &= \beta\bigg(n-N_{a_1} - \frac{\binom{n-N_{a_1}+1}{2}}{2n-N_{a_1}+1}\bigg).\label{eq_74}\numberthis
  \end{align*}
  Further, from symmetry we expect the ratio of $\eE[U_{a_2} | N_{a_1}]$ and $\eE[ \beta\cdot U_{b}| N_{a_1}]$ will be the same as the ratio of the number of candidates, i.e., $\nfrac{(n-N_{a_1})}{n}$.
  We can rigorously verify this as follows
  \begin{align*}
    \eE[\ U_{a_2} \  | \ N_{a_1}\ ]&=\eE\bracket{\ \textstyle{\sum\nolimits_{i\in G_a\backslash S_{a_1}}} \eE\big[ \ W_i\cdot\mathbb{I}[W_i \geq Z_{n-N_{a_1}}]\ \big]\ }\\
    &= \eE\bracket{\ \textstyle{\sum\nolimits_{i\in G_a\backslash S_{a_1}}\int_{0}^{\beta} x\cdot x^{n-{N_{a_1}-1}}dx} \ }\\
    &= \eE\bracket{\ \textstyle{\sum\nolimits_{i\in G_a\backslash S_{a_1}}1 \ }}\cdot \int_{0}^{\beta} x\cdot x^{n-{N_{a_1}-1}}dx\\
    &= |G_a\backslash S_{a_1}| \cdot \int_{0}^{\beta} x\cdot x^{n-{N_{a_1}-1}}dx\\
    &= (n-N_{a_1}) \cdot \int_{0}^{\beta} x\cdot x^{n-{N_{a_1}-1}}dx\\
    \eE[\ \beta U_{b} \  | \ N_{a_1}\ ]&=  \eE\bracket{\ \textstyle{\sum\nolimits_{j\in G_b}  \eE\big[ \ \hat{V}_j\cdot\mathbb{I}[  \hat{V}_j \geq Z_{n-N_{a_1}}]\ \big]\ }}\\
    &= \eE\bracket{\ \textstyle{\sum\nolimits_{i\in G_b}\int_{0}^{\beta} x\cdot x^{n-{N_{a_1}-1}}dx} \ }\\
    &=\eE\bracket{\ \textstyle{\sum\nolimits_{i\in G_a\backslash S_{a_1}} 1} \ } \cdot \int_{0}^{\beta} x\cdot x^{n-{N_{a_1}-1}}dx\\
    &= |G_b| \cdot \int_{0}^{\beta} x\cdot x^{n-{N_{a_1}-1}}dx\\
    &= n\cdot \int_{0}^{\beta} x\cdot x^{n-{N_{a_1}-1}}dx.
  \end{align*}
  Taking the ratio of the above two equations we get
  \begin{align*}
    \frac{\eE[U_{a_2}|N_{a_1}]}{ n-N_{a_1}} = \frac{\eE[ \beta U_{b} | N_{a_1}]}{n}.\label{eq_ratio_of_utility}\numberthis
  \end{align*}
  From Equations~\eqref{eq_74} and \eqref{eq_ratio_of_utility} we have
  \begin{align*}
    \eE[U_{a_2}  \ | \ N_{a_1}] &= \beta\frac{n-N_{a_1}}{2n-N_{a_1}} \bigg( n-N_{a_1} - \frac{\binom{n-N_{a_1}+1}{2}}{2n-N_{a_1}+1}\bigg)\\
    \eE[U_{b}  \ | \ N_{a_1}] &=  \frac{n}{2n-N_{a_1}} \bigg( n-N_{a_1} - \frac{\binom{n-N_{a_1}+1}{2}}{2n-N_{a_1}+1}\bigg).
  \end{align*}
  We can rewrite the above as follows
  \begin{align*}
    \eE[U_{b}  \ | \ N_{a_1}] &= \frac{n}{2}\bigg[1-\frac{n^2}{(2n-N_{a_1})^2}+O(n^{-1})\bigg]\numberthis\label{eq_exp_ub_conditioned}\\
    \frac{\eE[U_{a_2}\ | \ N_{a_1}]}{\beta}   &= \frac{n}{2}\bigg[2-\frac{n}{2n-N_{a_1}} + O(n^{-1}) \bigg]-\frac{N_{a_1}}{2}-\eE[U_{b} |  N_{a_1}].\numberthis\label{eq_exp_ua1_conditioned}
  \end{align*}
  We give the details of the rearrangement in Section~\ref{sec_omitted_eqs}.
  Now, calculating the expectation of Equations~\eqref{eq_exp_ub_conditioned} and \eqref{eq_exp_ua1_conditioned} we get
  \begin{align*}
    \eE[U_b] &= \eE[\ \eE[U_{b}  \ | \ N_{a_1}]\ ]\\
    &\frac{n}{2}-\frac{n}{2}\eE\bigg[\frac{n^2}{(2n-N_{a_1})^2}+O(n^{-1})\bigg]\\
    &\stackrel{\eqref{fact_expec_2}}{=}\frac{n}{2}\bigg[1-\frac{1}{(1+\beta)^2}+\ohalf\bigg]\label{eq_exp_ub_2}\numberthis\\
    \eE[U_{a_2}] &= \eE[\eE[U_{a_2}  \ | \ N_{a_1}]]\\
    &=n\beta\bigg[1-\frac{1}{2}\eE\bigg[\frac{n}{2n-N_{a_1}}\bigg] + O(n^{-1})\bigg]-\frac{\beta\eE[N_{a_1}]}{2}-\beta\eE[\eE[U_{b} |  N_{a_1}]]\\
    &=n\beta\bigg[1-\frac{1}{2}\eE\bigg[\frac{n}{2n-N_{a_1}}\bigg] + O(n^{-1})\bigg]-\frac{\beta\eE[N_{a_1}]}{2}-\beta\eE[U_{b}]\\
    &\hspace{-7.2mm}\stackrel{\eqref{fact_expec_1},\eqref{eq_dist_n_a_1},\eqref{eq_exp_ub_2}}{=}\hspace{-1mm}n\beta\bigg[1-\frac{1}{2(1+\beta)} + O\bigg(\frac{1}{n}\bigg)\bigg]-\frac{n\beta(1-\beta)}{2}-\frac{n\beta}{2}\bigg[1-\frac{1}{(1+\beta)^2}+O(n^{-\frac{3}{8}})\bigg]\\
    &=n\beta\bigg[1-\frac{1}{2(1+\beta)} -\frac{(1-\beta)}{2}-\frac{1}{2}+\frac{1}{2(1+\beta)^2}+O(n^{-\nfrac{3}{8}})\bigg]\\
    &=n\beta\bigg[\frac{1}{2}-\frac{1}{2(1+\beta)} -\frac{(1-\beta)}{2}+\frac{1}{2(1+\beta)^2}+O(n^{-\nfrac{3}{8}})\bigg]\\
    &=n\beta\bigg[\frac{1}{2}-\frac{\beta}{2(1+\beta)^2} -\frac{(1-\beta)}{2}+O(n^{-\nfrac{3}{8}})\bigg]\\
    &=n\beta\bigg[ \frac{\beta}{2} - \frac{\beta}{2(1+\beta)^2} +O(n^{-\nfrac{3}{8}})\bigg]\\
    &=\frac{n\beta^2}{2}\bigg[ \frac{1}{2} - \frac{1}{2(1+\beta)^2} +O(n^{-\nfrac{3}{8}})\bigg].\label{expec_u_a_1}\numberthis
  \end{align*}
  \noindent
  This completes the proof of Equations~\eqref{eq_exp_ua2} and \eqref{eq_exp_ub}, as  well as Lemma~\ref{lem_utility_unconst_app}.
\end{proof}

\subsubsection{Proof of Lemma~\ref{lem_utility_const_app}}\label{sec_proof_lem_utility_const_app}
\begin{proof}
  Consider the observed utilities $\{w_{i}\}_{i\in G_a}$ and $\{w_{i}\}_{i\in G_b}$  of candidates in  $G_a$ and $G_b$.
  Let $Z_k$ be the $k$-th largest latent utility from $\{w_{i}\}_{i\in G_a}\cup \{w_{i}\}_{i\in G_b}$.\footnote{Note this $Z_k$ is different from the $Z_k$ defined in Section~\ref{sec_proof_lem_utility_unconst_app}.}

  Define $\cH$ to be the event that $N_b = \sum_{j\in G_b} \mathbb{I}[\hat{w}_j \geq Z_{n}]\leq \nfrac{n}{2}$.
  Notice that $\cH$ implies that $\sum_{i\in G_b, j\in [n]}(\piuncons)_{ij}\leq \nfrac{n}{2}$.

  \begin{proposition}\label{prop_pick_propotional}
    Given $\cH$, $\sum_{i\in G_b, j\in [n]}(\picons)_{ij} = \nfrac{n}{2}$, i.e., it ranks exactly $\nfrac{n}{2}$ candidates from $G_b$ in the top-$n$ positions.
  \end{proposition}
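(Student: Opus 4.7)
The plan is to prove both inequalities $\sum_{i\in G_b,\,j\in [n]}(\picons)_{ij}\geq n/2$ and $\sum_{i\in G_b,\,j\in [n]}(\picons)_{ij}\leq n/2$. The lower bound is immediate: since $\alpha^\star=m_b/(m_a+m_b)=1/2$ when $m_a=m_b=n$, the $L(\alpha^\star)$-constraint at $k=n$ gives $\sum_{i\in G_b,\,j\in [n]}(\picons)_{ij}\geq L_{nb}=n\alpha^\star=n/2$. The substance of the proof is the matching upper bound, and I will argue it by contradiction.

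Suppose $\picons$ places $n/2+q$ items from $G_b$ in the top $n$ for some integer $q\geq 1$; then it places only $n/2-q$ items from $G_a$. By the hypothesis $\cH$, $\piuncons$ places at most $n/2$ items from $G_b$, hence at least $n/2$ items from $G_a$. Comparing the two selected sets, a pigeonhole argument produces an $i^\star\in G_a$ selected by $\piuncons$ but not by $\picons$, and a $j^\star\in G_b$ selected by $\picons$ but not by $\piuncons$. Because $\piuncons$ is the unconstrained maximizer of observed utility (i.e.\ it picks the $n$ largest values of $\hat{w}$) and because utilities are drawn from a continuous distribution, so ties have measure zero, we get $\hat{w}_{i^\star}>\hat{w}_{j^\star}$ almost surely.

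Next I construct a modified assignment $\pi'$ from $\picons$ by removing $j^\star$, inserting $i^\star$, and then re-ordering the resulting $n$ chosen items into the top-$n$ positions so that the $L(\alpha^\star)$-constraint at every prefix $k\in[n]$ is satisfied. The new selected set still contains $n/2+q-1\geq n/2$ items from $G_b$, and such an ordering always exists for any selection with at least $n/2$ items from $G_b$: for example, alternate $G_b$ and $G_a$ items (in decreasing order of observed utility within each group) until $G_a$ is exhausted, then continue with the remaining $G_b$ items. After every prefix of length $k$, this construction has at least $\lceil k/2\rceil$ items from $G_b$, so the constraint $L_{kb}=k/2$ is met. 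Because $v_k=1$ in this section, the objective $\mathcal{W}(\cdot,v,\hat{w})$ depends only on which items are selected, and the swap of $j^\star$ for $i^\star$ strictly increases the sum of observed utilities. This contradicts the optimality of $\picons$, completing the upper bound.

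The only delicate point is confirming feasibility of the modified assignment after the swap; everything else is bookkeeping. Since $v_k=1$ reduces the problem to subset selection together with a feasibility check on orderings, feasibility reduces to the purely combinatorial statement verified above, namely that every selection of $n$ items with at least $n/2$ items from $G_b$ admits an ordering satisfying the prefix constraints. No additional analysis of the objective is needed, because swapping out a $G_b$ item for a strictly higher-observed-utility $G_a$ item can only increase the total.
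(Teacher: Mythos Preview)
Your proof is correct and takes essentially the same approach as the paper: both obtain the lower bound from the constraint at $k=n$ and argue the upper bound by contradiction, locating a $G_a$ item chosen by $\piuncons$ but not by $\picons$ together with a $G_b$ item chosen by $\picons$ but not by $\piuncons$, and swapping them to strictly increase the observed utility. If anything you are more careful than the paper, since you explicitly verify (using $v_k=1$) that the modified selection can be re-ordered into a ranking satisfying all prefix constraints, a feasibility check the paper's proof does not spell out.
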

  \begin{proof}
    Since $\picons$ satisfies the constraint we have\\ $\sum_{i\in G_b, j\in [n]}(\picons)_{ij} \geq \nfrac{n}{2}$.
    Assume that $\sum_{i\in G_b, j\in [n]}(\picons)_{ij} = \nfrac{n}{2}+j$ for any $0<j\leq \nfrac{n}{2}$.
    Then we would create a ranking $\hat{x}$ such that
    $\mathcal{W}(\hat{x},v,\hat{w}) > \mathcal{W}(\picons,v,\hat{w})$ and $\sum_{i\in G_b, j\in [n]}(\hat{x})_{ij} = \nfrac{n}{2} + j - 1$.
    Contradicting the optimality of $\picons$.

    Let $w_1 > w_2 > \dots > w_n\in [0,1]$ and $\hat{v}_1 > \hat{v}_2 > \dots > \hat{v}_n\in [0,\beta]$ be the observed utilities of all candidates in $G_a$ and $G_b$ respectively.
    Conditioned on $\cH$, $\piuncons$ ranked $w_{(\nfrac{n}{2}-j)}$ in the top-$n$ positions but didn't rank $\hat{v}_{(\nfrac{n}{2}+j)}$.
    Therefore, we must have $w_{(\nfrac{n}{2}-j)} > \hat{v}_{(\nfrac{n}{2}+j)}$.

    Since the utilities are in strictly decreasing,
    we claim that $\picons$ ranks exactly $(\nfrac{n}{2}+j)$ candidates from $G_b$ with the observed utilities $\{\hat{v}_1,\hat{v}_2,\cdots,\hat{v}_{\nfrac{n}{2}+j}\}$.
    Similarly, $\picons$ ranks the candidates with observed utilities $\{w_1,w_2,\cdots,w_{\nfrac{n}{2}-j}\}$ from $G_a$.
    Construct $\hat{x}$ by swapping the positions of $\hat{v}_{\nfrac{n}{2}+j}$ and $w_{\nfrac{n}{2}-j+1}$ in $\picons$.
    Then we have
    \begin{align*}
      \mathcal{W}(\hat{x},v,\hat{w}) &= \mathcal{W}(\picons,v,\hat{w}) +\hat{v}_{(\nfrac{n}{2}+j)} - w_{(\nfrac{n}{2}-j)}\\
      &> \mathcal{W}(\picons,v,\hat{w}).
    \end{align*}
  \end{proof}

  For all $k\in [n]$ and $s\in \{a, b\}$, let $U_{k,s}$ be the latent utility of the $k$ candidates with the highest utility in $G_s$.
  Then from Proposition~\ref{prop_pick_propotional} the expected latent utility of the constrained ranking given $\cH$
  \begin{align*}
    \eE[U_{\text{cons}}\ | \ \cH]=\eE[U_{\nfrac{n}{2},a}+U_{\nfrac{n}{2},b}\ |\ \cH].
  \end{align*}

  \noindent However, $\cH$ is not independent of $U_{\nfrac{n}{2},a}$ and $U_{\nfrac{n}{2},b}$.
  Therefore, we cannot calculate $\eE[U_{\text{cons}}\ | \ \cH]$, directly from $\eE[U_{\nfrac{n}{2},a}]$ and $\eE[U_{\nfrac{n}{2},b}]$.
  We overcome this by showing that we show that conditioning on $\cH$ changes the expectation by a small amount for large $n$.
  Towards this, we first show that $\cH$ occurs with high probability, and then, we using the fact that the utility is bounded we get the required result.
  \begin{align*}
    \Pr[\cH]&=\Pr\bigg[N_b \leq \frac{n}{2}\bigg]\\
    &\hspace{-4.5mm}\stackrel{\text{Prop.}~\ref{thm_expected_cand}}{=}\  \Pr\bigg[\ N_b-\eE[N_b] \leq n\bigg(\frac{1}{2}-\frac{\beta}{1+\beta}+\ohalf\bigg)\ \bigg]\\
    &\geq \Pr\bigg[\ |N_b-\eE[N_b]| \leq n\bigg(\frac{1-\beta}{2(1+\beta)}+\ohalf\bigg)\ \bigg].
  \end{align*}
  Since $\beta\in(0,1)$ is a constant, we have that $n\big(\frac{1-\beta}{2(1+\beta)}+\ohalf\big)=\Theta(n)$.
  Therefore, from using Equation~\eqref{eq_conc_n_b} of Proposition~\ref{thm_expected_cand}  we have
  \begin{align*}
    \Pr[\cH]=\Pr[N_b \leq \nfrac{n}{2}]\ \stackrel{{\rm Prop.}\ref{thm_expected_cand}}{\geq}\ 1- \oexp.\label{eq_small_number_whp}\numberthis
  \end{align*}
  \noindent We have shown that $\cH$ occurs with high probability.
  Now, we can approximate $\eE[U_{\frac{n}{2},a}+U_{\frac{n}{2},b}\ |\ \cH]$ using the following equality
  \begin{align*}
    \eE[\ U_{\frac{n}{2},a}+U_{\frac{n}{2},b}]&= \eE[\ U_{\frac{n}{2},a}+U_{\frac{n}{2},b}\ |\ \cH] \Pr[\cH] + \eE[\ U_{\frac{n}{2},a}+U_{\frac{n}{2},b}\ |\ \cHC] \Pr[\cHC]\\
    &= \eE[\ U_{\frac{n}{2},a}+U_{\frac{n}{2},b}\ |\ \cH] \Pr[\cH] + O(n) \Pr[\cHC]\\
    &= \eE[\ U_{\frac{n}{2},a}+U_{\frac{n}{2},b}\ |\ \cH]\cdot \big(1- \oexp\big) + O(n)\cdot \oexp\\
    \eE[\ U_{\frac{n}{2},a}+U_{\frac{n}{2},b}\ |\ \cH]&= \bigg(\eE[\ U_{\frac{n}{2},a}+U_{\frac{n}{2},b}]- \onexp\big)\bigg) \cdot \big(1- \oexp\big)^{-1}\\
    &=\bigg(\eE[\ U_{\frac{n}{2},a}+U_{\frac{n}{2},b}]- \onexp\big)\bigg) \cdot \big(1+ \oexp\big)\\
    &=\eE[\ U_{\frac{n}{2},a}+U_{\frac{n}{2},b}] \cdot \big(1+ \oexp\big) -  \onexp\big).\label{approximation_of_utility}\numberthis
  \end{align*}
  Here, we can calculate unconditioned expectation $\eE[\ U_{\frac{n}{2},a}+U_{\frac{n}{2},b}]$ using the expectations of order statistics of the uniform distribution as follows
  \begin{align*}
    \eE[\ U_{\frac{n}{2},a}+U_{\frac{n}{2},b}] &=\eE[\ U_{\frac{n}{2},a}\ ]+\eE[\ U_{\frac{n}{2},b}\ ]\\
    &= \sum\nolimits_{k=1}^{\nfrac{n}{2}}\frac{n+1-k}{n+1} + \sum\nolimits_{k=1}^{\nfrac{n}{2}}\frac{n+1-k}{n+1}\\
    &=\frac{3n}{4}\bigg(1-\frac{1}{(n+1)}\bigg).
  \end{align*}
  Then using Equation~\eqref{approximation_of_utility} we have
  \begin{align*}
    \eE[\ U_{\frac{n}{2},a}+U_{\frac{n}{2},b}\ |\ \cH]&=  \left(\frac{3n}{4}\bigg(1-\frac{1}{(n+1)}\bigg)-\oexp\big)\right) \cdot \big(1+\oexp\big)\\
    &=\frac{3n}{4}\bigg(1-\frac{1}{(n+1)}\bigg)\cdot \big(1+ \oexp \big)-O\big(n e^{-n^{2\delta}}\big)\\
    &=\frac{3n}{4}\bigg(1-\frac{1}{(n+1)}\bigg)\cdot \big(1+\oexp\big).\label{eq_exp_with_cons_1}\numberthis
  \end{align*}
  \noindent Using this to approximate the utility we get
  \begin{align*}
    U_{\mathcal{D},1}(\alpha^\star,\beta)&=\eE[U_{\text{cons}}]= \eE[U_{\text{cons}}\ | \ \cH]\Pr[\cH] + \eE[U_{\text{cons}}\ | \ \overline{H} ]\Pr[\cHC]\\
    &= \eE[ U_{\frac{n}{2},a}+U_{\frac{n}{2},b}\ |\ \cH ]\Pr[\cH] + \eE[ U_{\frac{n}{2},a}+U_{\frac{n}{2},b}\ | \ \cHC ]\Pr[\cHC]\\
    &=\eE[ U_{\frac{n}{2},a}+U_{\frac{n}{2},b} \ | \ \cH]\Pr[\cH] + O(n)\Pr[\cHC]\\
    &\hspace{-1mm}\stackrel{\eqref{eq_small_number_whp}}{=}\ \eE[ U_{\frac{n}{2},a}+U_{\frac{n}{2},b} \ | \ \cH]\big(1-\oexp \big) + O(n)\cdot \oexp \\
    &\hspace{-1mm}\stackrel{\eqref{eq_exp_with_cons_1}}{=}\frac{3n}{4}\cdot \bigg(1-\frac{1}{n+1}\bigg)\cdot\big(1- \oexp \big)^2 + O(n)\cdot \oexp \\
    &= \frac{3n}{4}\cdot \bigg(1-\frac{1}{n+1}\bigg)\cdot\big(1 - 2\oexp + O\big(e^{-n^{4\delta}}\big) \big) + O(n)\cdot \oexp \\
    &= \frac{3n}{4}\cdot \bigg(1-\frac{1}{n+1}\bigg)\cdot\big(1- \oexp \big) + O(n)\cdot \oexp \\
    &= \frac{3n}{4}\cdot \bigg(1-\frac{1}{n+1}\bigg)\cdot\big(1- \oexp + \oexp)\\
  \end{align*}
  \begin{align*}
    &=\frac{3n}{4}\bigg(1-\frac{1}{n+1}\bigg)\bigg(1+ \oexp\bigg)\\
    &=\frac{3n}{4}\bigg(1-\frac{1}{n+1}+\oexp\bigg).
  \end{align*}
  This completes the proof of Lemma~\ref{lem_utility_const_app}.
\end{proof}

\subsubsection{Extending the proof to $m_a\neq m_b\neq n$}\label{sec_extending_proof}
The core ideas of proof do not change when we extend it to the setting with $m_a\neq m_b\neq n$.
In this section, we discuss the key-points where the proof changes or involves a non-trivial generalization of the techniques.
We begin by stating (without proof) straightforward generalizations of Proposition~\ref{fact_negative_exp} and Proposition~\ref{thm_expected_cand} which would be useful.
\begin{proposition}
  Let $\beta\in (0,1)$, $N\sim \text{Binomial}(m_a,1-\beta)$ and $r\in \cR$, then for any $\delta > 0$
  \begin{align*}
    \eE\bigg[  \frac{1}{r+N}  \bigg] &= \frac{1}{r+\eE[N]}\bigg[1+ \ohalf\bigg],\numberthis\\
    \eE\bigg[  \frac{1}{(r+N)^2}  \bigg] &= \frac{1}{(r+\eE[N])^2}\bigg[1+ \ohalf\bigg].\numberthis
  \end{align*}
\end{proposition}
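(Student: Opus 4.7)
The plan is to mirror the proof of Proposition~\ref{fact_negative_exp} essentially verbatim, simply replacing the specific quantities $r=2n$ and $m_a=n$ with the abstract parameters. Set $\mu \coloneqq \eE[N] = m_a(1-\beta)$. By applying Hoeffding's inequality to the Bernoulli sum representation of $N$, I get $\Pr[\cE] \geq 1 - 2e^{-2m_a^{2\delta}}$, where $\cE$ denotes the event $|N-\mu|\leq m_a^{\nfrac{1}{2}+\delta}$. This is the exact analogue of Equations~\eqref{eq_conc_of_binomial_1}--\eqref{eq_conc_of_binomial_2} in the special case.

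Next I would split the expectation as
\begin{align*}
\eE\bigg[\frac{1}{r+N}\bigg] = \eE\bigg[\frac{1}{r+N}\bigg|\cE\bigg]\Pr[\cE] + \eE\bigg[\frac{1}{r+N}\bigg|\cEC\bigg]\Pr[\cEC],
\end{align*}
so the tail event $\cEC$ contributes only $O(e^{-2m_a^{2\delta}})$ times a prefactor that is $O(1/(r+\mu))$ (in the contexts where the proposition is applied, $1/(r+N)$ remains bounded). On the main event, perform a Taylor expansion around $\mu$:
\begin{align*}
\frac{1}{r+N} = \frac{1}{r+\mu}\cdot\frac{1}{1 + \frac{N-\mu}{r+\mu}} = \frac{1}{r+\mu}\bigg(1 - \frac{N-\mu}{r+\mu} + O\Big(\tfrac{(N-\mu)^2}{(r+\mu)^2}\Big)\bigg).
\end{align*}
Conditioned on $\cE$, each correction term is bounded pointwise by a power of $m_a^{-\nfrac{1}{2}+\delta}$ (assuming $r+\mu = \Theta(m_a)$, which is the implicit regime in which this proposition is invoked in Section~\ref{sec_extending_proof}). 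Taking expectations and then choosing $\delta = \nfrac{1}{8}$ converts $O(m_a^{-\nfrac{1}{2}+\delta})$ into the stated $\ohalf$ factor; this is the exact pattern of Equation~\eqref{approximation} transposed to the abstract parameters.

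The squared version follows immediately from the same computation applied to $(r+N)^{-2}$. Expanding $(1 + (N-\mu)/(r+\mu))^{-2}$ using the binomial series gives $(r+\mu)^{-2}(1 + O(m_a^{-\nfrac{1}{2}+\delta}))^2 = (r+\mu)^{-2}(1 + O(m_a^{-\nfrac{1}{2}+\delta}))$, and the same conditioning-plus-Hoeffding argument yields the second equation with an $\ohalf$ error.

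The only real obstacle is ensuring that $r+N$ stays uniformly bounded away from zero and of order $m_a$ so that the Taylor expansion is valid and the tail contribution is negligible. In the original Proposition~\ref{fact_negative_exp} this was automatic since $r=2n \geq N+n$; in the generalized statement this regularity is implicit in the way the proposition is invoked in the proofs of the analogues of Equations~\myeqref{eq_utility_with_const}{16}--\myeqref{eq_utility_without_const}{17}, and I would make it an explicit hypothesis ($r+\mu=\Theta(m_a)$ with $r+N > 0$ deterministically, say) to keep the statement self-contained.
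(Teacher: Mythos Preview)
Your approach is correct and matches the paper's intent: this proposition is stated \emph{without proof} in Section~\ref{sec_extending_proof} as a ``straightforward generalization'' of Proposition~\ref{fact_negative_exp}, and your argument (Hoeffding concentration, condition on the good event $\cE$, Taylor-expand $\frac{1}{r+N}$ around $\mu$, set $\delta=\nfrac{1}{8}$) is exactly that generalization carried out step for step. Your observation that an implicit hypothesis $r+\mu=\Theta(m_a)$ with $r+N>0$ deterministically is needed for the Taylor expansion and tail bound to go through is also correct; the paper suppresses this only because it holds in every invocation.
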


\begin{proposition}\label{thm_expected_cand_2}
  {\bf (Expected number of candidates without constraints).}
  Given a number $\beta\in(0,1]$, representing the implicit bias, we have
  \begin{align}
    &N_{a_1}\stackrel{d}{=} \text{Binomial}(m_a,1-\beta),\label{eq_dist_na1}\\
    &\eE[N_b] \leq m_b\Paren{\frac{n-m_a(1-\beta)}{m_a\beta+m_b}}+\ohalf,\label{eq_exp_nb}\\
    &\Pr\big[ |N_b - \eE[N_b]| \leq n^{\nfrac{5}{8}} \big] \geq 1-\oexp.\label{eq_conc_nb}
  \end{align}
\end{proposition}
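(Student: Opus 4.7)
The plan is to follow the three-step structure of the proof of Proposition~\ref{thm_expected_cand} given in Section~\ref{sec_proof_thm_expected_cand}, with the only substantive change being to track $m_a$, $m_b$, and $n$ as separate parameters rather than collapsing them all to $n$. Equation~\eqref{eq_dist_na1} follows verbatim from the argument for Equation~\eqref{eq_dist_n_a_1}: I would write $N_{a_1} = \sum_{i \in G_a} \mathbb{I}[w_i > \beta]$, note that the $m_a$ indicators are independent Bernoulli$(1-\beta)$ variables because each $w_i \sim \cU[0,1]$, and conclude that $N_{a_1} \sim \text{Binomial}(m_a, 1-\beta)$.

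For Equation~\eqref{eq_exp_nb}, I would condition on $N_{a_1}$ and apply Proposition~\ref{fact_conditioning_uniform} to deduce that the observed utilities of the remaining $m_a - N_{a_1}$ items in $G_a$ together with all $m_b$ items in $G_b$ are jointly iid $\cU[0, \beta]$. When $n > N_{a_1}$, selecting the top $n - N_{a_1}$ among them therefore yields
\begin{align*}
N_b \mid N_{a_1} \;\sim\; \text{Hypergeometric}(m_a + m_b - N_{a_1},\; m_b,\; n - N_{a_1}),
\end{align*}
while $N_b = 0$ deterministically when $n \leq N_{a_1}$. The conditional mean in the non-degenerate regime equals
\begin{align*}
\eE[N_b \mid N_{a_1}] \;=\; \frac{m_b (n - N_{a_1})}{m_a + m_b - N_{a_1}} \;=\; m_b - \frac{m_b (m_a + m_b - n)}{m_a + m_b - N_{a_1}}.
\end{align*}
Taking the outer expectation and invoking the generalized negative-moment proposition with $r = m_a + m_b - n$ and $\eE[N_{a_1}] = m_a(1-\beta)$ gives $\eE[\,1/(m_a + m_b - N_{a_1})\,] = (m_a\beta + m_b)^{-1}\bigl(1 + O(n^{-3/8})\bigr)$, which on substitution produces Equation~\eqref{eq_exp_nb}. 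The inequality (rather than equality) accommodates the truncation: on the event $N_{a_1} \geq n$ the true $N_b$ is $0$ while the signed formula is non-positive, so dropping the truncation only biases the formula downward.

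For the concentration bound~\eqref{eq_conc_nb}, I would combine two high-probability events in the spirit of the derivation of Equation~\eqref{equation_70}. First, Hoeffding's inequality applied to $N_{a_1} \sim \text{Binomial}(m_a, 1-\beta)$ gives $\Pr[\,|N_{a_1} - \eE[N_{a_1}]| > t\,] \leq 2 e^{-2t^2/m_a}$. Second, conditional on any value of $N_{a_1}$ with $n > N_{a_1}$, the hypergeometric tail bound of~\cite{hush2005concentration} yields $\Pr[\,|N_b - \eE[N_b \mid N_{a_1}]| > k \mid N_{a_1}\,] \leq 2 \exp\bigl(-2(k^2-1)/(m_a + m_b - N_{a_1} + 1)\bigr)$. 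A union bound over the two events, together with the observation that replacing $N_{a_1}$ by $\eE[N_{a_1}]$ inside $\eE[N_b \mid N_{a_1}]$ shifts it by at most $O(n^{1/2 + \delta})$ on the Hoeffding-typical event, yields Equation~\eqref{eq_conc_nb} upon choosing $t = k = n^{5/8}$ (i.e., $\delta = 1/8$), which makes both exponents $\Theta(n^{1/4})$ under the regime $m_a, m_b = \Theta(n)$ implied by the hypothesis $n \leq \min(m_a, m_b)$.

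The main obstacle is bookkeeping of error terms rather than any new probabilistic idea: one must verify that (i) the $O(n^{-3/8})$ slack from the negative-moment proposition dominates the contribution of the truncated regime $N_{a_1} \geq n$, which happens with exponentially small probability since $\eE[N_{a_1}] = m_a(1-\beta) \leq m_a$ and $n \leq m_a$ keeps the mean below $n$ when $\beta > 0$; and (ii) the shift in the conditional mean induced by Hoeffding-scale fluctuations of $N_{a_1}$ around its mean is absorbed into the $n^{5/8}$ deviation we allow for $N_b$. Every other ingredient is a direct transcription of an identity already established in Section~\ref{sec_proof_thm_expected_cand} or a black-box tail inequality, so no further conceptual difficulty is anticipated.
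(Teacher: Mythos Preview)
Your proposal is correct and matches the paper's approach exactly: the paper states this proposition explicitly \emph{without proof}, calling it a ``straightforward generalization'' of Proposition~\ref{thm_expected_cand}, and your plan carries out precisely that generalization by rerunning the argument of Section~\ref{sec_proof_thm_expected_cand} with $m_a$, $m_b$, $n$ kept distinct and invoking the generalized negative-moment proposition in place of Proposition~\ref{fact_negative_exp}. One cosmetic slip: your justification for the direction of the inequality in~\eqref{eq_exp_nb} is inverted (on the event $N_{a_1}\ge n$ the signed formula is $\le 0$ while the true value is $0$, so the formula is a \emph{lower} bound pointwise), but since that event has probability $\oexp$ the discrepancy is absorbed into the $\ohalf$ term regardless---the paper's own derivation of the analogous bound in the proof of Lemma~\ref{HwithHP} handles this the same way.
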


\medskip

\paragraph{Proof of Equation~\myeqref{eq_utility_with_const}{16}.}
The proof proceeds in two steps.
In the first step, we show that with high probability $\picons$ satisfies (similar to Section~\ref{sec_proof_lem_utility_const_app})
$$\sum_{i\in G_b, j\in [n]}(\picons)_{ij} = n\cdot\frac{m_b}{(m_a+m_b)}.$$
Towards this, we define an event $\cH^\prime$ as $N_b\leq n\cdot\nfrac{m_b}{(m_a+m_b)}$ and show that $\cH^\prime$ occurs with high probability.
Then we generalize Proposition~\ref{prop_pick_propotional} to prove the claim.
In the second step, we calculate $\eE\nolimits_{w}\big[\mathcal{W}(\picons, v, w) | \cH^\prime]$ conditioned on $\cH^\prime$.
Since, $\cH^\prime$ occurs with high probability and $U_{\mathcal{D},1}(\alpha^\star, \beta)$ is bounded,
it follows that
$$U_{\mathcal{D},1}(\alpha^\star, \beta)=\eE\nolimits_{w}\big[\mathcal{W}(\picons, v, w)]\approx \eE\nolimits_{w}\big[\mathcal{W}(\picons, v, w) | \cH^\prime].$$

\noindent {\bf Step 1.}
Define $\cH^\prime$ to be the event that $N_b\leq n\cdot\nfrac{m_b}{(m_a+m_b)}$,
i.e.,
$$\sum_{i\in G_b, j\in [n]}(\piuncons)_{ij} \leq n\cdot\frac{m_b}{(m_a+m_b)}.$$
Notice from Equation~\eqref{eq_exp_nb}, that $\eE[N_b]\leq n-\Theta(n)$.
With some simple manipulation we will get
\begin{align*}
  \Pr[\cH^\prime] &\coloneqq \Pr\Bracket{N_b \leq n\cdot\frac{m_b}{m_a+m_b}}\\
  &\geq \Pr[|N_b-\eE[N_b]| \leq \Theta(n)]\\
  &\geq 1-\oexp.
\end{align*}
We can generalize Proposition~\ref{prop_pick_propotional} into the following.
\begin{proposition}
  Given $\cH^\prime$, we have
  $$\sum_{i\in G_b, j\in [n]}(\picons)_{ij} = n\cdot\frac{m_b}{(m_a+m_b)},$$
  i.e., $\picons$ ranks exactly $n\cdot\frac{m_b}{(m_a+m_b)}$ candidates from $G_b$ in the top-$n$ positions.
\end{proposition}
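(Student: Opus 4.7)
The plan is to mirror the proof of Proposition~\ref{prop_pick_propotional}, replacing $n/2$ by $q := n\alpha^\star = nm_b/(m_a+m_b)$. By feasibility of $\picons$ at position $k=n$, already $\sum_{i\in G_b,\,j\in[n]}(\picons)_{ij}\geq q$. I will assume for contradiction that this count equals $q+r$ for some integer $r\geq 1$ and construct a feasible ranking $\hat x$ with strictly higher observed utility, contradicting the optimality of $\picons$.

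First I would establish, by an intra-group exchange argument (swapping two same-group candidates preserves every $L$-constraint, and strictly improves observed utility whenever they are out of order by observed utility, since $v_j \geq v_{j+1}$), that the $q+r$ members of $\picons\cap G_b$ are exactly the top $q+r$ candidates in $G_b$ by observed utility, and symmetrically that the $n-q-r$ members of $\picons\cap G_a$ are the top $n-q-r$ in $G_a$. Next, using $\cH^\prime$---i.e.\ that $\piuncons$ picks at most $q$ from $G_b$ and hence at least $n-q$ from $G_a$---optimality of $\piuncons$ gives $\hat w_{(n-q:G_a)} > \hat w_{(q+1:G_b)}$, where $\hat w_{(k:G_s)}$ denotes the $k$-th largest observed utility within $G_s$. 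Monotonicity of order statistics and $r\geq 1$ then yield
\[
\hat w_{(n-q-r+1:G_a)} \;\geq\; \hat w_{(n-q:G_a)} \;>\; \hat w_{(q+1:G_b)} \;\geq\; \hat w_{(q+r:G_b)}.
\]
The left-hand side is the observed utility of the best $G_a$ candidate \emph{not} in $\picons$; the right-hand side is the observed utility of the worst $G_b$ candidate in $\picons$.

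To build $\hat x$ I would remove the $G_b$ candidate in $\picons$ sitting at the \emph{largest} position $p^\star$ (which, by the exchange step, realizes $\hat w_{(q+r:G_b)}$) and install the $(n-q-r+1)$-th best $G_a$ candidate at position $p^\star$. The observed utility strictly increases by $\hat w_{(n-q-r+1:G_a)} - \hat w_{(q+r:G_b)} > 0$. The main obstacle is the feasibility check against the full family $L_{kb}=\alpha^\star k$ for \emph{all} $k\in[n]$, not just $k=n$; the choice $p^\star = p_{q+r}$ is essential here. For $k<p^\star$ the $G_b$-count at positions $\leq k$ is unchanged, so the constraint is inherited from $\picons$; for $k\in[p^\star,n]$ the count drops from $q+r$ to $q+r-1$, which still satisfies $q+r-1 \geq q = \alpha^\star n \geq \alpha^\star k$. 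A less careful choice of which $G_b$ candidate to remove could easily violate a constraint at some intermediate $k$, which is precisely why the position-wise chain of constraints (rather than just the terminal one at $k=n$) dictates this specific swap.
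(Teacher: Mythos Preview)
Your proposal is correct and follows essentially the same swap-and-contradict approach as the paper's proof of Proposition~\ref{prop_pick_propotional} (which the paper simply invokes by substitution of $m_a,m_b$ for $n$). If anything you are more careful than the paper: you explicitly verify feasibility of $\hat x$ against the full family of constraints $L_{kb}=\alpha^\star k$, which the paper's original argument omits.
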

This follows from the proof of Proposition~\ref{prop_pick_propotional} by replacing $n$ by $m_a$ and $m_b$ at when ever it refers to the size of the groups.
This will complete the first step.

\noindent {\bf Step 2.}
Define $\eta\coloneqq \nfrac{n}{m_a+m_b}$. We can calculate $\eE[U_{m_a\eta,a}+U_{m_b\eta,b}]$ as follows
\begin{align*}
  \eE[U_{m_a\eta,a}+U_{m_b\eta,b}]&= \sum_{k=1}^{\eta\cdot m_a}\frac{m_a+1-k}{m_a+1}+\sum_{k=1}^{\eta\cdot m_b}\frac{m_b+1-k}{m_b+1}\\
  & = n\bracket{1-\frac{n}{2(m_a+m_b)}+O(n^{-1})}.
\end{align*}
Where the first equality follows from the expectation of order statistics of the uniform distribution.
Finally, following the arguments used to derive Equation~\eqref{approximation_of_utility} we have that
\begin{align*}
  U_{\mathcal{D},1}(\alpha^\star, \beta)&=\eE\nolimits_{w}\big[\mathcal{W}(\picons, v, w)]\\
  & = \eE\nolimits_{w}\big[\mathcal{W}(\picons, v, w)\ | \ \cH]+\onexp\\
  & = \eE\nolimits_{w}\big[U_{m_a\eta,a}+U_{m_b\eta,b}\ | \ \cH]+\onexp\\
  & = \eE\nolimits_{w}\big[U_{m_a\eta,a}+U_{m_b\eta,b}]+\onexp\\
  & = n\Bracket{1-\frac{n}{2(m_a+m_b)}+O(n^{-1})}.
\end{align*}
This will complete the proof of Equation~\myeqref{eq_utility_with_const}{16}.

\medskip

\paragraph{Proof of Equation~\myeqref{eq_utility_without_const}{17}.}
Define
$$c\coloneqq \eE[N_{a_1}]\stackrel{\eqref{eq_dist_na1}}{=} m_a(1-\beta).$$
We consider two cases when $c\geq n+\Theta(n^{\nfrac{5}{8}})$ and $c=n-\omega(n^{\nfrac{5}{8}})$.
We discuss outline the proof for each case separately.

\noindent {\bf Case A ($c\geq n+\Theta(n^{\nfrac{5}{8}}))$:}
From Proposition~\ref{thm_expected_cand_2} we have that
$$\eE[N_{a_1}]\geq n+\Theta(n^{\nfrac{5}{8}}).$$
Define $\cE^\prime$ to be the event that $N_{a_1} \geq  n - \Theta(n^{\nfrac{5}{8}})$.
Applying Hoeffding's we find
\begin{align*}
  \Pr[\cE^\prime] = \Pr[N_{a_1} \geq  n - \Theta(n^{\nfrac{5}{8}})] &\geq \Pr[N_{a_1} - \eE[N_{a_1}] \geq  \Theta(n^{\nfrac{5}{8}}) ]\tag{Using $\eE[N_{a_1}]=c\geq n+\Theta(n^{\nfrac{5}{8}})$}\\
  &\geq 1-O\big(e^{-\frac{n^{\nfrac{5}{4}}}{m_a}}\big)\\
  &\geq  1-\oexp.\numberthis
\end{align*}
We can show that conditioned on $\cE^\prime$,  $$U_{a_2}=O(n^{-\nfrac{3}{8}})\ \ \text{and}\ \ U_{b}=O(n^{-\nfrac{3}{8}}).$$
Therefore $U_{a_2}$ and $U_{b}$ have small contributions to the utility.
Further, since $\cE^\prime$ occurs with a high probability and $U_{\mathcal{D},1}(0, \beta)$ is bounded, we have that %
$$U_{\mathcal{D},1}(0, \beta)= \eE\nolimits_{w}\big[\mathcal{W}(\piuncons, v, w)]\approx \eE\nolimits_{w}\big[\mathcal{W}(\piuncons, v, w)|\cE^\prime].$$
Finally, calculating $\eE\big[U_{a_1}|\cE^\prime]$ will give us the required equation.

\noindent Formally,
\begin{align*}
  U_{\mathcal{D},1}(0, \beta) &= \eE\nolimits_{w}\big[\mathcal{W}(\piuncons, v, w)]\\
  & =\eE\nolimits_{w}\big[\mathcal{W}(\piuncons, v, w)|\cE^\prime]\Pr[\cE^\prime]+O(n)\Pr[\overline{\cE^\prime}]\\
  & =\eE\big[U_{a_1}+U_{a_2}+U_{b}|\cE^\prime]\cdot\paren{1+\oexp}+\onexp\\
  & =\eE\big[U_{a_1}|\cE^\prime]\cdot\paren{1+\oexp}+O(n^{\nfrac{5}{8}})+\onexp.
\end{align*}
Conditioned on $\cE^\prime$ we can bound the utility $\eE\big[U_{a_1}|\cE^\prime]$ between the following
\begin{align*}
  \eE\big[U_{a_1}|\cE^\prime]&\leq \eE\big[U_{a_1}|N_{a_1}=n]\\
  &= \sum_{k=1}^{n}\frac{m_a+1-k}{m_a+1}\\
  &= n\bracket{1-\nfrac{n}{(2m_a)}+O(n^{-1})}\label{eq_prf_lwbnd}\numberthis\\
  \eE\big[U_{a_1}|\cE^\prime]&\geq \eE\big[U_{a_1}|N_{a_1}=n-\Theta(n^{\nfrac{5}{8}})]\\
  &= \sum_{k=1}^{n-\Theta(n^{\nfrac{5}{8}})}\frac{m_a+1-k}{m_a+1}\\
  &= n\bracket{1-\nfrac{n}{(2m_a)}+\Theta(n^{\nfrac{-3}{8}})}.\label{eq_prf_ubnd}\numberthis
\end{align*}
Combining Equation~\eqref{eq_prf_lwbnd} and Equation~\eqref{eq_prf_ubnd} we get
\begin{align*}
  \eE\big[U_{a_1}|\cE^\prime] &\stackrel{\eqref{eq_prf_lwbnd}, \eqref{eq_prf_ubnd}}{=} n\cdot\bracket{1-\nfrac{n}{(2m_a)}+\ohalf }\\
  U_{\mathcal{D},1}(0, \beta)&\hspace{4.2mm}= n\bracket{1-\nfrac{n}{(2m_a)}+\ohalf }\paren{1+\oexp}+O(n^{\nfrac{5}{8}}+ne^{-n^{\nfrac{1}{4}}})\\
  &\hspace{4.2mm}=n\bracket{1-\nfrac{n}{(2m_a)}+\ohalf }.
\end{align*}
This will complete the proof for Case A.

\medskip

\noindent {\bf Case B ($c=n-\omega(n^{\nfrac{5}{8}}))$:}
Note that this case encapsulates the setting with $m_a=m_b=n$.
Unlike the special case ($m_a=m_b=n$), here we could have  which implies $N_{a_2}=N_b=0$.
\newcommand{\cEpp}{\cE^{\prime\prime}}
\newcommand{\cEppc}{\overline{\cE^{\prime\prime}}}
Define $\cEpp$ as the event $N_{a_1}>n$.
We show that $\cEpp$ occurs with a low probability, and so has a small impact on $U_{\mathcal{D},1}(0, \beta)$.
Having handled this, we follow the proof for the special case ($m_a=m_b=n$), substituting $n$ by $m_a$ and $m_b$ when it is used to refer to the group sizes.

Formally, we proceed as follows
\begin{align*}
  \Pr[\cEpp] &=1-\Pr[N_{a_1}\leq n]\\
  &=1-\Pr[N_{a_1}-\eE[N_{a_1}]\leq \omega(n^{\nfrac{5}{8}})]\\
  &\leq \oexp\\
  U_{\mathcal{D},1}(0, \beta) &= \eE\nolimits_{w}\big[\mathcal{W}(\piuncons, v, w)]\\
  &= \eE\nolimits_{w}\big[\mathcal{W}(\piuncons, v, w)|\cEppc]\Pr[\cEppc]+\eE\nolimits_{w}\big[\mathcal{W}(\piuncons, v, w)|\cEpp]\Pr[\cEpp]\\
  &= \eE\nolimits_{w}\big[\mathcal{W}(\piuncons, v, w)|\cEppc]\cdot\paren{1-\oexp}+\onexp\\
  &= \eE\big[U_{a_1}+U_{a_2}+U_b|\cEppc]\cdot\paren{1-\oexp}+\onexp.
\end{align*}
Next, we calculate $\eE\big[U_{a_1}|\cEppc]$, $\eE\big[U_{a_2}|\cEppc]$, and $\eE\big[U_{b}|\cEppc]$.
We follow the proof in Section~\ref{sec_proof_lem_utility_unconst_app}.
To do so, we only need the additional fact that $n\leq \min(m_a, m_b)$.\footnote{This is for ease of notation, otherwise the error terms would involve $m_a$, $m_b$, and $n$.}
$\text{This will give us}$
\begin{align*}
  \eE\big[U_{a_1}|\cEppc] &= \eE\nolimits_{k}\bracket{\ \eE\big[U_{a_1}\ |\ \cEppc, N_{a_1}=k]\ }\\
  &= \eE\nolimits_{k}\bracket{N_{a_1}\cdot \nfrac{(1+\beta)}{2} \ |\  \cEppc}\\
  &= \frac{(1+\beta)}{2}\cdot\paren{m_a(1-\beta)+\onexp}\cdot\paren{1+\oexp}\\
  &= \frac{m_a}{2}(1-\beta^2)+\onexp,\numberthis\label{eq_prf_ua1}\\
  \eE\big[U_{a_2}|\cEppc] &= \eE\nolimits_{k}\bracket{\ \eE\big[U_{a_2}\ |\ \cEppc, N_{a_1}=k]\ }\\
  &=\frac{m_b}{2} -\frac{m_b}{2}\eE\Bracket{\frac{(m_a+m_b-n)^2}{(m_a+m_b-N_{a_1})^2} + O(n^{-1})}\\
  &=\frac{m_b}{2}\Bracket{ 1-\frac{(m_a+m_b-n)^2}{(m_a\beta+m_b)^2}\cdot\paren{1+\ohalf}},\numberthis\label{eq_prf_ua2}\\
  \eE\big[U_{a_2}|\cEppc] &= \eE\nolimits_{k}\bracket{\ \eE\big[U_{a_2}\ |\ \cEppc, N_{a_1}=k]\ }\\
  &=\frac{\beta}{2}(n-\eE[N_{a_1}|\cEppc])
  -\beta\eE\nolimits_{k}\bracket{ \eE\big[U_{a_2}| \cEppc, N_{a_1}=k] }\\
  &\hspace{3.75mm}+\frac{\beta}{2}(m_a+m_b-n)\Bracket{
  1-\eE\Bracket{\frac{(m_a+m_b+1-n)}{m_a+m_b-N_{a_1}+1}}
  }\\
  &= \frac{m_a\beta^2}{2}\Bracket{
  1-\frac{(m_a+m_b-n)^2}{(m_a\beta+m_b)^2}\cdot\paren{1+\ohalf}
  }.\numberthis\label{eq_prf_ub}
\end{align*}
Combining equations~\eqref{eq_prf_ua1}, \eqref{eq_prf_ua2}, and \eqref{eq_prf_ub} we get
\begin{align*}
  U_{\mathcal{D},1}(0, \beta)& = \eE\big[U_{a_1}+U_{a_2}+U_b\ |\ \cEppc]\cdot\paren{1-\oexp}+\onexp\\
  & = \frac{m_a}{2}(1-\beta^2)+  \frac{m_a\beta^2+m_a}{2}\Bracket{ 1-\frac{(m_a+m_b-n)^2}{(m_a\beta+m_b)^2}\cdot\paren{1+\ohalf} }.
\end{align*}
This will complete the proof of Case B and also of Equation~\myeqref{eq_utility_without_const}{17}.

\subsubsection{Details omitted from equations~\eqref{eq_exp_ub_conditioned} and \eqref{eq_exp_ua1_conditioned}}\label{sec_omitted_eqs}
\begin{align*}
  \eE[U_{b}  \ | \ N_{a_1}] &=  \frac{n}{2n-N_{a_1}} \bigg( n-N_{a_1} - \frac{(n-N_{a_1}+1)(n-N_{a_1})}{2(2n-N_{a_1}+1)}\bigg)\\
  &= n-\frac{n^2}{2n-N_{a_1}}-\frac{n(n-N_{a_1}+1)(n-N_{a_1})}{2(2n-N_{a_1})(2n-N_{a_1}+1)} \\
  &= n-\frac{n^2}{2n-N_{a_1}}-\frac{n(n-N_{a_1}+1)}{2(2n-N_{a_1}+1)}+\frac{n^2(n-N_{a_1}+1)}{2(2n-N_{a_1})(2n-N_{a_1}+1)} \\
  &= \frac{n}{2}-\frac{n^2}{2n-N_{a_1}}-\frac{n^2}{2(2n-N_{a_1}+1)}+\frac{n^2}{2(2n-N_{a_1})} - \frac{n^3}{2(2n-N_{a_1})(2n-N_{a_1}+1)} \\
  &= \frac{n}{2}-\frac{n^2}{2(2n-N_{a_1})}-\frac{n^2}{2(2n-N_{a_1}+1)} - \frac{n^3}{2(2n-N_{a_1})(2n-N_{a_1}+1)} \\
  &= \frac{n}{2}+\frac{n^2}{2(2n-N_{a_1}+1)(2n-N_{a_1})} - \frac{n^3}{2(2n-N_{a_1})(2n-N_{a_1}+1)} \\
  &= \frac{n}{2}\bigg[1-\frac{n^2}{(2n-N_{a_1})(2n-N_{a_1}+1)}+O(n^{-1})\bigg]\\
  &= \frac{n}{2}\bigg[1-\frac{n^2}{(2n-N_{a_1})^2}+\frac{n^2}{(2n-N_{a_1})^2(2n-N_{a_1}+1)}+O(n^{-1})\bigg]\\
  &= \frac{n}{2}\bigg[1-\frac{n^2}{(2n-N_{a_1})^2}+O(n^{-1})\bigg].\\
  \frac{1}{\beta}\eE[U_{a_2}  \ | \ N_{a_1}] &= \frac{n-N_{a_1}}{2n-N_{a_1}} \bigg( n-N_{a_1} - \frac{(n-N_{a_1}+1)(n-N_{a_1})}{2(2n-N_{a_1}+1)}\bigg)\\
  &=n-N_{a_1}-\frac{(n-N_{a_1}+1)(n-N_{a_1})}{2(2n-N_{a_1}+1)}-\eE[U_{b}  \ | \ N_{a_1}]\\
  &=n-N_{a_1}-\frac{(n-N_{a_1})}{2}+\frac{n(n-N_{a_1})}{2(2n-N_{a_1}+1)}-\eE[U_{b}  \ | \ N_{a_1}]\\
  &=n-N_{a_1}-\frac{(n-N_{a_1})}{2}+\frac{n}{2}-\frac{n(n+1)}{2(2n-N_{a_1}+1)}-\eE[U_{b}  | N_{a_1}]\\
  &=n-\frac{N_{a_1}}{2}-\frac{n}{2(2n-N_{a_1})}\bigg[n-\frac{n-N_{a_1}}{(2n-N_{a_1}+1)} \bigg]-\eE[U_{b}  | N_{a_1}]\\
  &=\frac{n}{2}\bigg[2-\frac{n}{(2n-N_{a_1})} + O(n^{-1}) \bigg]-\frac{N_{a_1}}{2}-\eE[U_{b}  \ | \ N_{a_1}].
\end{align*}

\subsection{Fact~\ref{fact_impossibility} and its proof}
The following fact shows that without any assumptions on $w$, any constraint $\mathcal{K}$, such that
$$\max_{x\in \mathcal{K}}\mathcal{W}(x,v,\hat{w})=\max_{x}\mathcal{W}(x,v,w)$$
must be a function of $w$.
Here, by $x\in\mathcal{K}$ we imply that $x$ $\text{satisfies $\mathcal{K}$.}$
\begin{fact}\label{fact_impossibility}
  {\bf (No fixed constraint can mitigate implicit bias for arbitrary utilities).}
  There exists a bias parameter $\{\beta_s\}_{s=1}^p\in[0,1)$, position-discount $v$, and two sets of utilities $\{w_i\}_{i=1}^{m}$ and $\{w^\prime_i\}_{i=1}^{m}$, such that, no constraint $\mathcal{K}$ can ensure the following together
  \begin{align*}
    \mathcal{W}(\tilde{x},v,{w})&=\max_{x}\mathcal{W}(x,v,w)\label{equation_1}\numberthis\\
    \mathcal{W}(\tilde{x}^\prime,v,{w}^\prime)&=\max_{x}\mathcal{W}(x,v,w^\prime),\label{equation_2}\numberthis
  \end{align*}
  where $\tilde{x}\coloneqq\argmax_{x\in \mathcal{K}}\mathcal{W}(x,v,\hat{w})$ and $\tilde{x}^\prime\coloneqq\argmax_{x\in \mathcal{K}}\mathcal{W}(x,v,\hat{w}^\prime)$.
\end{fact}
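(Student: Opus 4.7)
\emph{Proof proposal.} The plan is to exhibit a small concrete instance of two utility vectors for which any candidate constraint $\mathcal{K}$ forces a trade-off: satisfying \eqref{equation_1} requires $\mathcal{K}$ to admit a particular ranking, yet that same ranking then dominates (in observed utility) the latent-optimal ranking for the other vector, spoiling \eqref{equation_2}. Concretely, I will take $m=3$, $n=2$, two disjoint groups $G_a = \{1,2\}$ and $G_b = \{3\}$ with $\beta_a=1$ and $\beta_b = \tfrac12$, and $v = (1,1)$ (so effectively top-$2$ subset selection). Set $w = (10, 9, 2)$ and $w' = (3, 1.5, 2)$; the observed utilities are then $\hat w = (10, 9, 1)$ and $\hat w' = (3, 1.5, 1)$.

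Two observations drive the argument. First, under $w$, the unique latent-optimal top-$2$ set is $\{1,2\}$ (latent $19$, versus $12$ for $\{1,3\}$ and $11$ for $\{2,3\}$), so any $\mathcal{K}$ for which $\tilde x$ attains the latent optimum under $w$ must contain a ranking whose top-$2$ set is $\{1,2\}$. Second, under $w'$ the unique latent optimum is $\{1,3\}$ (latent $5$, versus $4.5$ for $\{1,2\}$ and $3.5$ for $\{2,3\}$), yet the observed utility of $\{1,2\}$ under $\hat w'$ equals $3 + 1.5 = 4.5$, strictly greater than $\{1,3\}$'s observed utility $3 + 1 = 4$.

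These two observations are in direct conflict: once $\{1,2\}\in\mathcal{K}$, the constrained argmax $\tilde x'$ under $\hat w'$ has observed utility at least $4.5$ and therefore cannot be $\{1,3\}$; since $\{1,3\}$ is the unique latent optimum under $w'$, any such $\tilde x'$ has latent utility at most $4.5 < 5 = \max_x \mathcal{W}(x,v,w')$, so \eqref{equation_2} fails. The only real difficulty is combinatorial: finding a small example where (i) both latent optima are unique, (ii) the two latent optima are distinct as top-$2$ sets, and (iii) the latent-optimal set for $w$ has strictly larger observed value under $\hat w'$ than the latent-optimal set for $w'$. Once such an example is in hand, the argument is a handful of arithmetic comparisons, and the choice $v=(1,1)$ removes the need to track internal orderings of the selected pair.
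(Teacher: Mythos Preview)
Your argument is correct and follows the same blueprint as the paper: exhibit two utility vectors whose unique latent optima differ, observe that $\mathcal{K}$ must contain (a ranking equivalent to) the first latent optimum, and then show that this ranking strictly beats the second latent optimum in \emph{observed} utility, so $\tilde{x}'$ cannot recover the second latent optimum.

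The only real difference is the concrete instance. The paper takes the smallest possible example: $m=n=2$, $v=(2,1)$, $G_a=\{1\}$, $G_b=\{2\}$, $\beta_b=\tfrac14$, with $w=(2,1)$ and $w'=(1,2)$; there are only two rankings in total, both must lie in $\mathcal{K}$, and then $\tilde{x}'$ is computed directly. Your instance is slightly larger ($m=3$, $n=2$) but your choice $v=(1,1)$ collapses rankings to top-$2$ sets, which is a clean way to avoid tracking internal orderings and makes the arithmetic transparent. Both instances work for the same reason; the paper's is more economical, yours perhaps more explicitly motivated.
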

\begin{proof}
  Let $m=n=2$ and $v=\{2,1\}$, and suppose there are only two groups $G_a=\{1\}$, $G_b=\{2\}$ ($p=2$).
  The implicit bias is $\beta_a=1$ and $\beta_b=\nfrac{1}{4}$, and the two utilities are $w=\{2,1\}$ and $w^\prime=\{1,2\}$.
  Consider the optimal rankings
  \begin{align*}
    \overline{x}&=\argmax_{x}\mathcal{W}(x,v,w)=\begin{pmatrix}
    1 & 0 \\
    0 & 1
    \end{pmatrix},\\
    {\overline{x}^\prime}&=\argmax_{x}\mathcal{W}(x,v,w^\prime)=\begin{pmatrix}
    0 & 1 \\
    1 & 0
    \end{pmatrix}.
  \end{align*}
  $\overline{x}$ and $\overline{x}^\prime$ rank candidates in the decreasing order of their latent utilities, $w$ and $w^\prime$ respectively.
  Therefore, they are optimal $w$ and $w^\prime$.
  Since they are also the only two possible rankings, we can calculate their utilities to verify that they are unique maxima.

  Assume towards a contradiction that there exists a constraint $\mathcal{K}$ which satisfies Equations~\eqref{equation_1} and \eqref{equation_2}.
  Since $\overline{x}$ is a unique maximum, and the ranking $\tilde{x}\in \mathcal{K}$ from the fact satisfies
  $$\mathcal{W}(\tilde{x},v,{w})=\mathcal{W}(\overline{x},v,{w}).$$
  Therefore, $\tilde{x}=\overline{x}$, i.e., $\tilde{x}\in \mathcal{K}$.
  Similarly, we have $\overline{x}^\prime\in \mathcal{K}$.

  \noindent  Consider the ranking $\tilde{x}^\prime$ from the fact,
  \begin{align*}
    \tilde{x}^\prime\coloneqq\argmax_{x\in \mathcal{K}}\mathcal{W}(x,v,\hat{w}^\prime)=\begin{pmatrix}
    1 & 0 \\
    0 & 1
    \end{pmatrix}.
  \end{align*}
  The above ranking ranking candidates in decreasing order of $\hat{w}$.
  Therefore, it is optimal for $\hat{w}$.
  Since $\mathcal{W}(\tilde{x}^\prime, v,w^\prime)=4<5=\mathcal{W}(\overline{x}^\prime, v,w^\prime)$, we have a contradiction.
\end{proof}

\noindent{\bf Acknowledgment.} {Part of this research was supported by an AWS MLRA 2019 Award.}

\bibliographystyle{plain}
\bibliography{bib}

\begin{thebibliography}{10}

\bibitem{acm2017statement}
ACM.
\newblock Statement on algorithmic transparency and accountability.
\newblock
  \url{https://www.acm.org/binaries/content/assets/public-policy/2017_usacm_statement_algorithms.pdf},
  2017.

\bibitem{first_name_dataset}
Social~Security Administration.
\newblock {Beyond the Top 1000 Names}, 2018.
\newblock \url{https://www.ssa.gov/oact/babynames/limits.html}.

\bibitem{alderman1998gender}
Harold Alderman and Elizabeth~M. King.
\newblock Gender differences in parental investment in education.
\newblock {\em Structural Change and Economic Dynamics}, 9(4):453--468, 1998.

\bibitem{abs-1712-09752}
Abolfazl Asudeh, H.~V. Jagadish, Julia Stoyanovich, and Gautam Das.
\newblock Designing fair ranking schemes.
\newblock In {\em {SIGMOD} Conference}, pages 1259--1276. {ACM}, 2019.

\bibitem{baswana2015joint}
Surender Baswana, Partha~Pratim Chakrabarti, Sharat Chandran, Yashodhan
  Kanoria, and Utkarsh Patange.
\newblock Centralized admissions for engineering colleges in india.
\newblock {\em Interfaces}, 49(5):338--354, 2019.

\bibitem{bendick2012developing}
Marc Bendick~Jr. and Ana~P. Nunes.
\newblock Developing the research basis for controlling bias in hiring.
\newblock {\em Journal of Social Issues}, 68(2):238--262, 2012.

\bibitem{bertrand2004emily}
Marianne Bertrand and Sendhil Mullainathan.
\newblock Are emily and greg more employable than lakisha and jamal? a field
  experiment on labor market discrimination.
\newblock {\em American economic review}, 94(4):991--1013, 2004.

\bibitem{upturn_help_wanted}
Miranda Bogen and Aaron Rieke.
\newblock {Help Wanted: An Examination of Hiring Algorithms, Equity, and Bias},
  December 2018.
\newblock \url{https://www.upturn.org/reports/2018/hiring-algorithms/}.

\bibitem{boucheron2013concentration}
St{\'e}phane Boucheron, G{\'a}bor Lugosi, and Pascal Massart.
\newblock {\em Concentration inequalities: A nonasymptotic theory of
  independence}.
\newblock Oxford university press, 2013.

\bibitem{castillo2019fairness}
Carlos Castillo.
\newblock Fairness and transparency in ranking.
\newblock In {\em ACM SIGIR Forum}, volume~52, pages 64--71. ACM, 2019.

\bibitem{cavicchia-implicit-bias-Rooney}
Marilyn Cavicchia.
\newblock {How to fight implicit bias? With conscious thought, diversity expert
  tells NABE}, June 2017.
\newblock
  \url{https://www.americanbar.org/groups/bar_services/publications/bar_leader/2015-16/september-october/how-fight-implicit-bias-conscious-thought-diversity-expert-tells-nabe/}.

\bibitem{FatML}
L.~Elisa Celis, Amit Deshpande, Tarun Kathuria, and Nisheeth~K. Vishnoi.
\newblock How to be fair and diverse?
\newblock In {\em Fairness, Accountability, and Transparency in Machine
  Learning}, 2016.

\bibitem{celis2019classification}
L.~Elisa Celis, Lingxiao Huang, Vijay Keswani, and Nisheeth~K. Vishnoi.
\newblock Classification with fairness constraints: A meta-algorithm with
  provable guarantees.
\newblock In {\em Proceedings of the Conference on Fairness, Accountability,
  and Transparency}, FAT* '19, pages 319--328, New York, NY, USA, 2019. ACM.

\bibitem{celis2018multiwinner}
L.~Elisa Celis, Lingxiao Huang, and Nisheeth~K. Vishnoi.
\newblock Multiwinner voting with fairness constraints.
\newblock In {\em Proceedings of the Twenty-Seventh International Joint
  Conference on Artificial Intelligence, {IJCAI-18}}, pages 144--151.
  International Joint Conferences on Artificial Intelligence Organization, 7
  2018.

\bibitem{CKSV18}
L.~Elisa Celis, Sayash Kapoor, Farnood Salehi, and Nisheeth~K. Vishnoi.
\newblock {Controlling Polarization in Personalization: An Algorithmic
  Framework}.
\newblock In {\em Proceedings of the Conference on Fairness, Accountability,
  and Transparency}, FAT* '19, pages 160--169, New York, NY, USA, 2019. ACM.

\bibitem{CKSDKV18}
L.~Elisa Celis, Vijay Keswani, Damian Straszak, Amit Deshpande, Tarun Kathuria,
  and Nisheeth~K. Vishnoi.
\newblock {Fair and Diverse DPP-Based Data Summarization}.
\newblock In {\em {ICML}}, volume~80 of {\em Proceedings of Machine Learning
  Research}, pages 715--724. {PMLR}, 2018.

\bibitem{celis2019advertising}
L.~Elisa Celis, Anay Mehrotra, and Nisheeth~K. Vishnoi.
\newblock {Toward Controlling Discrimination in Online Ad Auctions}.
\newblock In {\em {ICML}}, volume~97 of {\em Proceedings of Machine Learning
  Research}, pages 4456--4465. {PMLR}, 2019.

\bibitem{celis2018ranking}
L.~Elisa Celis, Damian Straszak, and Nisheeth~K. Vishnoi.
\newblock {Ranking with Fairness Constraints}.
\newblock In {\em {ICALP}}, volume 107 of {\em LIPIcs}, pages 28:1--28:15.
  Schloss Dagstuhl - Leibniz-Zentrum fuer Informatik, 2018.

\bibitem{collins2007tackling}
Brian~W. Collins.
\newblock Tackling unconscious bias in hiring practices: The plight of the
  rooney rule.
\newblock {\em NYUL Rev.}, 82:870, 2007.

\bibitem{correll2007influence}
Joshua Correll, Bernadette Park, Charles~M. Judd, and Bernd Wittenbrink.
\newblock The influence of stereotypes on decisions to shoot.
\newblock {\em European Journal of Social Psychology}, 37(6):1102--1117, 2007.

\bibitem{amazonRecruitingTool}
Jeffrey Dastin.
\newblock {Amazon scraps secret AI recruiting tool that showed bias against
  women}, October 2019.
\newblock \url{https://reut.rs/2N1dzRJ}.

\bibitem{implicit_bias_la_times}
Jennifer~L. Eberhardt and Sandy Banks.
\newblock {Implicit bias puts lives in jeopardy. Can mandatory training reduce
  the risk?}, July 2019.
\newblock
  \url{https://www.latimes.com/opinion/op-ed/la-oe-eberhardt-banks-implicit-bias-training-20190712-story.html}.

\bibitem{Epstein2015}
Robert Epstein and Ronald~E. Robertson.
\newblock The search engine manipulation effect ({SEME}) and its possible
  impact on the outcomes of elections.
\newblock {\em Proceedings of the National Academy of Sciences},
  112(33):E4512--E4521, 2015.

\bibitem{training_facebook}
Facebook.
\newblock {Managing Unconscious Bias}, 2019.
\newblock \url{https://managingbias.fb.com}.

\bibitem{linkedin_ranking_paper}
Sahin~Cem Geyik, Stuart Ambler, and Krishnaram Kenthapadi.
\newblock Fairness-aware ranking in search {\&} recommendation systems with
  application to linkedin talent search.
\newblock In {\em {KDD}}, pages 2221--2231. {ACM}, 2019.

\bibitem{greenwald1995implicit}
Anthony~G. Greenwald and Mahzarin~R Banaji.
\newblock Implicit social cognition: attitudes, self-esteem, and stereotypes.
\newblock {\em Psychological review}, 102(1):4, 1995.

\bibitem{greenwald2006implicit}
Anthony~G. Greenwald and Linda~Hamilton Krieger.
\newblock Implicit bias: Scientific foundations.
\newblock {\em California Law Review}, 94(4):945--967, 2006.

\bibitem{obama_rr}
The~White House.
\newblock {Fact Sheet: President Obama Announces New Commitments from
  Investors, Companies, Universities, and Cities to Advance Inclusive
  Entrepreneurship at First-Ever White House Demo Day}, August 2015.
\newblock
  \url{https://obamawhitehouse.archives.gov/the-press-office/2015/08/04/fact-sheet-president-obama-announces-new-commitments-investors-companies}.

\bibitem{HJV19}
Lingxiao Huang, Shaofeng~H.{-}C. Jiang, and Nisheeth~K. Vishnoi.
\newblock Coresets for clustering with fairness constraints.
\newblock In {\em NeurIPS}, pages 7587--7598, 2019.

\bibitem{hush2005concentration}
Don Hush and Clint Scovel.
\newblock Concentration of the hypergeometric distribution.
\newblock {\em Statistics \& probability letters}, 75(2):127--132, 2005.

\bibitem{DCG}
Kalervo J{\"{a}}rvelin and Jaana Kek{\"{a}}l{\"{a}}inen.
\newblock Cumulated gain-based evaluation of {IR} techniques.
\newblock {\em {ACM} Trans. Inf. Syst.}, 20(4):422--446, 2002.

\bibitem{JEE2011report}
{JEE Team}.
\newblock {Joint Entrance Examination (2011) Report}, 2011.
\newblock \url{http://bit.do/e5iEZ}.

\bibitem{kanoulas2009empirical}
Evangelos Kanoulas and Javed~A Aslam.
\newblock Empirical justification of the gain and discount function for ndcg.
\newblock In {\em Proceedings of the 18th ACM conference on Information and
  knowledge management}, pages 611--620. ACM, 2009.

\bibitem{KleinbergR18}
Jon~M. Kleinberg and Manish Raghavan.
\newblock Selection problems in the presence of implicit bias.
\newblock In {\em {ITCS}}, volume~94 of {\em LIPIcs}, pages 33:1--33:17.
  Schloss Dagstuhl - Leibniz-Zentrum fuer Informatik, 2018.

\bibitem{kuhlman2019fare}
Caitlin Kuhlman, MaryAnn~Van Valkenburg, and Elke~A. Rundensteiner.
\newblock {FARE:} diagnostics for fair ranking using pairwise error metrics.
\newblock In {\em {WWW}}, pages 2936--2942. {ACM}, 2019.

\bibitem{rti_against_jee}
Mr.~Rajeev Kumar.
\newblock Rti complaint, 2009.
\newblock Decision No. CIC/SG/C/2009/001088/5392, Complaint No.
  CIC/SG/C/2009/001088.

\bibitem{lyness2006fit}
Karen~S Lyness and Madeline~E Heilman.
\newblock When fit is fundamental: performance evaluations and promotions of
  upper-level female and male managers.
\newblock {\em Journal of Applied Psychology}, 91(4):777, 2006.

\bibitem{star_bucks_incident}
Kay Manning.
\newblock {As Starbucks gears up for training, here's why 'implicit bias' can
  be good, bad or very bad}, May 2018.
\newblock
  \url{https://www.chicagotribune.com/lifestyles/sc-fam-implicit-bias-0529-story.html}.

\bibitem{mcgregor2017race}
R~McGregor-Smith.
\newblock Race in the workplace: The mcgregor-smith review, 2017.
\newblock
  \url{https://assets.publishing.service.gov.uk/government/uploads/system/uploads/attachment_data/file/594336/race-in-workplace-mcgregor-smith-review.pdf}.

\bibitem{censusinfo}
Government of~India Ministry~of Home~Affairs.
\newblock {CensusInfo India 2011: Final Population Totals}, 2011.
\newblock
  \url{http://censusindia.gov.in/2011census/censusinfodashboard/index.html}.

\bibitem{moss2012science}
Corinne~A Moss-Racusin, John~F Dovidio, Victoria~L Brescoll, Mark~J Graham, and
  Jo~Handelsman.
\newblock Science faculty’s subtle gender biases favor male students.
\newblock {\em Proceedings of the National Academy of Sciences},
  109(41):16474--16479, 2012.

\bibitem{munoz2016big}
Cecilia Munoz, Megan Smith, and D.~J. Patil.
\newblock Big data: A report on algorithmic systems, opportunity, and civil
  rights.
\newblock {\em Executive Office of the President. The White House}, 2016.

\bibitem{narasimhan2019pairwise}
Harikrishna Narasimhan, Andrew Cotter, Maya~R. Gupta, and Serena Wang.
\newblock Pairwise fairness for ranking and regression.
\newblock {\em CoRR}, abs/1906.05330, 2019.

\bibitem{noon2018pointless}
Mike Noon.
\newblock Pointless diversity training: unconscious bias, new racism and
  agency.
\newblock {\em Work, Employment and Society}, 32(1):198--209, 2018.

\bibitem{okonofua2015two}
Jason~A Okonofua and Jennifer~L Eberhardt.
\newblock Two strikes: Race and the disciplining of young students.
\newblock {\em Psychological science}, 26(5):617--624, 2015.

\bibitem{passariello-implicit-Rooney-wsj}
Christina Passariello.
\newblock {Tech Firms Borrow Football Play to Increase Hiring of Women},
  September 2016.
\newblock
  \url{https://www.wsj.com/articles/tech-firms-borrow-football-play-to-increase-hiring-of-women-1474963562}.

\bibitem{Payne11693}
B.~Keith Payne, Heidi~A. Vuletich, and Jazmin~L. Brown-Iannuzzi.
\newblock Historical roots of implicit bias in slavery.
\newblock {\em Proceedings of the National Academy of Sciences},
  116(24):11693--11698, 2019.

\bibitem{sadler2012world}
Melody~S Sadler, Joshua Correll, Bernadette Park, and Charles~M Judd.
\newblock The world is not black and white: Racial bias in the decision to
  shoot in a multiethnic context.
\newblock {\em Journal of Social Issues}, 68(2):286--313, 2012.

\bibitem{sapiezynski2019quantifying}
Piotr Sapiezynski, Wesley Zeng, Ronald~E. Robertson, Alan Mislove, and Christo
  Wilson.
\newblock Quantifying the impact of user attention on fair group representation
  in ranked lists.
\newblock {\em CoRR}, abs/1901.10437, 2019.

\bibitem{facebook_rr}
Deepa Seetharaman.
\newblock {Facebook Is Testing the `Rooney Rule' Approach to Hiring}.
\newblock {\em The Wall Street Journal}, June 2015.
\newblock
  \url{https://blogs.wsj.com/digits/2015/06/17/facebook-testing-rooney-rule-approach-to-hiring/}.

\bibitem{SinghJ18}
Ashudeep Singh and Thorsten Joachims.
\newblock {Fairness of Exposure in Rankings}.
\newblock In {\em {KDD}}, pages 2219--2228. {ACM}, 2018.

\bibitem{uhlmann2005constructed}
Eric~Luis Uhlmann and Geoffrey~L Cohen.
\newblock Constructed criteria: Redefining merit to justify discrimination.
\newblock {\em Psychological Science}, 16(6):474--480, 2005.

\bibitem{van2010implicit}
Linda Van~den Bergh, Eddie Denessen, Lisette Hornstra, Marinus Voeten, and
  Rob~W Holland.
\newblock The implicit prejudiced attitudes of teachers: Relations to teacher
  expectations and the ethnic achievement gap.
\newblock {\em American Educational Research Journal}, 47(2):497--527, 2010.

\bibitem{term_of_stay_correlated_with_race}
Joseph Walker.
\newblock {Meet the New Boss: Big Data}, September 2012.
\newblock
  \url{https://www.wsj.com/articles/SB10000872396390443890304578006252019616768}.

\bibitem{walton2009latent}
Gregory~M. Walton and Steven~J. Spencer.
\newblock Latent ability: Grades and test scores systematically underestimate
  the intellectual ability of negatively stereotyped students.
\newblock {\em Psychological Science}, 20(9):1132--1139, 2009.

\bibitem{wang2013theoretical}
Yining Wang, Liwei Wang, Yuanzhi Li, Di~He, Wei Chen, and Tie-Yan Liu.
\newblock A theoretical analysis of ndcg ranking measures.
\newblock In {\em Proceedings of the 26th annual conference on learning theory
  (COLT 2013)}, volume~8, page~6, 2013.

\bibitem{wenneras2001nepotism}
Christine Wenneras and Agnes Wold.
\newblock Nepotism and sexism in peer-review.
\newblock {\em Women, sience and technology: A reader in feminist science
  studies}, pages 46--52, 2001.

\bibitem{whitley2016psychology}
Bernard~E. Whitley~Jr. and Mary~E. Kite.
\newblock {\em Psychology of prejudice and discrimination}.
\newblock Routledge, 2016.

\bibitem{williams2014double}
Joan~C Williams.
\newblock {Double jeopardy? An empirical study with implications for the
  debates over implicit bias and intersectionality}.
\newblock {\em Harvard Journal of Law \& Gender}, 37:185, 2014.

\bibitem{girl_drop_out}
Rachel Williams.
\newblock {Why girls in India are still missing out on the education they
  need}, March 2013.
\newblock
  \url{https://www.theguardian.com/education/2013/mar/11/indian-children-education-opportunities}.

\bibitem{implicit_bias_forbes}
Kathleen Woodhouse.
\newblock {Implicit Bias -- Is It Really?}
\newblock {\em Forbes}, December 2017.
\newblock \url{http://bit.do/implicitbiasforbes}.

\bibitem{YangS17}
Ke~Yang and Julia Stoyanovich.
\newblock Measuring fairness in ranked outputs.
\newblock In {\em Proceedings of the 29th International Conference on
  Scientific and Statistical Database Management, Chicago, IL, USA, June 27-29,
  2017}, pages 22:1--22:6. {ACM}, 2017.

\bibitem{ZehlikeB0HMB17}
Meike Zehlike, Francesco Bonchi, Carlos Castillo, Sara Hajian, Mohamed Megahed,
  and Ricardo~A. Baeza{-}Yates.
\newblock {FA*IR: A Fair Top-k Ranking Algorithm}.
\newblock In {\em {CIKM}}, pages 1569--1578. {ACM}, 2017.

\bibitem{zestcott2016examining}
Colin~A Zestcott, Irene~V Blair, and Jeff Stone.
\newblock Examining the presence, consequences, and reduction of implicit bias
  in health care: A narrative review.
\newblock {\em Group Processes \& Intergroup Relations}, 19(4):528--542, 2016.

\end{thebibliography}

\appendix

\end{document}